\documentclass[11pt,onecolumn,draftclsnofoot]{IEEEtran}

\topmargin=-0.4truein \textheight=9truein \oddsidemargin=-.28truein\textwidth=7truein

\usepackage{graphicx}
\usepackage{citesort}
\usepackage[cmex10]{amsmath}
\usepackage{amssymb}
\interdisplaylinepenalty=2500

\begin{document}
\title{On the 3-Receiver Broadcast Channel with Degraded Message Sets and Confidential Messages}

\author{Li-Chia~Choo, and~Kai-Kit~Wong,~\IEEEmembership{Senior~Member,~IEEE}
\thanks{The authors are with the Department of Electronic and Electrical Engineering, University College London, Torrington Place, London WC1E 7JE, United Kingdom (email: \{l.choo, kwong\}@ee.ucl.ac.uk).}
\thanks{The material in this paper will be presented in part at the International Conference on Wireless Communications and Signal Processing 2009, Nov. 13-15, Nanjing, China, 2009.}}

\maketitle

\begin{abstract}
In this paper, bounds to the rate-equivocation region for the general 3-receiver broadcast channel (BC) with degraded message sets, are presented for confidential messages to be kept secret from one of the receivers. This model is more general than the 2-receiver BCs with confidential messages with an external wiretapper, and the recently studied 3-receiver degraded BCs with confidential messages, since in the model studied in this paper, the conditions on the receivers are general and the wiretapper receives the common message. Wyner's code partitioning combined with double-binning is used to show the achievable rate tuples. Error probability analysis and equivocation calculation are also provided. The secure coding scheme is sufficient to provide security for the 3-receiver BC with 2 or 3 degraded message sets, for the scenarios: (i) 3 degraded message sets, where the first confidential message is sent to receivers 1 and 2 and the second confidential message is sent to receiver 1, (ii) 2 degraded message sets, where one confidential message is sent to receiver 1, and (iii) 2 degraded message sets, where one confidential message is sent to receivers 1 and 2. The proof for the outer bound is shown for the cases where receiver 1 is more capable than the wiretap receiver 3, for the first two scenarios. Under the condition that both receivers 1 and 2 are less noisy than the wiretap receiver 3, the inner and outer bounds coincide, giving the rate-equivocation region for (iii). In addition, a new outer bound for the general 3-receiver BC with 3 degraded messages is obtained.
\end{abstract}


\section{Introduction}
\IEEEPARstart{W}{ireless} communications channels today are vulnerable to eavesdropping or wiretapping due to the open nature of the channel, making the characterization of transmission rates for secure and reliable communication for the physical layer an important issue. In the wireless broadcast medium, the model of the broadcast channel (BC) with confidential messages, which was studied by Csisz\'ar and K$\ddot{\mathrm{o}}$rner \cite{csiszar_78}, is used to study simultaneously secure and reliable communication. The model in \cite{csiszar_78} is a generalization of the characterization of the wiretap channel by Wyner \cite{wyner_75}. In \cite{csiszar_78}, a common message is sent to 2 receivers, while a confidential message is sent to one of the receivers and kept secret from the other. The secrecy level is determined by the equivocation rate, which is the entropy rate of the confidential message conditioned on the channel output at the eavesdropper or wiretapper. The secrecy capacity region is defined as the set of transmission rates where the legitimate receiver decodes its confidential message while keeping the message secret from the wiretapper.

In more recent studies on the BC with confidential messages, Liu {\em et al.}~\cite{liu_08} studied the scenario where there are 2 receivers and private messages are sent to each one and kept secret from the unintended receiver, while Xu {\em et al.}~\cite{xu_08_submit} looked at the same model in \cite{liu_08} but with a common message to both receivers. Then, Bagherikaram {\em et al.}~\cite{bagherikaram_08_submit} addressed the scenario where there are 2 receivers and one wiretapper, with confidential messages sent to the receivers. There have been recent studies where more than 2 receivers were considered. The authors in \cite{choo_08_submit} and Ekrem and Ulukus in \cite{ekrem_08_submit} independently studied the $K$-receiver BC with an external wiretapper. In \cite{choo_08_submit}, the $K$-receiver BC with confidential messages sent to each receiver was studied, while in \cite{ekrem_08_submit}, the same scenario was studied with the addition that each receiver also received a common message. Both used the degraded BC. In another work, an achievable inner bound for the $K$-receiver BC with a common message sent to all receivers and a confidential message sent to each of the receivers to be kept secret from an external wiretapper was derived by Kobayashi {\em et al.}~in \cite{kobayashi_09} for general conditions on the receivers' and wiretapper's channels. Finally, Chia and El Gamal in \cite{chia_09} derived an achievable inner bound for the 3-receiver BC with a common message sent to all receivers and a private message sent to 2 of the receivers to be kept secret from the third.

Recently in \cite{nair_07}--\cite{nair_09}, Nair and El Gamal introduced the channel model of the 3-receiver BC with degraded message sets. In the general form of this model, a common message $W_0$ is sent to all of the receivers, denoted by the set $\mathbb{R}_{\mathrm{all}}$, and the private messages, $W_i,W_{i-1},\dots,W_1$, are sent to subsets of receivers $\mathbb{R}_i \subset \mathbb{R}_{i-1} \subset \dots \subset \mathbb{R}_1 \subset \mathbb{R}_{\mathrm{all}}$.  This model best describes a multimedia broadcasting system, in which the common message $W_0$ may represent the lowest quality transmission, and $W_1$ the next higher quality transmission, and so on. In \cite{nair_07}--\cite{nair_09}, three types of 3-receiver BCs with degraded message sets are studied:
\begin{enumerate}
\item 3-receiver BC with 3 degraded message sets where $W_0$ is sent to all three receivers, $W_1$ is sent to receivers 1 and 2, and a second private message $W_2$ is sent to receiver 1;
\item 3-receiver BC with 2 degraded message sets (Type 1) where the common message $W_0$ is sent to all three receivers and a private message $W_1$ is sent to the first receiver;
\item 3-receiver BC with 2 degraded message sets (Type 2) where the common message $W_0$ is sent to all three receivers and a private message $W_1$ sent to receivers 1 and 2.
\end{enumerate}

While preparing this paper for submission, the authors became aware that Nair and El Gamal in \cite{nair_09} used a different coding scheme for their achievability proof compared to their earlier work \cite{nair_07}, with detailed proofs in \cite{nair_07_ext}. The added ingredient is rate splitting. However, a coding scheme with and without rate splitting is shown to give the same rate region in \cite{nair_09}. Based on this, in this paper, we shall not use rate splitting but base our achievability proof on the one in \cite{nair_07,nair_07_ext}.

The objective of this paper is to study this model of the 3-receiver BC with degraded message sets of \cite{nair_07},  \cite{nair_07_ext} with {\em secrecy} constraints. In particular, we characterize the transmission rates for the three types of 3-receiver BCs with degraded message sets from the model mentioned above where receiver 3 is a wiretapper. We note that the insights which this model of the 3-receiver BC with degraded message sets might bring are due to it being a more general model than the 2- or 3-receiver degraded BC with secrecy constraints. We also note that Chia and El Gamal in \cite{chia_09} have also studied the 3-receiver BC with 2 degraded message sets (Type 2) with receiver 3 being a wiretapper, but using a different coding scheme.

For the 3-receiver BC with 3 degraded message sets and 2 degraded message sets (Type 1) without secrecy constraints, the inner capacity bound in \cite{nair_07}, \cite{nair_07_ext} is achievable by superposition coding, Marton's achievability technique \cite{marton_79} and indirect decoding, where the receivers decoding the common message only do so via satellite codewords instead of cloud centers. For the general 3-receiver BC with degraded message sets, an outer bound to the capacity region was given in \cite{nair_07,nair_07_ext} only for the general 3-receiver BC with 2 degraded message sets (Types 1 and 2). For the 3-receiver BC with 2 degraded message sets (Type 2), the inner and outer bounds coincide under the condition that first and second receivers are less noisy than the third receiver.

In our earlier work \cite{choo_09_submit}, we had studied the 3-receiver BC with 2 degraded message sets (Type 1), with the third receiver regarded
as a wiretapper from which the private message is to be kept secret.  In this paper, we consider the more general model of the 3-receiver BC with 3
degraded message sets where the third receiver is a wiretapper from which the private messages $W_1$, $W_2$ are to be kept secret. As the wiretapper
in this case also decodes the common message, the 3-receiver BC with 3 degraded message sets with the third receiver a wiretapper describes a more
general scenario than three types of scenarios: the 2-receiver BCs with an external wiretapper of \cite{bagherikaram_08_submit}, the 2-receiver BC
with 3 degraded message sets and an external wiretapper, and the 3-receiver degraded BCs with an external wiretapper by the virtue of the general
conditions on the receivers.

In our secure coding scheme, we shall use a combination of the code partitioning of Wyner \cite{wyner_75} and double-binning of Liu {\em et
al.}~\cite{liu_08} to show the achievability of an inner bound to the rate-equivocation region for the 3-receiver BC with 3 degraded message sets.
Error probability analysis and equivocation calculation for the private messages are provided. The proposed secure coding scheme is shown to be
sufficient for providing security for both the 3-receiver BC with 3 degraded message sets and the 3-receiver BC with 2 degraded message sets (Type
1). We obtain outer bounds to the rate-equivocation region for the 3-receiver BC with 3 degraded message sets for the case where receiver 1 is more
capable than the wiretap receiver 3, a weaker condition than the condition that receiver 3 is a degraded version of receiver 1 or the condition that
receiver 1 is less noisy than the wiretap receiver 3 \cite{korner_77}. By removing the security constraints, we further obtain an outer bound to the
capacity region for the \textit{general} 3-receiver BC with 3 degraded message sets, which is not found in \cite{nair_07}-- \cite{nair_09}. This is because the
condition that receiver 1 is more capable than receiver 3 applies only to the case where we have secrecy constraints. Then, we show that the outer
bounds to the rate-equivocation region for the 3-receiver BC with 3 degraded message sets reduce to the outer bounds to the rate-equivocation region
for the 3-receiver BC with 2 degraded message sets (Type 1), if receiver 1 is more capable than the wiretap receiver 3. Finally, we show that, under
the condition that the first and second receivers are less noisy than the third receiver, respectively (still a more general condition than
degradedness \cite{korner_77}), the inner and outer bounds to the rate-equivocation region for the 3-receiver BC with 3 degraded message sets reduce
to the region for the 3-receiver BC with 2 degraded message sets (Type 2). This rate-equivocation region we obtain is furthermore a special case of
the variant of the 3-receiver BC with 2 degraded message sets (Type 2) studied in \cite{chia_09} with a different coding scheme.

This paper is organized as follows. In Section \ref{3rx_bc}, we describe the model for the 3-receiver BC with degraded message sets. In Section
\ref{secrecy_bounds}, we state our main results, the bounds to the rate-equivocation region. In Section \ref{inner}, we show achievability of the
inner bound to the rate-equivocation region using our secure coding scheme for the 3-receiver BC with 3 degraded message sets and the 3-receiver BC
with 2 degraded message sets (Type 1) and show error probability analysis and equivocation calculation for the private messages. We show that the
coding scheme provides security for both types of channel. In Section \ref{outer}, we show the proof of the outer bounds for the three types of the
3-receiver BC with degraded message sets. Lastly, we give conclusions in Section \ref{concl}.

\section{The 3-Receiver BC with Degraded Message Sets}\label{3rx_bc}
In this paper, we use the uppercase letter to denote a random variable (e.g., $X$) and the lowercase letter for its realization (e.g., $x$). The alphabet set of $X$ is denoted by ${\cal X}$ so that $x\in{\cal X}$. We denote a sequence of $n$ random variables by ${\bf X}=(X_1,\dots,X_n)$ with its realization ${\bf x}=(x_1,\dots,x_n)\in{\cal X}^n$ if $x_i\in{\cal X}$ for $i=1,2,\dots,n$. Furthermore, we define the subsequences of ${\bf X}$ as ${\bf X}^i\triangleq(X_1,X_2,\dots,X_i)$ and $\tilde{\bf X}^i\triangleq(X_i,\dots,X_n)$.

The discrete memoryless BC with 3 receivers has an input random sequence, ${\bf X}$, and 3 output random sequences at the receivers, denoted respectively by ${\bf Y}_1,{\bf Y}_2$ and ${\bf Y}_3$, all of length $n$, with ${\bf x}\in\mathcal{X}^n$, ${\bf y}_1\in\mathcal{Y}_1^n$, ${\bf y}_2\in\mathcal{Y}_2^n$, and ${\bf y}_3\in\mathcal{Y}_3^n$. The conditional distribution for $n$ uses of the channel is given by
\begin{equation}
p(\mathbf{y}_1,\mathbf{y}_2,\mathbf{y}_3|\mathbf{x})=\prod_{i=1}^n p(y_{1i},y_{2i},y_{3i}|x_i).
\end{equation}

A $(2^{nR_0}, 2^{nR_1}, 2^{nR_2}, n)$-code for the 3-receiver BC with 3 degraded message sets, as depicted in Figure \ref{F:Fig3dm}, consists of the following parameters:
\begin{align*}
\mathcal{W}_0&=\left\{1,\dots,2^{nR_0}\right\},\mbox{(common message set)}\\
\mathcal{W}_1&=\left\{1,\dots,2^{nR_1}\right\},\mbox{(private message set)},\\
\mathcal{W}_2&=\left\{1,\dots,2^{nR_2}\right\},\mbox{(private message set)},\\
f&:\mathcal{W}_0 \times \mathcal{W}_1 \times \mathcal{W}_2 \mapsto \mathcal{X}^n,\mbox{(encoding function)},\\
g_1&:\mathcal{Y}_1^n \mapsto \mathcal{W}_0 \times \mathcal{W}_1 \times \mathcal{W}_2,\mbox{(decoding function 1)},\\
g_2&:\mathcal{Y}_2^n \mapsto \mathcal{W}_0 \times \mathcal{W}_1,\mbox{(decoding function 2)},\\
g_3&:\mathcal{Y}_3^n \mapsto \mathcal{W}_0,\mbox{(decoding function 3)}.
\end{align*}
In particular, we have $g_1 (\mathbf{Y}_1) = (\hat W_0^{(1)}, \hat W_1^{(1)}, \hat W_2 )$, $g_2 (\mathbf{Y}_2) = (\hat W_0^{(2)}, \hat W_1^{(2)})$, and $g_3 (\mathbf{Y}_3)=\hat W_0^{(3)}$, where the notation ``$\hat{(\cdot)}$'' highlights that the decoded messages are estimates, with the error probability
\begin{equation}
P_e^{(n)}=\Pr\left\{ (\hat W_0^{(1)}, \hat W_0^{(2)}, \hat W_0^{(3)}, \hat W_1^{(1)}, \hat W_1^{(2)}, \hat W_2) \neq (W_0,W_0,W_0,W_1,W_1,W_2)   \right\}.
\end{equation}
In this setup, $Y_3$ is the wiretapper, and the secrecy level of the messages sent are as follows:
\begin{enumerate}
\item For $W_1$ sent to users 1 and 2,  the secrecy level is defined by the equivocation rate $\frac{1}{n}H(W_1 | \mathbf{Y}_3)$;
\item For $W_2$ sent to user 1,  the secrecy level is defined by the equivocation rate $\frac{1}{n}H(W_2 | \mathbf{Y}_3)$;
\item The combined message $(W_1,W_2)$ sent to user 1 has secrecy level defined by the equivocation rate $\frac{1}{n}H(W_1, W_2 | \mathbf{Y}_3)$.
\end{enumerate}

\begin{figure}
\centering
\includegraphics[scale=0.65]{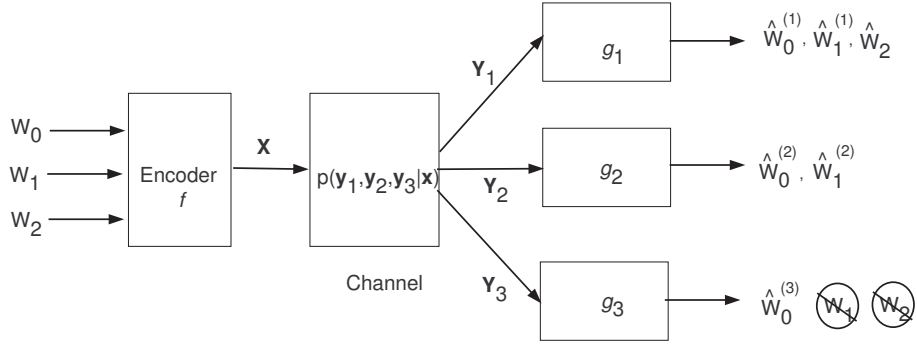}
\caption{The 3-receiver BC with 3 degraded message sets and confidential messages.}\label{F:Fig3dm}
\end{figure}

In addition, a $(2^{nR_0}, 2^{nR_1},n)$-code for the 3-receiver BC with 2 degraded message sets (Type 1), as shown in Figure \ref{F:Fig2dm1},
consists of the following parameters:
\begin{align*}
\mathcal{W}_0&=\left\{1,\dots,2^{nR_0}\right\},\mbox{(common message set)}\\
\mathcal{W}_1&=\left\{1,\dots,2^{nR_1}\right\},\mbox{(private message set)},\\
f&:\mathcal{W}_0 \times \mathcal{W}_1 \mapsto \mathcal{X}^n,\mbox{(encoding function)},\\
g_1&:\mathcal{Y}_1^n \mapsto \mathcal{W}_0 \times \mathcal{W}_1,\mbox{(decoding function 1)},\\
g_2&:\mathcal{Y}_2^n \mapsto \mathcal{W}_0,\mbox{(decoding function 2)},\\
g_3&:\mathcal{Y}_3^n \mapsto \mathcal{W}_0,\mbox{(decoding function 3)}.
\end{align*}
We have $g_1 (\mathbf{Y}_1) = (\hat W_0^{(1)}, \hat W_1^{(1)})$, $g_2 (\mathbf{Y}_2) = \hat W_0^{(2)}$, and $g_3 (\mathbf{Y}_3)=\hat W_0^{(3)}$, with the error probability
\begin{equation}
P_e^{(n)}=\Pr\left\{ (\hat W_0^{(1)}, \hat W_0^{(2)}, \hat W_0^{(3)}, \hat W_1^{(1)}) \neq (W_0,W_0,W_0,W_1,)   \right\}.
\end{equation}
With $Y_3$ the wiretapper, and the secrecy level of the message sent is $\frac{1}{n}H(W_1 | \mathbf{Y}_3)$.

\begin{figure}
\centering
\includegraphics[scale=0.65]{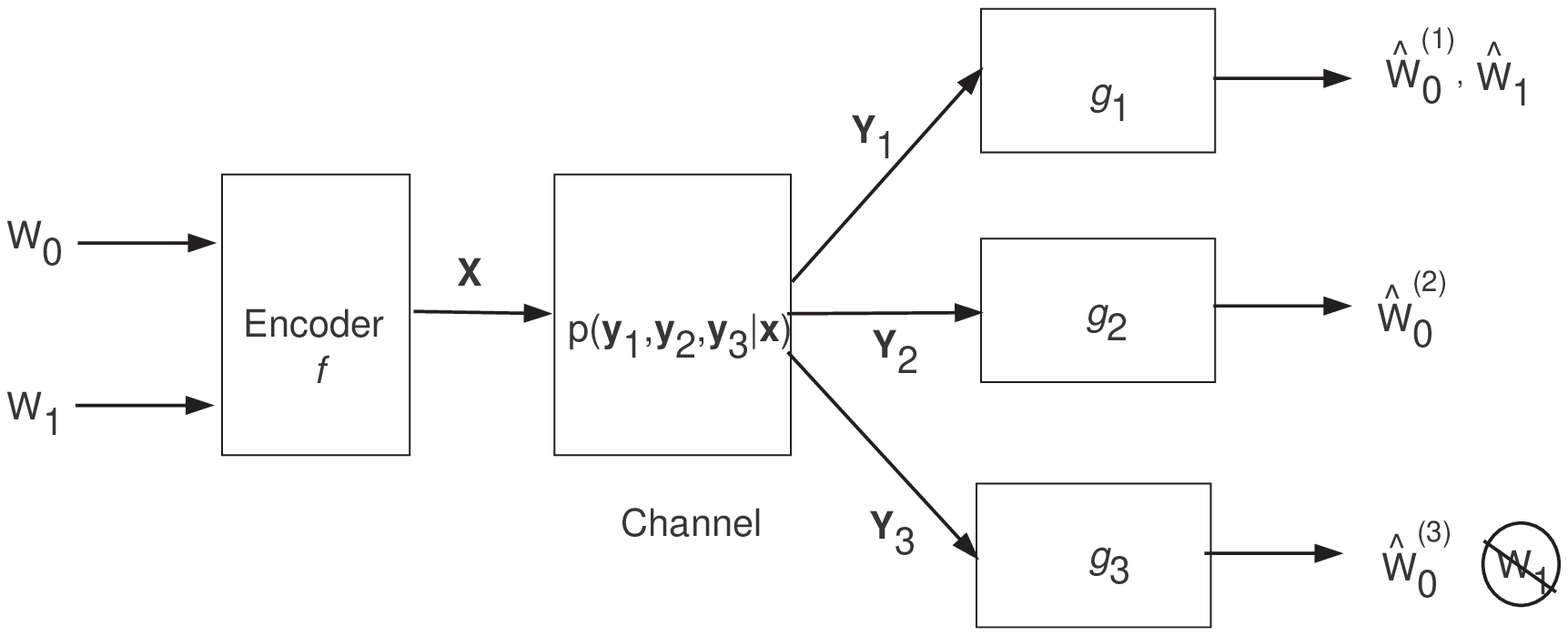}
\caption{The 3-receiver BC with 2 degraded message sets (Type 1) and confidential message.}\label{F:Fig2dm1}
\end{figure}

Finally, a $(2^{nR_0}, 2^{nR_1},n)$-code for the 3-receiver BC with 2 degraded message sets (Type 2), as shown in Figure \ref{F:Fig2dm2}, consists of
the parameters:
\begin{align*}
\mathcal{W}_0&=\left\{1,\dots,2^{nR_0}\right\},\mbox{(common message set)}\\
\mathcal{W}_1&=\left\{1,\dots,2^{nR_1}\right\},\mbox{(private message set)},\\
f&:\mathcal{W}_0 \times \mathcal{W}_1 \mapsto \mathcal{X}^n,\mbox{(encoding function)},\\
g_1&:\mathcal{Y}_1^n \mapsto \mathcal{W}_0 \times \mathcal{W}_1,\mbox{(decoding function 1)},\\
g_2&:\mathcal{Y}_2^n \mapsto \mathcal{W}_0 \times \mathcal{W}_1,\mbox{(decoding function 2)},\\
g_3&:\mathcal{Y}_3^n \mapsto \mathcal{W}_0,\mbox{(decoding function 3)}.
\end{align*}
We have $g_1 (\mathbf{Y}_1) = (\hat W_0^{(1)}, \hat W_1^{(1)})$, $g_2 (\mathbf{Y}_2) = (\hat W_0^{(2)}, \hat W_1^{(2)})$, and $g_3 (\mathbf{Y}_3)=\hat W_0^{(3)}$, and error probability
\begin{equation}
P_e^{(n)}=\Pr\left\{ (\hat W_0^{(1)}, \hat W_0^{(2)}, \hat W_0^{(3)}, \hat W_1^{(1)}, \hat W_1^{(2)}) \neq (W_0,W_0,W_0,W_1,W_1)   \right\}.
\end{equation}
The secrecy level of the message $W_1$ sent to users 1 and 2 is defined by the equivocation rate $\frac{1}{n}H(W_1 | \mathbf{Y}_3)$.

\begin{figure}
\centering
\includegraphics[scale=0.65]{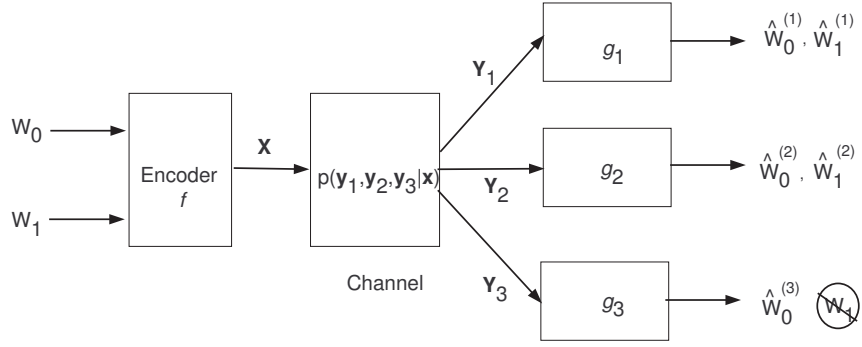}
\caption{The 3-receiver BC with 2 degraded message sets (Type 2) and confidential message.}\label{F:Fig2dm2}
\end{figure}

\section{Bounds to the Rate-Equivocation Region}\label{secrecy_bounds}
\subsection{The 3-Receiver BC with 3 Degraded Message Sets}
For the 3-receiver BC with 3 degraded message sets, the rate tuple $(R_0,R_1,R_{1e},R_2,R_{2e})$ is said to be achievable if for any
$\eta,\epsilon_1,\tilde{\epsilon}_1,\epsilon_2,\epsilon_{1,2}>0$, there exists a sequence of $(2^{nR_0},2^{nR_1},2^{nR_2},n)$-codes for which $P_e^{(n)}\le\eta$ and the
equivocation rates $R_{1e}$ and $R_{2e}$ satisfy
\begin{subequations}\label{sec_cond}
\begin{align}
\frac{1}{n}H(W_1|\mathbf{Y}_3)& \ge R_{1e}-\epsilon_1,~\mbox{or}~\frac{1}{n}H(W_1|\mathbf{Y}_3) \ge R_{1e}-\tilde{\epsilon}_1,\label{sec_cond1} \\
\frac{1}{n}H(W_2|\mathbf{Y}_3)& \ge R_{2e}-\epsilon_2, \label{sec_cond2} \\
\frac{1}{n}H(W_1,W_2|\mathbf{Y}_3) & \ge R_{1e} + R_{2e}-\epsilon_{1,2}. \label{sec_cond12}
\end{align}
\end{subequations}
The two conditions on $W_1$ arise because the equivocation rate depends on which destination $W_1$ is sent to, as can be seen below in \eqref{inner_eq1}.  Recall from the model of the 3-receiver BC with 3 degraded message sets that $W_1$ is sent to both $Y_1$ and $Y_2$.  The first equivocation rate in \eqref{sec_cond1} corresponds to $W_1$ being sent to receiver $Y_2$ and the second equivocation rate in \eqref{sec_cond1} corresponds to $W_1$ being sent to receiver $Y_1$. The rate-equivocation region for the 3-receiver BC with 3 degraded message sets is the closure of the set of all rate-tuples such that $(R_0,R_1,R_{1e},R_2,R_{2e})$ is achievable. Our analysis does not include the case of perfect secrecy (i.e., the rate region with $R_{1e}=R_1$ and $R_{2e}=R_2$).  The following theorems summarize the main results of this paper.

\newtheorem{theorem}{Theorem}
\newtheorem{corollary}{Corollary}

\begin{theorem}\label{thm1}
An inner bound to the rate-equivocation region for the 3-receiver BC with 3 degraded message sets is the closure of all rate-tuples $(R_0,R_1,R_{1e},R_2,R_{2e})$ satisfying
\begin{subequations}
\begin{align}
R_{1e} \le & R_1,\\
R_{2e} \le & R_2,\\
R_0\le & I(U_3;Y_3),\\
R_{1e}\le & \min\left\{I(U_2;Y_2|U_1)-R_1', I(X;Y_1 | U_3)-R_1'-R_2' \right\}, \label{inner_eq1} \\
R_{2e} \le & I(X;Y_1 | U_2)-R_2' , \label{inner_eq2} \\
R_{1e} + R_{2e} \le & I(X;Y_1|U_1) -R_1'-R_2', \label{inner_eq12}\\
R_0 + R_1 \le & \min \left\{ I(U_2;Y_2),I(U_3;Y_3) + I(U_2 ; Y_2|U_1) - I(U_2;U_3|U_1) \right\} \\
2R_0 + R_1 \le & I(U_3;Y_3) + I(U_2 ; Y_2) - I(U_2;U_3|U_1) ,\\
R_0 + R_2 \le & I(U_3;Y_3) + I(X;Y_1|U_2,U_3) ,\\
\nonumber R_0 + R_1 + R_2 \le & \min \left\{ I(U_3;Y_3) + I(X;Y_1|U_3), I(X;Y_1),  \right.\\
& \left. I(U_3;Y_3) + I(U_2;Y_2|U_1) -I(U_2;U_3|U_1)+I(X;Y_1|U_2,U_3) \right\}, \\
2R_0 + R_1 + R_2 \le & I(U_3;Y_3)+I(U_2;Y_2)-I(U_2;U_3|U_1)+I(X;Y_1|U_2,U_3), \\
R_0 + 2R_1 + R_2 \le & I(U_3;Y_3)+I(U_2;Y_2|U_1)-I(U_2;U_3|U_1)+I(X;Y_1|U_3), \\
2R_0 + 2R_1 + R_2 \le & I(U_3;Y_3)+I(U_2;Y_2)-I(U_2;U_3|U_1)+I(X;Y_1|U_3),
\end{align}
\end{subequations}
in which $R_1'\triangleq I(U_2;Y_3|U_1)$ and $R_2' \triangleq I(X;Y_3|U_2)$ are defined over the the probability density function (p.d.f.)
\begin{equation}\label{pdf}
p(u_1,u_2,u_3,x)=p(u_1)p(u_2|u_1)p(x,u_3|u_2)=p(u_1)p(u_3|u_1)p(x,u_2|u_3)=p(u_1)p(u_2,u_3|u_1)p(x|u_2,u_3),
\end{equation}
which is induced by the coding scheme.  In addition, we require that the condition
\begin{align}\label{thm1_cond}
    I(X;Y_3|U_2) \le I(X;Y_1|U_2,U_3)
\end{align}
is met. From the p.d.f.~\eqref{pdf}, the auxiliary random variables $U_1$, $U_2$ and $U_3$ satisfy the Markov chain conditions
\begin{subequations}\label{mkv}
\begin{align}
& U_1\to U_2\to (U_3,X) \to (Y_1,Y_2,Y_3), \label{mkv1} \\
& U_1\to U_3\to (U_2,X) \to (Y_1,Y_2,Y_3),  \label{mkv2} \\
& U_1 \to (U_2,U_3) \to X \to (Y_1,Y_2,Y_3). \label{mkv3}
\end{align}
\end{subequations}

\end{theorem}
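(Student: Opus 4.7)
The plan is to prove achievability by a random coding argument that marries the Nair--El Gamal construction for the 3-receiver BC with degraded message sets (superposition, indirect decoding at receivers~2 and~3, and Marton-style mutual covering between two satellite layers) with Wyner's code-partitioning applied at \emph{two} layers in order to confuse the wiretap receiver~3. The two confusion-rate parameters $R_1'=I(U_2;Y_3|U_1)$ and $R_2'=I(X;Y_3|U_2)$ are set exactly at the wiretap decoding threshold of the corresponding layer, so that receiver~3 is saturated by the confusion codewords while receivers~1 and~2 still decode their intended messages reliably.

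For the code construction, for every common-message index $w_0$ I draw $\mathbf{U}_1$ i.i.d.\ from $p(u_1)$; superimposed on $\mathbf{U}_1$ I draw $2^{n(R_1+R_1')}$ sequences $\mathbf{U}_2(w_0,w_1,j_1)$ from $p(u_2|u_1)$, arranged into $2^{nR_1}$ bins of size $2^{nR_1'}$, and independently $2^{nR_0}$ sequences $\mathbf{U}_3(w_0,k)$ from $p(u_3|u_1)$ whose index carries $w_0$ for indirect decoding at receiver~3. For each $(w_0,w_1,j_1,k)$ I invoke the mutual-covering lemma to locate a jointly typical $(\mathbf{U}_2,\mathbf{U}_3)$ pair (absorbing the Marton overhead $I(U_2;U_3|U_1)$), and on that pair I draw $2^{n(R_2+R_2')}$ inner codewords $\mathbf{X}(\cdot,w_2,j_2)$ from $p(x|u_2,u_3)$, arranged into $2^{nR_2}$ bins of size $2^{nR_2'}$. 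The stochastic encoder selects $(j_1,j_2)$ uniformly at random. Receiver~1 performs joint typicality decoding of the full tuple $(\mathbf{U}_1,\mathbf{U}_2,\mathbf{U}_3,\mathbf{X})$; receiver~2 decodes $(\mathbf{U}_1,\mathbf{U}_2)$ and recovers $w_0,w_1$ indirectly through $\mathbf{U}_2$; receiver~3 decodes $\mathbf{U}_3$ and recovers $w_0$ indirectly. The packing-lemma conditions at each receiver, together with the Marton covering condition and elimination of $R_1'$, $R_2'$ and $I(U_2;U_3|U_1)$, reproduce every linear rate inequality in the theorem. The hypothesis \eqref{thm1_cond} is what lets the inner confusion bin of size $2^{nR_2'}$ be resolved from $\mathbf{Y}_1$ given $(\mathbf{U}_2,\mathbf{U}_3)$, which is needed for \eqref{inner_eq2}.

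The main obstacle I foresee is the equivocation analysis, and in particular the joint bound \eqref{inner_eq12}. The individual bounds \eqref{inner_eq1}--\eqref{inner_eq2} follow the standard Wyner template; for example, for $W_2$ one writes
\begin{align*}
H(W_2|\mathbf{Y}_3)
\ge H(W_2|\mathbf{Y}_3,\mathbf{U}_2)
&= H(W_2,\mathbf{X}|\mathbf{Y}_3,\mathbf{U}_2)-H(\mathbf{X}|W_2,\mathbf{Y}_3,\mathbf{U}_2),
\end{align*}
lower-bounds the first term using the memoryless structure and \eqref{thm1_cond}, and upper-bounds the second term by a Fano-type argument applied to the confusion index $j_2$, whose size $2^{nR_2'}$ was chosen precisely so that the wiretapper can list-decode $j_2$ with vanishing error given $(\mathbf{U}_2,\mathbf{Y}_3)$. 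The two variants of \eqref{sec_cond1} reflect whether $W_1$'s protection is read off $\mathbf{Y}_2$ (yielding $I(U_2;Y_2|U_1)-R_1'$) or off $\mathbf{Y}_1$ through the inner layer (yielding $I(X;Y_1|U_3)-R_1'-R_2'$). For \eqref{inner_eq12} I would combine both layers of binning into a single entropy-splitting identity such as
\begin{align*}
H(W_1,W_2|\mathbf{Y}_3)\ge H(\mathbf{X}|\mathbf{Y}_3,\mathbf{U}_1)-H(\mathbf{X}|W_1,W_2,\mathbf{Y}_3),
\end{align*}
and then control the two terms simultaneously; the delicate point is the Marton-induced dependence between $\mathbf{U}_2$ and $\mathbf{U}_3$, which couples the two confusion layers and must be handled using the conditional typicality of the pair together with the Markov chain \eqref{mkv3}.
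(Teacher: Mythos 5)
Your overall architecture --- the Nair--El Gamal superposition/Marton/indirect-decoding scheme with Wyner-type confusion partitions of sizes $2^{nR_1'}$ and $2^{nR_2'}$ at the $U_2$ and $X$ layers, and an equivocation analysis that bounds the residual uncertainty of the confusion indices by Fano's inequality at the wiretapper --- is the same as the paper's. However, your codebook construction has a structural error that prevents it from achieving the stated region. In the paper's scheme $W_2$ is \emph{split} into two parts $(p_1,p_3)$: the part $p_3$ is carried by the bin index of $\mathbf{U}_3$ (alongside its role in indirect decoding of $w_0$ at receiver 3) and the part $p_1$ by the subcode index of $\mathbf{X}$, so that $R_{2e}=P_3+P_{1e}$. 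You instead load the entire message $W_2$ (plus the confusion index $j_2$) onto $\mathbf{X}$ and use $\mathbf{U}_3$ only to carry $w_0$. With that placement, the receiver-1 error event in which only the innermost index is wrong forces $R_2+R_2'\le I(X;Y_1|U_2,U_3)$, i.e.\ $R_{2e}\le I(X;Y_1|U_2,U_3)-R_2'$, which is strictly weaker than the claimed bound \eqref{inner_eq2}, $R_{2e}\le I(X;Y_1|U_2)-R_2'$, whenever $I(U_3;Y_1|U_2)>0$. The splitting is precisely what recovers the gap: the paper's error event with $p_3\neq 1$ yields $P_{1e}+P_1'+P_3<I(X;Y_1|U_2)$, so the $p_3$ portion of $W_2$ exploits the additional $I(U_3;Y_1|U_2)$ available at receiver 1. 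Several of the sum-rate constraints (e.g.\ the $R_0+R_2$ and $R_0+R_1+R_2$ bounds) likewise rely on $P_3\le Q_3$ riding inside the indirect-decoding constraint $R_0+Q_3<I(U_3;Y_3)$.

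A second, smaller gap is in the Marton step. The paper generates $2^{nQ_2}$ and $2^{nQ_3}$ satellite sequences and bins them with \emph{residual} indices $w_1^{\dagger},p_3^{\dagger}$ satisfying $R_1^{\dagger}+P_3^{\dagger}>I(U_2;U_3|U_1)$; the mutual covering lemma is applied over these residual indices inside each product bin. In your construction each pair $(w_1,j_1)$ and each $k$ labels a single codeword, so a product bin contains exactly one candidate pair and there is no ensemble over which to invoke covering --- an independently drawn pair is jointly typical only with probability roughly $2^{-nI(U_2;U_3|U_1)}$. You cannot repurpose the confusion index $j_1$ for covering, because the equivocation argument needs the wiretapper to be able to resolve $j_1$ from $\mathbf{Y}_3$ given $(w_1,\mathbf{U}_1)$ at rate $R_1'=I(U_2;Y_3|U_1)$, which presumes all $2^{nR_1'}$ codewords in the sub-bin are genuine transmittable candidates rather than covering slack. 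Your equivocation sketch itself (the Wyner/Fano template and the two readings of \eqref{inner_eq1}) matches the paper's Section on equivocation calculation and would go through once the codebook is fixed as above.
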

\begin{proof}
The proof of achievability is based on that for the 3-receiver BC with 3 degraded message sets in \cite{nair_07}, \cite{nair_07_ext} which uses Marton's achievability scheme \cite{marton_79} combined with superposition coding and is given in Section \ref{ach} with the equivocation calculation (bounds for $R_{e1}, R_{e2}$) to be presented in Section \ref{eqv_calc}.
\end{proof}

Since our achievability scheme is based upon that of \cite{nair_07}, \cite{nair_07_ext}, it is natural that the inner bound is the same as that of \cite{nair_07}, \cite{nair_07_ext}, but with the addition of the equivocation rates.  In fact it will be the same as \cite{nair_09}, with the addition of the equivocation rates.  As a check, setting $Y_1 = Y_3$ in \eqref{inner_eq1}--\eqref{inner_eq12}, $R_{1e} \le 0$, $R_{2e} \le 0$ and $R_{1e} + R_{2e} \le 0$, so no secrecy rate is possible.  Thus the equivocation rates \eqref{inner_eq1}--\eqref{inner_eq12} are achievable.


\begin{theorem}\label{thm2}
An outer bound to the rate-equivocation region for the 3-receiver BC with 3 degraded message sets, where $Y_1$ is more capable than $Y_3$, is the closure of all rate-tuples $(R_0,R_1,R_{1e},R_2,R_{2e})$ that satisfies
\begin{subequations}
\begin{align}
R_{1e} \le & R_1, \\
R_{2e} \le & R_2, \\
R_0\le & \min \left\{ I(U_1;Y_1),I(U_3;Y_3)-I(U_3;Y_1|U_1) \right\}, \\
R_{1e} \le & \min \left\{ I(U_2;Y_2|U_1) - I(U_2;Y_3|U_1), I(X;Y_1|U_3)-I(X;Y_3|U_1) \right\}, \\
R_{2e} \le & I(X;Y_1|U_2)-I(X;Y_3|U_2), \\
R_{1e} + R_{2e} \le & I(X;Y_1|U_1)-I(X;Y_3|U_1), \\
\nonumber R_0 + R_1 \leq & \min \left\{ I(U_2;Y_1), I(U_2;Y_2), I(U_1;Y_1)+I(U_2;Y_2|U_1),\right.\\
& \left. I(U_3;Y_3)+I(U_2;Y_1|U_1) , I(U_3;Y_3)+I(U_2;Y_2|U_1)\right\}, \\
R_0 + R_2 \leq & \min \left\{ I(U_1;Y_1)+I(X;Y_1|U_2,U_3), I(U_3;Y_3)+I(X;Y_1|U_2,U_3)  \right\}, \\
\nonumber R_0 + R_1 + R_2 \leq & \min \left\{ I(X;Y_1), I(U_3;Y_3)+I(X;Y_1|U_3), \right. \\
\nonumber & I(U_1;Y_1)+I(U_2;Y_2|U_1)+I(X;Y_1|U_2,U_3), \\
& I(U_3;Y_3)+I(U_2;Y_2|U_1)+I(X;Y_1|U_2,U_3),\left.I(U_2;Y_2)+I(X;Y_1|U_2,U_3) \right\}.
\end{align}
\end{subequations}
\end{theorem}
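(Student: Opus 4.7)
The plan is to prove the outer bound by combining Fano's inequality, Csisz\'ar's sum identity, and the more capable hypothesis, with auxiliaries built from the common message, the first private message, the past wiretap outputs, and the future outputs of receiver $2$. I begin with a sequence of $(2^{nR_0},2^{nR_1},2^{nR_2},n)$-codes with $P_e^{(n)}\to 0$ satisfying \eqref{sec_cond}. Fano's inequality gives $H(W_0|\mathbf{Y}_j)\le n\epsilon_{j,n}$ for $j=1,2,3$, $H(W_1|\mathbf{Y}_k)\le n\epsilon_{k,n}$ for $k=1,2$, and $H(W_2|\mathbf{Y}_1)\le n\epsilon_{1,n}$; the secrecy conditions give $nR_{je}\le H(W_j|\mathbf{Y}_3)+n\epsilon_n$ for $j=1,2$ and $n(R_{1e}+R_{2e})\le H(W_1,W_2|\mathbf{Y}_3)+n\epsilon_n$. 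I introduce a time-sharing variable $Q$ uniform on $\{1,\ldots,n\}$ independent of all messages and outputs, and set $U_{1,i}=(W_0,Y_3^{i-1})$, $U_{2,i}=(W_0,W_1,Y_3^{i-1})=(U_{1,i},W_1)$, $U_{3,i}=(W_0,Y_3^{i-1},\tilde{\mathbf{Y}}_2^{i+1})=(U_{1,i},\tilde{\mathbf{Y}}_2^{i+1})$, with $U_\ell=(U_{\ell,Q},Q)$, $X=X_Q$, $Y_j=Y_{j,Q}$. The Markov chains \eqref{mkv1}--\eqref{mkv3} then hold because $U_1$ is a deterministic function of both $U_2$ and $U_3$ and the channel is memoryless.

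For the non-secrecy rate bounds, each derivation starts from $n(R_0+\sum_k R_k)\le I(W_0,\{W_k\};\mathbf{Y}_j)+n\epsilon_n$ at the receiver $\mathbf{Y}_j$ that decodes all the messages listed on the left, expands by the chain rule, inserts $\tilde{\mathbf{Y}}_2^{i+1}$ and $Y_3^{i-1}$ inside the conditionings, and applies the Csisz\'ar sum identity $\sum_i I(\tilde{\mathbf{Y}}_2^{i+1};Y_{3,i}|W_0,Y_3^{i-1})=\sum_i I(Y_3^{i-1};Y_{2,i}|W_0,\tilde{\mathbf{Y}}_2^{i+1})$ (possibly conditioned on further messages) to swap past $Y_3$ outputs with future $Y_2$ outputs. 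Each of the compound bounds on $R_0$, $R_0+R_1$, $R_0+R_2$, and $R_0+R_1+R_2$ corresponds to a particular way of regrouping the resulting single-letter terms into $I(U_1;\cdot)$, $I(U_2;\cdot|U_1)$, $I(U_3;\cdot)$, and $I(X;\cdot|U_2,U_3)$, and these manipulations mirror the Nair--El Gamal converse for the 3-receiver BC with three degraded message sets.

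For the equivocation bounds the template is
\begin{equation*}
nR_{1e}\le H(W_1|\mathbf{Y}_3,W_0)+n\epsilon_n \le I(W_1;\mathbf{Y}_2|W_0)-I(W_1;\mathbf{Y}_3|W_0)+n\epsilon_n,
\end{equation*}
where the first inequality absorbs $H(W_0|\mathbf{Y}_3)$ via Fano at $\mathbf{Y}_3$ and the second uses Fano at $\mathbf{Y}_2$ to absorb $H(W_1|W_0,\mathbf{Y}_2)$; single-letterization with $U_{1,i}$ and $U_{2,i}$ via Csisz\'ar's identity yields $R_{1e}\le I(U_2;Y_2|U_1)-I(U_2;Y_3|U_1)$. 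The bounds $R_{2e}\le I(X;Y_1|U_2)-I(X;Y_3|U_2)$ and $R_{1e}+R_{2e}\le I(X;Y_1|U_1)-I(X;Y_3|U_1)$ follow the same template, starting respectively from $H(W_2|\mathbf{Y}_3,W_0,W_1)$ and $H(W_1,W_2|\mathbf{Y}_3,W_0)$, and using that $X^n$ is a function of $(W_0,W_1,W_2)$ to rewrite $I(W_1,W_2;\mathbf{Y}_j|W_0)$ as $I(X^n;\mathbf{Y}_j|W_0)$. The more capable hypothesis $I(X;Y_1|U)\ge I(X;Y_3|U)$ is invoked here to control the single-letterization of $I(X^n;\mathbf{Y}_1|W_0)-I(X^n;\mathbf{Y}_3|W_0)$. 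The asymmetric bound $R_{1e}\le I(X;Y_1|U_3)-I(X;Y_3|U_1)$ is obtained by retaining $\tilde{\mathbf{Y}}_2^{i+1}$ in the conditioning on the $Y_1$-side (yielding $U_3$) and cancelling it on the $Y_3$-side through the sum identity (leaving $U_1$).

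The main obstacle I anticipate is the bookkeeping required so that a single choice of auxiliaries yields the full list of compound bounds together with the mixed $(U_3,U_1)$-form equivocation bound while still respecting \eqref{mkv}. The more capable hypothesis must be invoked surgically, only in steps where $X$ sits on both sides of a difference of mutual informations, since $R_{1e}\le I(U_2;Y_2|U_1)-I(U_2;Y_3|U_1)$ goes through without it whereas $R_{1e}\le I(X;Y_1|U_3)-I(X;Y_3|U_1)$ and $R_{1e}+R_{2e}\le I(X;Y_1|U_1)-I(X;Y_3|U_1)$ both require it.
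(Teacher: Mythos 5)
Your toolkit is the right one and matches the paper's: Fano plus the secrecy conditions, the Csisz\'ar sum identity to trade past outputs of one receiver against future outputs of another, and the more capable hypothesis invoked only where a difference $I(X;Y_1|\cdot)-I(X;Y_3|\cdot)$ must be signed (your remark about where that hypothesis is and is not needed is correct). The gap is in the concrete auxiliary identification, and here it is not a matter of taste: the single-letter expressions in Theorem~\ref{thm2} force a particular pairing of past and future output sequences, and yours is the wrong pairing. You set $U_{1,i}=(W_0,\mathbf{Y}_3^{i-1})$ and $U_{3,i}=(U_{1,i},\tilde{\mathbf{Y}}_2^{i+1})$ and invoke the identity $\sum_i I(\tilde{\mathbf{Y}}_2^{i+1};Y_{3,i}|W_0,\mathbf{Y}_3^{i-1})=\sum_i I(\mathbf{Y}_3^{i-1};Y_{2,i}|W_0,\tilde{\mathbf{Y}}_2^{i+1})$, which only ever converts terms between the $Y_2$- and $Y_3$-processes. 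But the bounds $R_0\le I(U_3;Y_3)-I(U_3;Y_1|U_1)$ and $R_0+R_1\le I(U_3;Y_3)+I(U_2;Y_1|U_1)$ contain cross terms in which $U_3$ (or $U_2$) is measured against $Y_1$ given $U_1$; these arise only if the sum identity produces $\sum_i I(\tilde{\mathbf{Y}}_3^{i+1};Y_{1,i}|W_0,\mathbf{Y}_1^{i-1})$, i.e., only if $U_{1,i}$ contains the past of $Y_1$ and $U_{3,i}$ the future of $Y_3$. Likewise $R_0\le I(U_1;Y_1)$ comes for free from Fano at $Y_1$ when $U_{1,i}=(W_0,\mathbf{Y}_1^{i-1})$, whereas with $U_{1,i}=(W_0,\mathbf{Y}_3^{i-1})$ you would need an unavailable comparison to replace $H(Y_{1,i}|W_0,\mathbf{Y}_1^{i-1})$ by $H(Y_{1,i}|W_0,\mathbf{Y}_3^{i-1})$; note that ``more capable'' does not give $I(U;Y_1)\ge I(U;Y_3)$ for auxiliaries $U$, only for the input $X$.

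The paper's proof in Section~\ref{outer_gen} takes $U_{1,i}=(W_0,\mathbf{Y}_1^{i-1})$, $U_{3,i}=(U_{1,i},\tilde{\mathbf{Y}}_3^{i+1})$, $U_{2,i}=W_1$, together with an extra auxiliary $\tilde U_{2,i}=(U_{1,i},\tilde{\mathbf{Y}}_2^{i+1})$ that is collapsed to $U_{1,i}$ at the end, and --- this is the ingredient your sketch has no substitute for --- proves Relations~\ref{relation1}--\ref{relation3}, which show that $\tilde{\mathbf{Y}}_2^{i+1}$, $\tilde{\mathbf{Y}}_3^{i+1}$ and $W_1$ are interchangeable inside $I(\,\cdot\,;Y_{k,i}|W_0,\mathbf{Y}_1^{i-1})$. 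Those relations are what allow the $Y_2$-decoding bounds (which naturally produce future-$Y_2$ terms), the $Y_3$-decoding bounds (future-$Y_3$ terms) and the $W_1$ terms to be expressed through one consistent quadruple $(U_1,U_2,U_3,X)$ satisfying \eqref{mkv}. Without them the heterogeneous list of inequalities cannot be made to share a single set of auxiliaries, so the proposal as written would not close: you would need to switch to the paper's identification and supply the interchange relations, since several of the stated bounds (those involving $I(U_3;Y_1|U_1)$, $I(U_2;Y_1|U_1)$, $I(U_2;Y_1)$ and $I(U_1;Y_1)$) are unreachable from yours.
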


\begin{proof}
The proof for this outer bound is given in Section \ref{outer_gen}.
\end{proof}

We see that the equivocation rates for $(R_{1e},R_{2e})$ in the inner and outer bounds in Theorems \ref{thm1} and \ref{thm2} match. Note that the equivocation rate for $R_{1e}$ received at $Y_1$ is reduced by $\Delta_1 = I(U_2;Y_3|U_1) + I(X;Y_3|U_2) = I(X;Y_3|U_1)$. In $\Delta_1$, the first term is needed to protect the codewords generated by Marton's achievability scheme, and the second term protects codewords generated by superposition coding. While it is only required to protect the codewords generated by Marton's achievability scheme for the general 2-receiver BC in \cite{bagherikaram_08_submit}, our secure scheme (to be presented in Section IV) does this, as well as protects the additional codewords generated by superposition coding. Hence, our secure scheme results in a loss for $R_{1e}$ (compared to $R_1$) that may be larger than expected.

It is also noted that by removing the secrecy constraints from the outer bound to the rate-equivocation region for the 3-receiver BC with 3 degraded message sets, we can obtain a new outer bound to the capacity region of the \textit{general} 3-receiver BC with 3 degraded message sets without secrecy.  We see this by setting $R_{1e}=0$ and $R_{2e}=0$ in Theorem \ref{thm2} above. Since the restriction that receiver $Y_1$ is more capable than receiver $Y_3$ is only applicable when deriving $R_{1e}$ and $R_{2e}$ as will be shown in Section \ref{outer_gen}, removing the secrecy constraints will give us the outer bound to the capacity region of the general 3-receiver BC with 3 degraded message sets.

\begin{theorem}\label{thm3}
An outer bound to the capacity region for the general 3-receiver BC with 3 degraded message sets is the closure of all rate-tuples $(R_0,R_1,R_2)$ satisfying
\begin{subequations}
\begin{align}
R_0\le & \min \left\{ I(U_1;Y_1),I(U_3;Y_3)-I(U_3;Y_1|U_1) \right\}, \\
\nonumber R_0 + R_1 \leq & \min \left\{ I(U_2;Y_1), I(U_2;Y_2), I(U_1;Y_1)+I(U_2;Y_2|U_1),\right.\\
& \left. I(U_3;Y_3)+I(U_2;Y_1|U_1) , I(U_3;Y_3)+I(U_2;Y_2|U_1)\right\}, \\
R_0 + R_2 \leq & \min \left\{ I(U_1;Y_1)+I(X;Y_1|U_2,U_3), I(U_3;Y_3)+I(X;Y_1|U_2,U_3)  \right\}, \\
\nonumber R_0 + R_1 + R_2 \leq & \min \left\{ I(X;Y_1), I(U_3;Y_3)+I(X;Y_1|U_3),I(U_1;Y_1)+I(U_2;Y_2|U_1)+I(X;Y_1|U_2,U_3), \right.\\
& I(U_3;Y_3)+I(U_2;Y_2|U_1)+I(X;Y_1|U_2,U_3),\left. I(U_2;Y_2)+I(X;Y_1|U_2,U_3)\right\}.
\end{align}
\end{subequations}
\end{theorem}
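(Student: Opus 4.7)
The plan is to obtain Theorem \ref{thm3} essentially as a corollary of Theorem \ref{thm2} by setting $R_{1e}=R_{2e}=0$ and discarding the equivocation constraints. The observation that makes this work, already flagged in the paragraph preceding the theorem, is that the restriction ``$Y_1$ is more capable than $Y_3$'' enters the converse for Theorem \ref{thm2} only in the steps that manipulate differences of the form $I(X;Y_1|V)-I(X;Y_3|V)$, which appear exclusively in the equivocation upper bounds. I would therefore re-examine the converse in Section \ref{outer_gen} bound by bound, and argue that every inequality on $R_0$, $R_0+R_1$, $R_0+R_2$, or $R_0+R_1+R_2$ is derived without ever invoking that hypothesis.

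Concretely, I would introduce the standard single-letter auxiliary random variables used in the proof of Theorem \ref{thm2}, of the general form $U_{1,i}\triangleq(W_0,\tilde{\mathbf{Y}}_3^{i+1})$, $U_{2,i}\triangleq(W_0,W_1,\tilde{\mathbf{Y}}_3^{i+1})$, $U_{3,i}\triangleq(W_0,\mathbf{Y}_3^{i-1})$ (or whichever precise choice is used there), together with a uniform time-sharing index so that the Markov chains \eqref{mkv1}--\eqref{mkv3} remain valid for the general channel. For each of the rate constraints listed in Theorem \ref{thm3}, the converse argument proceeds by (i) applying Fano's inequality to the decoders $g_1,g_2,g_3$ associated with the relevant messages, (ii) expanding $H(W_0,W_1,W_2)$ or its subsets and telescoping through the sequence index $i$, and (iii) using the Csisz\'ar sum identity to exchange terms of the form $\sum_i I(\tilde{\mathbf{Y}}_3^{i+1};Y_{j,i}|\cdot)$ with $\sum_i I(\mathbf{Y}_j^{i-1};Y_{3,i}|\cdot)$. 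None of these three ingredients requires $Y_1$ to be more capable than $Y_3$; the ``more capable'' inequality is only called upon later, when one tries to single-letterize differences such as $\sum_i [I(X_i;Y_{1,i}|V_i)-I(X_i;Y_{3,i}|V_i)]$ arising from $H(W_j|\mathbf{Y}_3)$.

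The main obstacle is essentially bookkeeping rather than a genuinely new argument: I need to certify, for every one of the (many) individual bounds appearing in Theorem \ref{thm3}, that the chain of equalities and inequalities carrying it from Theorem \ref{thm2} is hypothesis-free, and that the auxiliary random variables chosen in the Theorem \ref{thm2} proof continue to satisfy \eqref{mkv1}--\eqref{mkv3} without assuming any ordering of $Y_1$ and $Y_3$. Once this verification is complete, projecting the Theorem \ref{thm2} region onto the $(R_0,R_1,R_2)$ coordinates — which is what $R_{1e}=R_{2e}=0$ amounts to in the definition of achievability — yields exactly the region in Theorem \ref{thm3}, and the result follows.
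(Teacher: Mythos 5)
Your proposal is correct and is essentially identical to the paper's own argument: the paper proves Theorem \ref{thm3} precisely by setting $R_{1e}=R_{2e}=0$ in Theorem \ref{thm2} and observing that the more-capable hypothesis enters the converse of Section \ref{outer_gen} only in the steps bounding $R_{1e}$, $R_{2e}$, and $R_{1e}+R_{2e}$ (where differences $I(X_i;Y_{1,i}|\cdot)-I(X_i;Y_{3,i}|\cdot)$ are dropped), while the bounds on $R_0$, $R_0+R_1$, $R_0+R_2$, and $R_0+R_1+R_2$ rely only on Fano's inequality, the Csisz\'ar sum identity, and the Markov structure of the auxiliaries. The bookkeeping you describe is exactly what the paper leaves implicit.
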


\begin{proof}
As described above.
\end{proof}


\subsection{The 3-Receiver BC with 2 Degraded Message Sets}
The 3-receiver BC with 3 degraded message sets with secrecy constraints can be specialized to 2 classes of a 3-receiver BC with 2 degraded message sets with secrecy constraints:
\begin{enumerate}
\item Type 1: A 3-receiver BC where $(W_0,W_1)$ is sent to receiver $Y_1$ and $W_0$ is sent to receivers $Y_2$ and $Y_3$, where $W_1$ is to be kept secret from receiver $Y_3$;
\item Type 2: A 3-receiver BC where $(W_0,W_1)$ is sent to receivers $Y_1$ and $Y_2$ and $W_0$ is sent to receiver $Y_3$, where $W_1$ is to be kept secret from receiver $Y_3$.
\end{enumerate}
We note that the inner and outer bounds do not match for the first case, but match for the second case under the condition that both receivers $Y_1$ and $Y_2$ are less noisy than receiver $Y_3$.

We have studied the Type 1 channel in \cite{choo_09_submit}. In this paper, we shall briefly review the achievability scheme for secrecy constraints to see the differences from the 3 degraded message sets case, and show that the outer bound for the 3 degraded message sets case can be reduced to the outer bound for this Type 1 channel.

For the Type 2 channel, we shall show that the bounds on the rate-equivocation region can be specialized from the 3 degraded message sets case. We also note that the Type 2 channel is a special case of the inner bound to the rate-equivocation region for a 3-receiver BC with 2 degraded message sets studied in Chia and El Gamal \cite{chia_09} using a different coding scheme. In \cite{chia_09}, the message reception and secrecy conditions are the same as the Type 2 channel. Thus, both our bounds and that of \cite{chia_09} will reduce to the Type 2 channel. Also, our outer bounds will reduce to the Type 2 channel under the conditions that both receivers $Y_1$ and $Y_2$ are less noisy than receiver $Y_3$.

We state the inner and outer bounds to the rate-equivocation region for the Type 1 channel in Corollaries \ref{col1} and \ref{col2}, and the rate-equivocation region for the Type 2 channel in Corollary \ref{col3}.

\begin{corollary}\label{col1}
An inner bound to the rate-equivocation region for the 3-receiver BC with 2 degraded message sets (Type 1) is the closure of all rate-tuples $(R_0,R_1,R_{1e})$ satisfying
\begin{subequations}
\begin{align}
R_{1e} \le & R_1 \\
R_0\le & \min \left\{I(U_2;Y_2),I(U_3;Y_3)\right\}\\
R_{1e} \le & \min \big\{I(X;Y_1|U_1)-\Delta_2,I(X;Y_1|U_2)+I(X;Y_1|U_3)-I(X;Y_3|U_2)-\Delta_2\big\} , \label{eqv_r1} \\
2R_0\le & I(U_2;Y_2)+I(U_3;Y_3)-I(U_2;U_3|U_1)\\
R_0+R_1\le & \min \big\{I(X;Y_1),I(U_2;Y_2)+I(X;Y_1|U_2),I(U_3;Y_3)+I(X;Y_1|U_3)\big\},\\
2R_0+R_1\le & I(U_2;Y_2)+I(U_3;Y_3)-I(U_2;U_3|U_1)+I(X;Y_1|U_2,U_3),\\
2R_0+2R_1\le & I(U_2;Y_2)+I(X;Y_1|U_2)+I(U_3;Y_3)+I(X;Y_1|U_3)-I(U_2;U_3|U_1),
\end{align}
\end{subequations}
under the same Markov chain conditions \eqref{mkv} for the auxiliary random variables, where $\Delta_2 \triangleq I(U_2;Y_3|U_1)+I(X;Y_3|U_2)$, and the conditions
\begin{equation}\label{cor2_cond}
\left\{\begin{aligned}
I(X;Y_3|U_2) &\le I(X;Y_1|U_2,U_3),\\
I(X;Y_3|U_2) &\le I(X;Y_1|U_2),
\end{aligned}\right.
\end{equation}
are satisfied.
\end{corollary}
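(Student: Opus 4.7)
The plan is to specialize the achievability scheme of Theorem \ref{thm1} to the Type 1 setting. In Type 1 the intermediate message $W_1$ of Theorem \ref{thm1} (sent to $Y_1$ and $Y_2$) is absent, while the private message $W_2$ of Theorem \ref{thm1} is relabeled as $W_1$, sent only to $Y_1$, and kept secret from $Y_3$. Accordingly I would set $R_1 = R_{1e} = 0$ in the Theorem \ref{thm1} notation and rename $(W_2, R_2, R_{2e})$ to $(W_1, R_1, R_{1e})$, then read off the surviving constraints. Several of the sum-rate bounds of Theorem \ref{thm1} that mix the two private layers (for instance $2R_0 + R_1 + R_2$ and $R_0 + 2R_1 + R_2$) collapse into the simpler forms $2R_0$, $2R_0 + R_1$, and $2R_0 + 2R_1$ appearing in Corollary \ref{col1}, while the common-rate bound $R_0 \le I(U_3;Y_3)$ is complemented by the indirect-decoding bound $R_0 \le I(U_2;Y_2)$, which here comes from the specialization of $R_0 + R_1 \le I(U_2;Y_2)$ at $R_1 = 0$.

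For the codebook construction I would retain the layered structure of Theorem \ref{thm1}: a cloud center $U_1^n(w_0)$ for the common message, a Marton pair $(U_2^n, U_3^n)$ superposed on $U_1^n$ via double-binning, and $X^n$ superposed on $(U_2^n, U_3^n)$ carrying the private message $W_1$; Wyner's code partitioning is imposed on the $X^n$ layer to form the secrecy bins. Decoding proceeds indirectly at $Y_2$ and $Y_3$ (each recovers $W_0$ by decoding $U_2^n$ or $U_3^n$ respectively), while $Y_1$ jointly decodes $(U_1^n, U_2^n, U_3^n, X^n)$. The error-probability analysis is a direct pruning of that in Theorem \ref{thm1} with the $U_2^n$-layer private-message rate removed, and the condition \eqref{cor2_cond} makes sure that the secrecy bin sizes chosen for the $X^n$ layer are simultaneously compatible with both decoding paths available at $Y_1$.

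The main work is showing that the two-branch equivocation bound in \eqref{eqv_r1} is achievable. One branch, $R_{1e} \le I(X;Y_1|U_1) - \Delta_2$, follows directly by substituting $R_{1e} = 0$ into the joint constraint $R_{1e} + R_{2e} \le I(X;Y_1|U_1) - R_1' - R_2'$ of Theorem \ref{thm1}. The harder branch, $R_{1e} \le I(X;Y_1|U_2) + I(X;Y_1|U_3) - I(X;Y_3|U_2) - \Delta_2$, requires a refined computation of $H(W_1 \mid \mathbf{Y}_3)$ that tracks both layers of the code partition separately: the contribution from the Marton satellite $U_2^n$ seen through $Y_3$ (accounting for $I(U_2;Y_3|U_1)$), and the contribution from the superposition layer $X^n$ given $U_2^n$ (accounting for $I(X;Y_3|U_2)$), while crediting $Y_1$ with both the $I(X;Y_1|U_2)$ and $I(X;Y_1|U_3)$ decoding margins made possible because $Y_1$ decodes both satellites. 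I would carry this out by conditioning on the correct satellite codewords, applying the standard list-decoding / binning argument of Wyner together with Fano's inequality, and invoking the joint typicality lemma under the Markov conditions \eqref{mkv}. The main obstacle is this bookkeeping of the two layers of the code partition when bounding the equivocation; the error analysis, by contrast, is a routine specialization of the Theorem \ref{thm1} argument, since $Y_2$'s decoding requirement has been relaxed to $W_0$ only.
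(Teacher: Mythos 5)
Your overall strategy---specialize the Theorem~\ref{thm1} machinery to the Type~1 channel---is the right family of ideas, but the specific specialization you propose (set $R_1=R_{1e}=0$ in the Theorem~\ref{thm1} notation and relabel $(W_2,R_2,R_{2e})$ as $(W_1,R_1,R_{1e})$) does not produce the region stated in Corollary~\ref{col1}, and this is a genuine gap rather than a bookkeeping issue. In Theorem~\ref{thm1} the message $W_2$ is carried only by the $\mathbf{U}_3$ bin index and the $\mathbf{X}$ subcode index ($R_{2e}=P_{1e}+P_3$), so under your substitution the private message never touches the $\mathbf{U}_2$ layer; the $\mathbf{U}_2$ codewords carry only the dummy randomization $R_1'$. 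The paper's proof (Section~\ref{ach2}) instead re-runs the construction with $W_1$ encoded across \emph{all three} layers, $R_{1e}=P_{1e}+P_{2e}+P_3$ with a genuine secret component $P_{2e}$ riding on the $\mathbf{U}_2$ bins. That extra component is exactly what generates the symmetric bounds $R_0+R_1\le I(U_2;Y_2)+I(X;Y_1|U_2)$ and $2R_0+2R_1\le I(U_2;Y_2)+I(X;Y_1|U_2)+I(U_3;Y_3)+I(X;Y_1|U_3)-I(U_2;U_3|U_1)$, and the second branch of \eqref{eqv_r1}: the $I(X;Y_1|U_3)$ term enters through the error event ``$p_2\neq 1,\ p_3=1$'' (inequality \eqref{fm4}), which only constrains anything if part of $W_1$ lives on $\mathbf{U}_2$. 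Under your scheme the binding constraint from event ${\tt E}_4$ of Theorem~\ref{thm1} already caps the relabeled $R_{1e}=P_{1e}+P_3$ at $I(X;Y_1|U_2)-I(X;Y_3|U_2)$, and the relabeled $R_0+R_1$ bound is $I(U_3;Y_3)+I(X;Y_1|U_2,U_3)$, which is in general strictly smaller than the corollary's $I(U_3;Y_3)+I(X;Y_1|U_3)$. So no refinement of the equivocation calculation alone can close the gap---the deficiency is in the encoding, not in the analysis of $H(W_1|\mathbf{Y}_3)$.

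Your derivation of the first branch $R_{1e}\le I(X;Y_1|U_1)-\Delta_2$ from the joint constraint of Theorem~\ref{thm1} is formally fine, and your identification of where $R_0\le I(U_2;Y_2)$ and the collapsed $2R_0$-type bounds come from is reasonable. To repair the argument you would need to follow the paper's route: keep the codebook generation, partition sizes $P_1'=I(X;Y_3|U_2)-\delta_1$ and $P_2'=I(U_2;Y_3|U_1)-\delta_1$, and double-binning of Section~\ref{ach}, but express $w_1$ as $(p_1,p_2,p_3)$ so that all three codeword layers carry secret bits, redo the six error events at receiver~1 (now with $p_2\neq 1$ as a separate possibility), and then run Fourier--Motzkin with $R_{1e}=P_{1e}+P_{2e}+P_3$; the condition $I(X;Y_3|U_2)\le I(X;Y_1|U_2)$ in \eqref{cor2_cond} then falls out of combining \eqref{fm3} with the partition requirement \eqref{p1_part2}, which your proposal also does not account for.
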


\begin{proof}
See Section \ref{ach2} for the achievability proof, and \cite{choo_09_submit} for the equivocation calculation.
\end{proof}

We see that $\Delta_2$ in Corollary \ref{col2} may be expressed as
\begin{equation}
\Delta_2 \triangleq I(U_2;Y_3|U_1)+I(X;Y_3|U_2) = I(X;Y_3|U_1),
\end{equation}
which is $\ge I(X;Y_3|U_3)$.  Thus, as a check, when $Y_1=Y_3$ in \eqref{eqv_r1}, $R_{1e} \le 0$, so no secrecy rate is possible and therefore the equivocation rate \eqref{eqv_r1} is achievable. Also, when compared to the equivocation rates on $R_{1e}$ for the 3 degraded message sets channel in \eqref{inner_eq1}, a smaller rate is achievable for $W_1$ sent to $Y_1$. Then, by the virtue of sending $W_2$ to $Y_1$, the coding scheme of \cite{nair_07}, \cite{nair_07_ext} is able to give a higher equivocation rate for $W_1$ sent to $Y_1$. It appears that by sending more messages to receiver $Y_1$, then the achievable equivocation rates can be increased.

The lower achievable rate for $W_1$ sent to $Y_1$ for the 2 degraded message sets (Type 1) channel is due to the fact that the achievable coding scheme protects all the codewords generated by superposition coding. We note that the coding scheme of \cite{nair_07,nair_07_ext} generates codewords giving rise to the rates $R_1 \leq I(X;Y_1|U_2)+I(X;Y_1|U_3)$ and $R_1 \leq I(X;Y_1|U_1)$.  From the fact that when $Y_1=Y_3$ in \eqref{eqv_r1}, $R_{1e} \le 0$ for both choices of $R_{1e}$, so implying the equivocation rates \eqref{eqv_r1} are achievable, we see that our proposed secure scheme is able to protect all the codewords generated by superposition coding, but with a smaller achievable equivocation rate for $W_1$ sent to $Y_1$ compared to $R_{1e}$ (with $W_1$ sent to $Y_1$) for the 3 degraded message sets channel.


The outer bound for the Type 1 3-receiver 2 degraded message sets BC is stated as follows.

\begin{corollary}\label{col2}
An outer bound to the rate-equivocation region for the 3-receiver BC with 2 degraded message sets (Type 1), where $Y_1$ is more capable than $Y_3$, is the closure of all rate-tuples
$(R_0,R_1,R_{1e})$ satisfying
\begin{subequations}
\begin{align}
R_{1e} \le & R_1, \\
R_0 \le & \min\left\{I(U_1;Y_1),I(U_2;Y_2)-I(U_2;Y_1|U_1),I(U_3;Y_3)-I(U_3;Y_1|U_1)\right\}\\
R_{1e} \le & I(X;Y_1|U_1)-I(X;Y_3|U_1),\\
R_0+R_1\le & \min\big\{I(X;Y_1),I(U_2;Y_2)+I(X;Y_1|U_2),I(U_3;Y_3)+I(X;Y_1|U_3)\big\}.
\end{align}
\end{subequations}
\end{corollary}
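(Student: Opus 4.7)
The plan is to prove Corollary~\ref{col2} by specializing the outer bound of Theorem~\ref{thm2} to the Type~1 channel structure. The Type~1 channel can be viewed as the 3-receiver BC with 3 degraded message sets in which the intermediate-tier private message (the one sent to both $Y_1$ and $Y_2$ in Theorem~\ref{thm2}) has rate zero and the remaining private message $W_2$ is relabelled as $W_1$. Under this identification, setting the original $R_1=0$ in Theorem~\ref{thm2} and relabelling $(R_2,R_{2e})\to(R_1,R_{1e})$ immediately yields several of the bounds of Corollary~\ref{col2}: the secrecy bound $R_{1e}\le I(X;Y_1|U_1)-I(X;Y_3|U_1)$ drops out of the joint bound on $R_{1e}+R_{2e}$ in Theorem~\ref{thm2} evaluated at $R_{1e}=0$, and the bounds $R_0+R_1\le I(X;Y_1)$ and $R_0+R_1\le I(U_3;Y_3)+I(X;Y_1|U_3)$ come from the $R_0+R_1+R_2$ bound after relabelling. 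The constraint $R_0\le I(U_1;Y_1)$ carries over verbatim, and $R_0\le I(U_3;Y_3)-I(U_3;Y_1|U_1)$ is unchanged.

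The bounds that are specific to the simpler decoding structure of the Type~1 channel---in particular $R_0\le I(U_2;Y_2)-I(U_2;Y_1|U_1)$ and $R_0+R_1\le I(U_2;Y_2)+I(X;Y_1|U_2)$---need to be derived directly, following the same template as the proof of Theorem~\ref{thm2}. First, apply Fano's inequality: $H(W_0|\mathbf{Y}_j)\le n\epsilon_n$ for $j=1,2,3$ and $H(W_1|\mathbf{Y}_1)\le n\epsilon_n$; the secrecy constraint enforces $H(W_1|\mathbf{Y}_3)\ge nR_{1e}-n\epsilon_n$. Second, introduce letter-by-letter auxiliary random variables of the standard form, for instance $U_{1,i}=(W_0,\mathbf{Y}_3^{i-1},\tilde{\mathbf{Y}}_2^{i+1})$, $U_{2,i}=(W_0,\tilde{\mathbf{Y}}_2^{i+1})$, and $U_{3,i}=(W_0,\mathbf{Y}_3^{i-1})$, and verify that the Markov chain conditions \eqref{mkv} hold. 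Third, apply Csisz\'ar's sum identity to expand $I(W_0;\mathbf{Y}_2)$ against $I(W_0;\mathbf{Y}_1)$, exploiting the fact that $Y_2$ now needs to decode only $W_0$ (rather than $(W_0,W_1)$ as in Theorem~\ref{thm2}); it is this weaker decoding requirement that produces the tighter bound $R_0\le I(U_2;Y_2)-I(U_2;Y_1|U_1)$. Finally, introduce a time-sharing random variable $Q$ uniform on $\{1,\dots,n\}$, set $U_j=(Q,U_{j,Q})$, $X=X_Q$, $Y_j=Y_{j,Q}$, and single-letterize.

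The main obstacle will be the secrecy bound $R_{1e}\le I(X;Y_1|U_1)-I(X;Y_3|U_1)$: this is precisely where the assumption that $Y_1$ is more capable than $Y_3$ is needed, so that after invoking Csisz\'ar's sum identity on $I(W_1;\mathbf{Y}_1|W_0)-I(W_1;\mathbf{Y}_3|W_0)$ and introducing $U_1$, the resulting difference is non-negative and reduces to the stated single-letter form. A secondary subtlety is that a \emph{single} choice of $(U_1,U_2,U_3)$ must simultaneously yield every listed bound, so the ordering of $\mathbf{Y}_3^{i-1}$ versus $\tilde{\mathbf{Y}}_2^{i+1}$ inside each auxiliary has to be chosen so that no uncontrolled negative cross-terms remain after Csisz\'ar's identity; getting that bookkeeping right, while staying consistent with the Markov chains of \eqref{mkv} already established for Theorem~\ref{thm2}, is the most delicate part of the argument.
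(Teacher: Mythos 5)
Your overall strategy---treat the Type 1 channel as the 3-message model with the middle private message removed, reuse the Theorem \ref{thm2} machinery, and derive afresh only the two bounds that exploit the weaker decoding requirement at $Y_2$ (namely $R_0\le I(U_2;Y_2)-I(U_2;Y_1|U_1)$ and $R_0+R_1\le I(U_2;Y_2)+I(X;Y_1|U_2)$)---matches the paper's route in Section \ref{outer_spec1}, and you correctly locate where the more-capable condition enters. But there are two concrete gaps. First, your proposed auxiliaries $U_{1,i}=(W_0,\mathbf{Y}_3^{i-1},\tilde{\mathbf{Y}}_2^{i+1})$, $U_{2,i}=(W_0,\tilde{\mathbf{Y}}_2^{i+1})$, $U_{3,i}=(W_0,\mathbf{Y}_3^{i-1})$ cannot satisfy the chains \eqref{mkv}: for example $U_1\to U_3\to(U_2,X)$ would require $\tilde{\mathbf{Y}}_2^{i+1}$ to be conditionally independent of $U_{2,i}$ given $(W_0,\mathbf{Y}_3^{i-1})$, which fails because $U_{2,i}$ contains $\tilde{\mathbf{Y}}_2^{i+1}$ itself. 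The paper instead keeps exactly the Theorem \ref{thm2} identifications $U_{1,i}=(W_0,\mathbf{Y}_1^{i-1})$, $U_{2,i}=W_1$, $U_{3,i}=(U_{1,i},\tilde{\mathbf{Y}}_3^{i+1})$; even these satisfy only the relaxed chains \eqref{mkv_oi}, which is why the paper inserts the intermediate variable $\tilde U_{2,i}=(U_{1,i},\tilde{\mathbf{Y}}_2^{i+1})$ and sets $\tilde U_2=U_1$ at the end (justified in the Appendix). Your proposal does not engage with this device, and without it the single-letterization of the $U_2$-dependent bounds does not close.

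Second, importing the remaining bounds by evaluating the \emph{statement} of Theorem \ref{thm2} at $R_1=0$ only guarantees the existence of \emph{some} auxiliaries satisfying those constraints---and the auxiliaries produced by that proof have $U_2=W_1$ degenerate once the middle message is trivial, so they cannot simultaneously support the nontrivial $U_2$-dependent constraints you derive separately. You flag this single-choice consistency issue yourself but do not resolve it. The resolution, which is what the paper does, is to rerun the Theorem \ref{thm2} converse computations with the one consistent choice above (reusing \eqref{r0_final1} and \eqref{r0_final2} for the $U_1$- and $U_3$-based $R_0$ bounds and redoing the $R_{1e}$ and $R_0+R_1$ chains for the Type 1 message assignment), rather than invoking the theorem as a black box for part of the region.
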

\begin{proof}
See Section \ref{outer_spec1}.
\end{proof}


We state the rate-equivocation region for the Type 2 3-receiver 2 degraded message sets BC below.

\begin{corollary}\label{col3}
The secrecy capacity region for the 3-receiver BC with 2 degraded message sets (Type 2) for the case where $Y_1$ and $Y_2$ are both less noisy than $Y_3$ is the closure of all rate-tuples $(R_0,R_1,R_{1e})$ satisfying
\begin{subequations}
\begin{align}
R_{1e} \le & R_1,\\
R_0 \le & I(U;Y_3),\\
R_{1e} \leq & \min \left\{ I(X;Y_1|U)-I(X;Y_3|U), I(X;Y_2|U)-I(X;Y_3|U) \right\}, \\
R_0 + R_1 \leq & \min \left\{ I(X;Y_1), I(X;Y_2) \right\},
\end{align}
\end{subequations}
over the p.d.f. $p(u,x)=p(u)p(x|u)$.
\end{corollary}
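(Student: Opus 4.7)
The plan is to deduce Corollary \ref{col3} by specializing both Theorem \ref{thm1} and Theorem \ref{thm2} to the Type 2 reception pattern and then using the two less-noisy hypotheses to show that the resulting inner and outer bounds coincide.

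For the inner bound, I would set $R_2=R_{2e}=0$ in Theorem \ref{thm1} (reflecting the absence of $W_2$ in Type 2) and identify the three Marton auxiliaries $(U_1,U_2,U_3)$ with the single auxiliary $U$ of Corollary \ref{col3} (together with $X$) so that the cross-term $I(U_2;U_3|U_1)$ vanishes and the randomization rates reduce to $R_1'=I(X;Y_3|U)$ and $R_2'=0$. Condition \eqref{thm1_cond} then reads $I(X;Y_3|U)\le I(X;Y_1|U)$, which is immediate from $Y_1$ being less noisy than $Y_3$. After applying the symmetric less-noisy hypothesis on $(Y_2,Y_3)$ to handle the inequalities for $Y_2$, the equivocation bounds of Theorem \ref{thm1} become exactly $R_{1e}\le I(X;Y_j|U)-I(X;Y_3|U)$ for $j\in\{1,2\}$. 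The sum-rate bound $R_0+R_1\le I(X;Y_j)$ is recovered by augmenting the scheme with a standard Csisz\'ar--K\"orner rate-split of $W_1$ into a public piece (decodable by $Y_3$, of rate up to $I(X;Y_3|U)$) and a confidential piece; under less noisy this exactly closes the non-negative gap $I(U;Y_j)-I(U;Y_3)$ that would otherwise remain between the naive Theorem \ref{thm1} bound and the corollary.

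For the outer bound, Theorem \ref{thm2} applies because less noisy implies more capable. Setting $R_2=R_{2e}=0$ removes the constraints on $W_2$. I would then use the single-letter identification $U_i=(W_0,Y_{3,i+1}^n)$, Fano's inequality at $Y_1,Y_2$ and $Y_3$, and the Csisz\'ar sum identity to obtain $R_0\le I(U;Y_3)$ and $R_0+R_1\le I(X;Y_j)$ for $j=1,2$. The wiretap-converse step $nR_{1e}\le I(W_1;Y_j^n)-I(W_1;Y_3^n)+n\epsilon_n$ is valid for both $j\in\{1,2\}$ by Fano at the respective legitimate receiver, and invoking the corresponding less-noisy hypothesis for $(Y_j,Y_3)$ single-letterizes this to $R_{1e}\le I(X;Y_j|U)-I(X;Y_3|U)$. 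Every auxiliary-variable gap remaining in the specialization of Theorem \ref{thm2} is non-negative under less noisy and can be dropped.

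Comparing the simplified inner and outer regions term by term then yields Corollary \ref{col3}. The step I expect to be the main obstacle is promoting the sum-rate constraint of the specialized Theorem \ref{thm1} from the tighter form $R_0+R_1\le I(U;Y_3)+I(X;Y_j|U)$ to the sharper $R_0+R_1\le I(X;Y_j)$: a direct choice of auxiliaries only delivers the former, and it is precisely the less-noisy hypothesis (strictly stronger than the more-capable assumption used in Theorem \ref{thm2}) that makes the rate-splitting argument close the gap and align the inner bound with the outer bound.
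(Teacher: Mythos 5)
Your overall route coincides with the paper's: achievability is obtained by specializing Theorem \ref{thm1} with $R_2=R_{2e}=0$ and collapsing the auxiliaries (the paper sets $U_2=X$ and $U_3=U_1=U$, which is exactly what makes $R_1'=I(X;Y_3|U)$, $R_2'=0$ and $I(U_2;U_3|U_1)=0$ as you describe, and turns condition \eqref{thm1_cond} into the inequality implied by $Y_1$ being less noisy than $Y_3$); the converse is a direct single-letterization in Section \ref{outer_spec2} with $U_i\triangleq(W_0,\mathbf{Y}_3^{i-1})$, Fano's inequality at each legitimate receiver, the Csisz\'ar sum identity, and the less-noisy hypotheses used both to justify the more-capable step and to discard the residual nonnegative differences. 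Your outer-bound sketch matches this essentially line for line (up to your choosing the future rather than the past of $\mathbf{Y}_3$ in the auxiliary, which is immaterial).

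The one place you depart from the paper is the rate-splitting argument you introduce to promote the sum-rate constraint from $R_0+R_1\le I(U;Y_3)+I(X;Y_j|U)$ to $R_0+R_1\le I(X;Y_j)$, and that step does not work as described. The public piece of $W_1$ of rate $I(X;Y_3|U)$ riding on the satellite codeword is already present in Theorem \ref{thm1} (there $R_1=R_{1e}+R_1'$, so the randomization index is already counted as non-confidential message rate); the decodability constraints at $Y_j$ still force $R_0+R_1\le I(U;Y_3)+I(X;Y_j|U)$ because $Y_3$ must resolve the entire cloud layer, and the gap $I(U;Y_j)-I(U;Y_3)$ lives in the cloud, not in the satellite. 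Closing it would require embedding part of $W_1$ in the cloud in a way $Y_3$ need not decode, which is not a standard Csisz\'ar--K\"orner split. The paper itself does not close this gap either: it simply asserts the specialization of Theorem \ref{thm1}, and indeed its own converse (see the derivation leading to \eqref{r01_final1_col3}) first establishes the tighter bound $\sum_i\left[I(U_i;Y_{3,i})+I(X_i;Y_{1,i}|U_i)\right]$ before loosening it to $\sum_i I(X_i;Y_{1,i})$. So you have correctly isolated the genuinely delicate point, but your proposed repair would need to be replaced by a different coding argument (or the region restated with the tighter sum-rate constraint) for the achievability to be complete.
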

\begin{proof}
In this channel class, the inner and outer bounds match. The proof of achievability follows by using code partitioning for security, as in
\cite{csiszar_78,wyner_75}, where it can be seen that the codeword $\mathbf{X}$ is protected by the partition of $I(X;Y_3|U)$. The rate-equivocation
region is achievable by setting $R_2=0$, $R_{2e}=0$, $U_2=X$, $U_3 = U_1 = U$ in Theorems \ref{thm1} and \ref{thm2} and using the conditions that
$Y_1$ and $Y_2$ are less noisy than $Y_3$. Therefore, we have the conditions $I(U;Y_3) \leq I(U;Y_1)$ and $I(U;Y_3) \leq I(U;Y_2)$. See Section
\ref{outer_spec2} for the converse proof.
\end{proof}

It is worth emphasizing here that this channel class is more general than the special case of the 3-receiver BC with 2 degraded message sets (Type 1)
under the condition that $Y_1$ is less noisy than $Y_3$ in \cite{choo_09_submit}, since $Y_2$ receives $W_1$ here but this is not the case in
\cite{choo_09_submit}.

\section{Inner Bound for the 3-Receiver BC with 3 Degraded Message Sets}\label{inner}
\subsection{Proof of Achievability for 3-Receiver BC with 3 Degraded Message Sets}\label{ach}
Our achievability proof for the 3-receiver BC with 3 degraded message sets is an alternative version of the one in [\ref{nair_eg}, Appendix III]. We use Wyner's code partitioning \cite{wyner_75} with the double-binning scheme of \cite{liu_08} to provide secrecy, together with the coding scheme for the 3-receiver BC with 3 degraded message sets in \cite{nair_07,nair_07_ext}.

The scheme of \cite{nair_07}, \cite{nair_07_ext} represents $W_0$ by $U_1$, then breaks $W_2$ into 2 parts. The first part is combined with $U_1$ by superposition coding to generate $U_3$.  The message $W_1$ is combined with $U_1$ by superposition coding to generate $U_2$. $U_2$ and $U_3$ are partitioned into bins and the product bin containing the joint typical pair (achievable by Marton's coding scheme) is combined with the second part of $W_2$ by superposition coding to obtain $X$.

At the receivers, $Y_1$ decodes $U_1$, $U_2$, $U_3$, and $X$ to recover the messages $W_0$, $W_1$ and $W_2$, while $Y_2$ decodes $U_1$ and $U_2$ to recover messages $W_0$ and $W_1$ and $Y_3$ decodes $U_1$ indirectly using $U_3$ to recover $W_0$. In our secure scheme, the codewords $\mathbf{U}_2$ and $\mathbf{X}$ are, respectively, protected from receiver $Y_3$ (i.e., the wiretapper) by a one-sided double-binning and code partitioning. This is depicted in Figure \ref{F:Fig1}.

\begin{figure}[!t]
\centering
\includegraphics[scale=0.65]{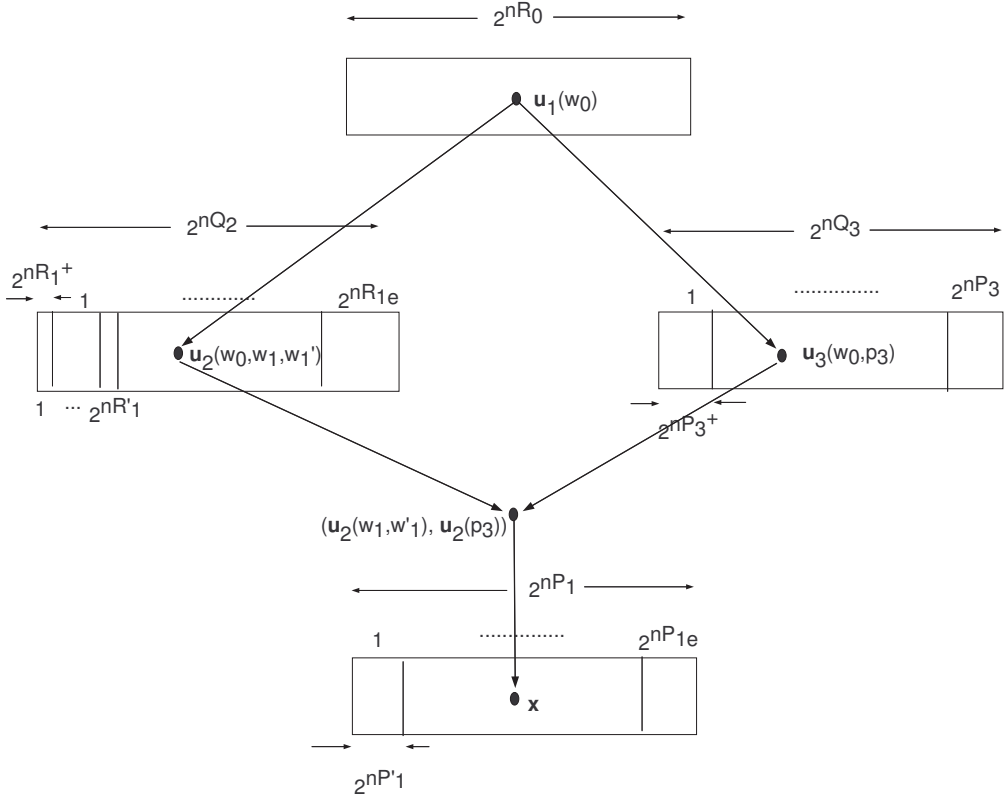}
\caption{Coding for 3-receiver BC with degraded message sets and confidential messages.}\label{F:Fig1}
\end{figure}

Suppose that we have the p.d.f.~in \eqref{pdf} which induces the Markov chain conditions $U_1\to U_2\to (U_3,X)$, $U_1\to U_3\to (U_2,X)$ and $U_1
\to (U_2,U_3) \to X$. The following describes the encoding and decoding processes.

\underline{Codebook generation:} Let $\tilde R_1 = R_{1e} + R_1' + R_1^{\dagger}$, $\tilde P_3 = P_3 + P_3^{\dagger}$, $P_1 = P_{1e} + P_1'$, and $R_{2e} = P_3 + P_{1e}$.  Define, for security,
\begin{equation}\label{security_rates}
P_1'\triangleq I(X;Y_3|U_2)-\delta_1,~\mbox{and}~
R_1'\triangleq I(U_2;Y_3|U_1)-\delta_1,
\end{equation}
where $\delta_1 > 0$ and is small for $n$ sufficiently large.

First of all, generate $2^{nR_0}$ sequences $\mathbf{U}_1(w_0)$, for $w_0\in\mathcal{W}_0$, randomly and uniformly from the set of typical
$\mathbf{U}_1$ sequences. For each $\mathbf{U}_1 (w_0)$, generate $2^{n Q_2}$ sequences $\mathbf{U}_2 (w_0, q_2)$ randomly and uniformly from the set
of conditionally typical $\mathbf{U}_2$ sequences, and also $2^{nQ_3}$ sequences $\mathbf{U}_3 (w_0, q_3)$ randomly and uniformly from the set of
conditionally typical $\mathbf{U}_3$ sequences. Next, randomly partition the sequences, $\mathbf{U}_2(w_0,q_2)$, into $2^{n \tilde R_1}$
equally-sized bins, and the sequences, $\mathbf{U}_3(w_0,q_3)$, into $2^{n \tilde P_3}$ equally-sized bins.  The $\mathbf{U}_2$ codewords undergo a
double partition: the first into $2^{nR_{1e}}$ bins, and the second further partitions them into $2^{nR_1'}$ bins, each of size $2^{nR_1^\dag}$. On
the other hand, the $\mathbf{U}_3$ codewords undergo a single partition into $2^{nP_3}$ bins, each of size $2^{nP_3^\dag}$.

Each product bin $(w_1,w_1',p_3)$ contains the joint typical pair $(\mathbf{U}_2(w_0,w_1,w_1',w_1^\dag),\mathbf{U}_3(w_0,p_3, p_3^\dag))$ for
$w_1\in\{1,\dots,2^{nR_{1e}}\}$, $w_1'\in\{1,\dots,2^{R_1'}\}$, $w_1^\dag\in\{1,\dots,2^{nR_1^\dag}\}$, $p_3\in\{1,\dots, 2^{nP_3}\}$, and
$p_3^\dag\in\{1,\dots,2^{nP_3^\dag}\}$ with high probability under the conditions \cite{el_gamal_81}
\begin{equation}\label{bin_rates}
\begin{aligned}
R_{1e} + R_1' + R_1^\dag&\le Q_2 \Rightarrow R_{1e} + R_1' \leq Q_2,\\
P_3+P_3^\dag&\le Q_3 \Rightarrow P_3 \leq Q_3,\\
R_1^\dag+P_3^\dag&>I(U_2;U_3|U_1),\\
R_{1e} + R_1'+P_3&\le Q_2+Q_3-I(U_2;U_3|U_1).
\end{aligned}
\end{equation}
Now let us rewrite the joint typical pair as $(\mathbf{u}_2 (w_0,w_1 , w_1'), \mathbf{u}_3 (w_0,p_3))$. For each such pair corresponding to the
product bin $(w_1,w_1',p_3)$, generate $2^{n P_1}$ sequences of codewords $\mathbf{X}(w_0,w_1,w_1',p_3,p_1,p_1')$, for
$p_1\in\{1,\dots,2^{nP_{1e}}\}$ and $p_1'\in\{1,\dots,2^{nP_1'}\}$, uniformly and randomly over the set of conditionally typical $\mathbf{X}$
sequences. The $2^{nP_1}$ codewords are partitioned into $2^{nP_{1e}}$ subcodes with $2^{nP_1'}$ codewords within the subcodes.

\underline{Encoding:} To send $(w_0,w_1,w_2)$, express $w_2$ by $(p_1,p_3)$ and send the codeword $\mathbf{x}(w_0,w_1,w_1',p_3,p_1,p_1')$.

\underline{Decoding:} Use $T_\epsilon^n(P_Z)$ to denote the set of jointly strong typical $n$-sequence with respect to the p.d.f.~$p(z)$. Without
loss of generality, assume that $(w_0, w_1, p_3, p_1) = (1,1,1,1)$ is sent and $w_1'$ and $p_1'$ can be arbitrary. The receivers decode as follows:
\begin{enumerate}
\item Receiver 1 uses joint typical decoding of $\{ \mathbf{u}_1, \mathbf{u}_2, \mathbf{u}_3, \mathbf{x}, \mathbf{y}_1 \}$ to find the indices $(w_0, w_1, p_3, p_1)$.
\item Receiver 2 uses indirect decoding of $\mathbf{u}_2$ \cite{nair_07} to find the index $w_0$. Once this is known, $\mathbf{u}_1$ is also found. Then, receiver 2 uses joint typical decoding of
$\{ \mathbf{u}_1, \mathbf{u}_2, \mathbf{y}_2 \}$ to find $w_1$.
\item Receiver 3 uses indirect decoding of $\mathbf{u}_3$ to find the index $w_0$.
\end{enumerate}

At receiver 1, the decoder seeks the indices $(w_0, w_1, p_3, p_1)$ so that
\begin{equation}
\left(\mathbf{u}_1(w_0),\mathbf{u}_2(w_0,w_1,w_1'),\mathbf{u}_3(p_3),\mathbf{x}(w_0,w_1,w_1',p_3,p_1,p_1'),\mathbf{y}_1\right) \in T^n_{\epsilon}(P_{U_1 U_2 U_3 X Y_1}).
\end{equation}
If there is none or more than one possible codeword, an error is declared.  The possible error events are as follows:

a) ${\tt E}_1:(w_0,w_1,w_1',p_3,p_1,p_1')=(1,1,w_1',1,1,p_1')$ but $\mathbf{u}_1$, $\mathbf{u}_2$, $\mathbf{u}_3$, $\mathbf{x}$ are not jointly
typical with $\mathbf{y}$. By the properties of strong typical sequences \cite{kramer_bk}, $\Pr\{{\tt E}_1\}\le\epsilon'$, where $\epsilon' \to 0$
for large $n$.

b) ${\tt E}_2: w_0 \neq 1$ and arbitrary $w_1$, $p_3$, $p_1$, with $\mathbf{u}_1$, $\mathbf{u}_2$, $\mathbf{u}_3$, $\mathbf{x}$ jointly typical with
$\mathbf{y}_1$. Then, we have
\begin{align}
\nonumber \Pr\{{\tt E}_2\} &\le \sum_{w_0 \neq 1\atop w_1,p_3,p_1,w_1',p_1'} \Pr \left\{ (\mathbf{U}_1 (w_0), \mathbf{U}_2 (w_0,w_1,w_1'), \mathbf{U}_3 (p_3), \mathbf{X}(w_0,w_1,w_1',p_3,p_1,p_1'), \mathbf{y}_1)\in T^n_{\epsilon}(P_{U_1 U_2 U_3 X Y_1}) \right\} \\
&\leq 2^{n(R_0 + R_{1e} + R_1' + P_{1e} + P_1' + P_3)} 2^{-n(I(U_1,U_2,U_3,X;Y_1)-2\delta)},
\end{align}
where $\delta \to 0$ as $\epsilon \to 0$ for $n$ sufficiently large.  For $\Pr\{{\tt E}_2\} \leq \epsilon'$, we require
\begin{equation}\label{pe1_2}
    R_0 + R_{1e} + R_1' + P_{1e} + P_1' + P_3 < I(U_1,U_2,U_3,X;Y_1) = I(X;Y_1)
\end{equation}
since $I(U_1,U_2,U_3;Y_1|X)=0$ by the Markov chain condition
\begin{equation}\label{mchain_err}
    U_1 \to (U_2,U_3) \to X \to Y_1.
\end{equation}

c) ${\tt E}_3: w_0 = 1, w_1 \neq 1$ and arbitrary $p_3$, $p_1$, with $\mathbf{u}_1$, $\mathbf{u}_2$, $\mathbf{u}_3$, $\mathbf{x}$ jointly typical
with $\mathbf{y}_1$. Then, we have
\begin{align}
\nonumber \Pr\{{\tt E}_3\} &\le \sum_{w_1 \neq 1\atop p_3,p_1,w_1',p_1'} \Pr \left\{ (\mathbf{u}_1 (1), \mathbf{U}_2 (1,w_1,w_1'), \mathbf{U}_3
(p_3),\mathbf{X}(1,w_1,w_1',p_3,p_1,p_1'),\mathbf{y}_1) \in T^n_{\epsilon}(P_{U_1 U_2 U_3 X Y_1}) \right\}\\
&\leq 2^{n(R_{1e} + R_1' + P_{1e} + P_1' + P_3)} 2^{-n(I(U_2,U_3,X;Y_1|U_1)-2\delta)}.
\end{align}
For $\Pr\{{\tt E}_3\} \leq \epsilon'$, we require
\begin{align}\label{pe1_3}
\nonumber R_{1e} + R_1' + P_{1e} + P_1' + P_3 & < I(U_2,U_3,X;Y_1|U_1) \\
& = I(X;Y_1|U_1) + I(U_2,U_3;Y_1|X,U_1) =  I(X;Y_1|U_1),
\end{align}
where the second line is due to $U_1 \to (U_2,U_3) \to X \to Y_1$.

d) ${\tt E}_4: w_0 = 1, w_1 = 1, p_3 \neq 1$ and arbitrary $p_1$, with $\mathbf{u}_1$, $\mathbf{u}_2$, $\mathbf{u}_3$, $\mathbf{x}$ jointly typical
with $\mathbf{y}_1$. Then, we have
\begin{align}
\nonumber \Pr\{{\tt E}_4\} &\le \sum_{p_3 \neq 1\atop p_1,w_1',p_1'} \Pr \left\{ (\mathbf{u}_1 (1), \mathbf{u}_2 (1,1,w_1'), \mathbf{U}_3 (p_3),
\mathbf{X}(1,1,w_1',p_3,p_1,p_1'),\mathbf{y}_1) \in T^n_{\epsilon}(P_{U_1 U_2 U_3 X Y_1}) \right\}\\
&\leq 2^{n(P_{1e} + P_1' + P_3)} 2^{-n(I(U_3,X;Y_1|U_1,U_2)-2\delta)}.
\end{align}
For $\Pr\{{\tt E}_4\} \leq \epsilon'$, we require
\begin{align}
    \nonumber P_{1e} + P_1' + P_3 & < I(U_3,X;Y_1|U_1,U_2) = I(X;Y_1|U_1,U_2) + I(U_3;Y_1|U_1,U_2,X) \\
    \nonumber & \stackrel{(a)}{=} I(X;Y_1|U_2) + I(U_3;Y_1|U_2,X)\\
    & \stackrel{(b)}{=}  I(X;Y_1|U_2), \label{pe1_4}
\end{align}
where the first term in (a) is due to $U_1 \to U_2 \to X \to Y_1$ and the second term is due to  $U_1 \to (U_2,U_3) \to X \to Y_1$, and (b) is due to $U_3 \to (U_2,X) \to Y_1$.

e) ${\tt E}_5: w_0 = 1, w_1 = 1, p_3 = 1, p_1 \neq 1$ with $\mathbf{u}_1$, $\mathbf{u}_2$, $\mathbf{u}_3$, $\mathbf{x}$ jointly typical with
$\mathbf{y}_1$.  Then, we have
\begin{align}
    \nonumber \Pr\{{\tt E}_5\} &\le \sum_{p_1 \neq 1\atop p_1'} \Pr \left\{ (\mathbf{u}_1 (1), \mathbf{u}_2 (1,1,w_1'), \mathbf{u}_3 (1), \mathbf{X}(1,1,w_1',1,p_1,p_1'),
    \mathbf{y}_1) \in T^n_{\epsilon}(P_{U_1 U_2 U_3 X Y_1}) \right\} \\
    &\leq 2^{n(P_{1e} + P_1')} 2^{-n(I(X;Y_1|U_1,U_2,U_3)-2\delta)}.
\end{align}
For $\Pr\{{\tt E}_5\} \leq \epsilon'$, we require
\begin{equation}\label{pe1_5}
P_{1e} + P_1' < I(X;Y_1|U_1,U_2,U_3) = I(X;Y_1|U_2,U_3)
\end{equation}
where the equality is due to $U_1 \to (U_2,U_3) \to X \to Y_1$.

e) ${\tt E}_6: w_0 = 1, w_1 \neq 1, p_3 = 1$ and $p_1$ arbitrary with $\mathbf{u}_1$, $\mathbf{u}_2$, $\mathbf{u}_3$, $\mathbf{x}$ jointly typical
with $\mathbf{y}_1$. Then, we have
\begin{align}
\nonumber \Pr\{{\tt E}_6\} &\le \sum_{w_1 \neq 1\atop p_1,w_1',p_1'} \Pr \left\{ (\mathbf{u}_1 (1), \mathbf{U}_2 (1,w_1,w_1'), \mathbf{u}_3 (1),
\mathbf{X}(1,w_1,w_1',1,p_1,p_1'),
\mathbf{y}_1) \in T^n_{\epsilon}(P_{U_1 U_2 U_3 X Y_1}) \right\} \\
&\leq 2^{n(R_{1e}+R_1'+P_{1e} + P_1')} 2^{-n(I(U_2,X;Y_1|U_1,U_3)-2\delta)}.
\end{align}
For $\Pr\{{\tt E}_6\} \leq \epsilon'$, we require
\begin{align}
    \nonumber R_{1e}+R_1'+P_{1e} + P_1' & < I(U_2,X;Y_1|U_1,U_3) = I(X;Y_1|U_1,U_3) + I(U_2;Y_1|U_1,U_3,X) \\
    \nonumber & \stackrel{(a)}{=} I(X;Y_1|U_3) + I(U_2;Y_1|U_3,X)\\
    & \stackrel{(b)}{=}  I(X;Y_1|U_3), \label{pe1_6}
\end{align}
where the first term of (a) is due to $U_1 \to U_3 \to X \to Y_1$ and the second term of (a) and (b) are due to $U_1 \to U_2 \to (U_3,X) \to Y_1$.  Consequently, under the conditions \eqref{pe1_2}, \eqref{pe1_3}, \eqref{pe1_4}, \eqref{pe1_5}, \eqref{pe1_6} listed above, the error probability at receiver 1 is less than $\sum_{i=1}^6 \Pr\{{\tt E}_i\}\leq 6\epsilon'$.

Now, assume that $(w_0,q_2)=(1,1)$ is sent to receiver 2.  At receiver 2, the decoder first finds $w_0$ by indirect decoding, then finds $w_1$ by
joint typical decoding.  The error events at receiver 2 may be divided into:

a) ${\tt E}'_1:(w_0,q_2)=(1,1)$ but $\mathbf{u}_2$ is not jointly typical with $\mathbf{y}_2$ (indirect decoding). In this case, by the properties of
strong typical sequences, we have $\Pr\{{\tt E}'_1\}\le\epsilon'$.

b) ${\tt E}'_2: w_0 \neq 1$, $q_2$ arbitrary and $\mathbf{u}_2$ is jointly typical with $\mathbf{y}_2$ (indirect decoding). This is the same as
receiver 2 trying to estimate $w_0$ such that $(\mathbf{u}_2 (w_0,q_2), \mathbf{y}_3) \in T^n_{\epsilon}(P_{U_2 Y_2})$ for any $q_2 \in \{ 1 , \dots,
2^{nQ_2} \}$. We have
\begin{equation}
\Pr\{{\tt E}'_2\} \le \sum_{w_0 \neq 1} \sum_{q_2} \Pr \{ (\mathbf{U}_2 (w_0,q_2), \mathbf{y}_2) \in T^n_{\epsilon}(P_{U_2 Y_2}) \}\leq
2^{n(R_0+Q_2)} 2^{-n(I(U_2;Y_2)-2\delta)}.
\end{equation}
Then, for $\Pr\{{\tt E}'_2\} \le \epsilon'$, we need
\begin{equation}\label{pe2_2}
R_0+Q_2 < I(U_2;Y_2).
\end{equation}

c) ${\tt E}'_3: w_0 = 1$, $q_2 \neq 1$, and $\mathbf{u}_1$, $\mathbf{u}_2$ are jointly typical with $\mathbf{y}_2$. Then, we have
\begin{equation}
\Pr\{{\tt E}'_3\} \le \sum_{q_2} \Pr \{ (\mathbf{u}_1 (1), \mathbf{U}_2 (1,q_2), \mathbf{y}_2) \in T^n_{\epsilon}(P_{U_1 U_2 Y_2}) \}\leq 2^{n Q_2}
2^{-n(I(U_2;Y_2|U_1)-2\delta)}.
\end{equation}
Then, for $\Pr\{{\tt E}'_2\} \le \epsilon'$, we need
\begin{equation}\label{pe2_3}
    Q_2 < I(U_2;Y_2|U_1).
\end{equation}
Thus, under the conditions \eqref{pe2_2} and \eqref{pe2_3}, the error probability at receiver 2 is less than $\sum_{i=1}^3 \Pr\{{\tt E}'_i\}\leq 3\epsilon'$.

At receiver 3, indirect decoding is used, so that the decoder estimates $w_0$ such that $(\mathbf{u}_3 (w_0,q_3), \mathbf{y}_3) \in
T^n_{\epsilon}(P_{U_3 Y_3})$ for any $q_3 \in \{ 1 , \dots, 2^{nQ_3} \}$. Assuming that $(w_0,q_3)=(1,1)$ is sent, we require
\begin{equation}\label{pe3}
R_0+Q_3<I(U_3;Y_3),
\end{equation}
for the error probability at receiver 3 to be small for $n$ sufficiently large.

In addition to the decoding conditions above, we require that
\begin{equation}\label{p1_part}
P_{1e}+P_1' > I(X;Y_1|U_2),
\end{equation}
which is a consequence of setting $P_1' = I(X;Y_1|U_2)-\delta_1$ as the partition size.

Combining \eqref{bin_rates}, \eqref{pe1_2}, \eqref{pe1_3}, \eqref{pe1_4}, \eqref{pe1_5}, \eqref{pe1_6}, \eqref{pe2_2}, \eqref{pe2_3}, \eqref{pe3} and
\eqref{p1_part} using Fourier-Motzkin elimination with $R_1 = R_{1e} + R_1'$, $R_2 = R_{2e} + R_2'$, $R_{2e} = P_{1e} + P_3$, we can obtain the inner
bound to the secrecy capacity region in Theorem \ref{thm1} as well as condition \eqref{thm1_cond}, which completes the proof.

\subsection{Proof of Achievability for 3-Receiver BC with 2 Degraded Message Sets (Type 1)}\label{ach2}
Here, we outline the proof of achievability for the Type 1 3-receiver BC with 2 degraded message sets and secrecy constraints. The coding scheme
largely follows that for the 3 degraded message sets case, but with the key difference being the assignment of the message $W_1$ using the auxiliary
codewords. Specifically, instead of encoding $W_1$ using the auxiliary codeword $\mathbf{U}_2$ and $W_2$ using $\mathbf{U}_3$ and $\mathbf{X}$ as in
the 3 degraded message sets case, here, $W_1$ is encoded using $\mathbf{U}_2$, $\mathbf{U}_3$ and $\mathbf{X}$. We can use the same code partitions
and sizes of the partitions for security as in the 3 degraded message sets case, even for this different coding scheme.

\underline{Codebook generation:}  Let us define $R_1\triangleq R_{1e} + R_1'$, $R_{1e}\triangleq P_{1e} + P_{2e} + P_3$, $R_1'\triangleq P_1'+P_2'$,
and
\begin{equation}\label{security_rates2}
\left\{\begin{aligned}
P_1'&\triangleq I(X;Y_3|U_2)-\delta_1, \\
P_2'&\triangleq I(U_2;Y_3|U_1)-\delta_1,
\end{aligned}\right.
\end{equation}
where $\delta_1 > 0$ and is small for $n$ sufficiently large.

The code generation follows the same way as in Section \ref{ach}, except that we randomly partition the sequences, $\mathbf{U}_2(w_0,q_2)$, into
$2^{n \tilde P_2}$ equally-sized bins, and $\mathbf{U}_3(w_0,q_3)$, into $2^{n \tilde P_3}$ equally-sized bins, where $\tilde P_2 = P_{2e} + P_2' +
P^{\dagger}_2$ and $\tilde P_3 = P_3 + P^{\dagger}_3$. The $\mathbf{U}_2$ codewords undergo a double partition while $\mathbf{U}_3$ undergo a single
partition. Then, for each product bin $(p_2,p_3)$ contains the joint typical pair $(\mathbf{u}_2(p_2,p_2',p_2^\dag),\mathbf{u}_3(p_3,p_3^\dag))$ for
$p_2\in\{1,\dots,2^{nP_{2e}}\}$, $p_2'\in\{1,\dots,2^{P_2'}\}$, $p_2^\dag\in\{1,\dots,2^{nP_2^\dag}\}$, $p_3\in\{1,\dots,2^{nP_3}\}$,
$p_3^\dag\in\{1,\dots,2^{nP_3^\dag}\}$ with high probability
\begin{equation}\label{bin_rates2}
\begin{aligned}
P_{2e}+P_2' &\le Q_2, \\
P_3 &\le Q_3,\\
P_{2e}+P_2' + P_3 &\le Q_2+Q_3-I(U_2;U_3|U_1).
\end{aligned}
\end{equation}

As before, for each joint typical pair $(\mathbf{u}_2 (p_2 , p_2'), \mathbf{u}_3 (p_3))$ corresponding to the product bin $(p_2,p_2',p_3)$, generate
$2^{n \tilde P_1}$ sequences of codewords $\mathbf{X}(w_0,p_1,p_1',p_2,p_2',p_3)$, where $\tilde P_1=P_{1e}+P_1'$, for
$p_1\in\{1,\dots,2^{nP_{1e}}\}$ and $p_1'\in\{1,\dots,2^{nP_1'}\}$, uniformly and randomly over the set of conditionally typical $\mathbf{X}$
sequences. The $2^{n\tilde P_1}$ codewords are partitioned into $2^{nP_{1e}}$ subcodes with $2^{nP_1'}$ codewords within the subcodes.

\underline{Encoding:} To send $(w_0,w_1)$, express $w_1$ by $(p_1,p_2,p_3)$ and send the codeword $\mathbf{x}(w_0,p_1,p_1',p_2,p_2',p_3)$.

\underline{Decoding:} Assume that $(w_0,p_1,p_2,p_3)=(1,1,1,1)$ is sent and $p_1'$, $p_2'$ can be arbitrary. The receivers decode the messages as
follows:
\begin{enumerate}
\item Receiver 1 uses joint typical decoding of $\{ \mathbf{u}_1, \mathbf{u}_2, \mathbf{u}_3, \mathbf{x}, \mathbf{y}_1 \}$ to find the indices $(w_0, p_1, p_2, p_3)$.
\item Receiver 2 uses indirect decoding of $\mathbf{u}_2$ to find the index $w_0$.
\item Receiver 3 uses indirect decoding of $\mathbf{u}_3$ to find the index $w_0$.
\end{enumerate}

At receiver 1, the decoder seeks the message so that
\begin{equation}
(\mathbf{u}_1(w_0),\mathbf{u}_2(p_2,p_2'),\mathbf{u}_3(p_3), \mathbf{x}(w_0,p_1,p_1',p_2,p_2',p_3),\mathbf{y}_1)\in T^n_{\epsilon}(P_{U_1 U_2 U_3 X
Y_1}).
\end{equation}
The error events at receiver 1 can be classified into:

a) ${\tt E}_1:(w_0,p_1,p_1',p_2,p_2',p_3)=(1,1,p_1',1,p_2',1)$ but $\mathbf{u}_1$, $\mathbf{u}_2$, $\mathbf{u}_3$, $\mathbf{x}$ are not jointly
typical with $\mathbf{y}_1$. In this case, we have $\Pr\{{\tt E}_1\}\le\epsilon\to 0$ for large $n$.

b) ${\tt E}_2:w_0\neq 1$, with arbitrary $p_1,p_2$ and $p_3$, but $\mathbf{u}_1,\mathbf{u}_2,\mathbf{u}_3$ and $\mathbf{x}$ are jointly typical with
$\mathbf{y}_1$. For $\Pr\{{\tt E}_2\}\le\epsilon\to 0$ with $n$ sufficiently large to be true, we then need
\begin{equation}\label{fm1}
R_0 + P_{1e}+ P_1' + P_{2e}+ P_2' + P_3 < I(X;Y_1).
\end{equation}

c) ${\tt E}_3 : w_0 = 1$, $p_2, p_3 \neq 1$, and $p_1$ arbitrary, but $\mathbf{u}_1$, $\mathbf{u}_2$, $\mathbf{u}_3$, $\mathbf{x}$ are jointly
typical with $\mathbf{y}_1$. For $\Pr\{{\tt E}_3\}\le\epsilon\to 0$ with $n$ sufficiently large to be true, we require
\begin{equation}\label{fm2}
P_{1e}+ P_1' + P_{2e}+ P_2' + P_3 < I(X;Y_1|U_1).
\end{equation}

d) ${\tt E}_4:w_0 = 1$, $p_2 = 1$, $p_3 \neq 1$, and $p_1$ arbitrary, but $\mathbf{u}_1$, $\mathbf{u}_2$, $\mathbf{u}_3$, $\mathbf{x}$ are jointly
typical with $\mathbf{y}_1$. Then, for $\Pr\{{\tt E}_4\}\le\epsilon\to 0$ with $n$ sufficiently large to be true, we need
\begin{align}\label{fm3}
P_{1e}+ P_1' + P_3 < I(X;Y_1|U_2).
\end{align}

e) ${\tt E}_5:w_0 = 1$, $p_2 \neq 1$, $p_3 =1$, and $p_1$ arbitrary, but $\mathbf{u}_1$, $\mathbf{u}_2$, $\mathbf{u}_3$, $\mathbf{x}$ are jointly
typical with $\mathbf{y}_1$. Then, for $\Pr\{{\tt E}_5\}\le\epsilon\to 0$ with $n$ sufficiently large to be true, we need
\begin{align}\label{fm4}
P_{1e}+ P_1' + P_{2e}+ P_2'  < I(X;Y_1|U_3).
\end{align}

f) ${\tt E}_6:w_0=1$, $p_2=1$, $p_3=1$ and $p_1\neq 1$, but $\mathbf{u}_1$, $\mathbf{u}_2$, $\mathbf{u}_3$, $\mathbf{x}$ are jointly typical with
$\mathbf{y}_1$. Then, for $\Pr\{{\tt E}_6\}\le\epsilon\to 0$ with $n$ sufficiently large to be true, we require
\begin{equation}\label{fm5}
P_{1e}+ P_1' <  I(X;Y_1|U_2,U_3).
\end{equation}
The error probability at receiver 1 is therefore less than $\sum_{i=1}^6 \Pr\{{\tt E}_i\}\leq 6\epsilon$.

At receivers 2 and 3, assuming that $(w_0,q_2)=(w_0,q_3)=(1,1)$ is sent, we require
\begin{equation}\label{fm6}
\left\{\begin{aligned}
R_0+Q_2 &<I(U_2;Y_2),\\
R_0+Q_3 &<I(U_3;Y_3),
\end{aligned}\right.
\end{equation}
for the error probabilities tending to 0 for $n$ sufficiently large.  We additionally have
\begin{equation}\label{p1_part2}
P_{1e}+P_1' > I(X;Y_1|U_2),
\end{equation}
which is a consequence of setting $P_1' = I(X;Y_1|U_2)-\delta_1$ as the partition size.

Combining \eqref{bin_rates2} and \eqref{fm1} to \eqref{fm6} and \eqref{p1_part2} by using Fourier-Motzkin elimination with $R_1 = R_{1e} + R_1'$,
$R_{1e} = P_{1e} + P_{2e} + P_3$, we can obtain the rate region in Theorem \ref{thm2} and the conditions \eqref{cor2_cond}.

\subsection{Equivocation Calculation for 3-Receiver BC with 3 Degraded Message Sets}\label{eqv_calc}
In this section, we show that the equivocation rate for the 3-receiver BC with 3 degraded message sets satisfies the security conditions in
\eqref{sec_cond}. That is, we shall derive the bounds for $H(W_1|\mathbf{Y}_3)$, $H(W_2|\mathbf{Y}_3)$ and $H(W_1,W_2|\mathbf{Y}_3)$. In the
analysis, we shall make use of the following relation very frequently
\begin{equation}\label{ent_exp}
H(U,V) = H(U) + H(V|U).
\end{equation}
For the message $W_1$, the equivocation can be bounded in two ways, which respectively correspond to whether $\mathbf{U}_2$ is the codeword sent to
$Y_2$ or $\mathbf{X}$ is the codeword sent to $Y_1$.  For the former case, we have
\begin{align}
\nonumber  H(W_1|\mathbf{Y}_3) & \geq H(W_1|\mathbf{Y}_3, \mathbf{U}_1) \\
\nonumber  & \stackrel{(a)}{=} H(W_1,\mathbf{Y}_3|\mathbf{U}_1) - H(\mathbf{Y}_3|\mathbf{U}_1) \\
\nonumber  & \stackrel{(b)}{=} H(W_1,\mathbf{U}_2,\mathbf{Y}_3|\mathbf{U}_1) - H(\mathbf{U}_2|W_1,\mathbf{U}_1,\mathbf{Y}_3)- H(\mathbf{Y}_3|\mathbf{U}_1) \\
\nonumber  & \geq H(\mathbf{U}_2|\mathbf{U}_1) + [H(\mathbf{Y}_3|\mathbf{U}_2,\mathbf{U}_1)- H(\mathbf{Y}_3|\mathbf{U}_1)] - H(\mathbf{U}_2|W_1,\mathbf{U}_1,\mathbf{Y}_3) \\
  & = H(\mathbf{U}_2|\mathbf{U}_1) - I(\mathbf{U}_2;\mathbf{Y}_3|\mathbf{U}_1) - H(\mathbf{U}_2|W_1,\mathbf{U}_1,\mathbf{Y}_3),  \label{eqv_1}
\end{align}
where (a) is by \eqref{ent_exp}, and (b) has first two terms by \eqref{ent_exp}. Now, we can bound each term in \eqref{eqv_1} separately. For the
first term, given $\mathbf{u}_1$, $\mathbf{U}_2$ has $2^{nI(U_2;Y_2|U_1)}$ codewords with equal probability. As such,
\begin{equation}\label{eqv1_t1}
H(\mathbf{U}_2|\mathbf{U}_1) = nI(U_2;Y_2|U_1)- n\delta_1',
\end{equation}
where $\delta_1' > 0$ and is small for $n$ sufficiently large.  The second term can be bounded by \cite{liu_08}
\begin{align}\label{eqv1_t2}
I(\mathbf{U}_2; \mathbf{Y}_3 | \mathbf{U}_1 ) \leq nI(U_2 ; Y_3 | U_1 ) + n\delta',
\end{align}
where $\delta'>0$ and is small for $n$ sufficiently large. For the third term, by Fano's inequality, we have
\begin{equation}\label{fano_eqv1}
    \frac{1}{n}H(\mathbf{U}_2|W_1,\mathbf{U}_1,\mathbf{Y}_3) \leq \frac{1}{n}(1+\lambda(w_1')\log R_1') \triangleq \epsilon'_{2,n},
\end{equation}
where $\epsilon'_{2,n} \to 0$ for $n$ sufficiently large.

To show that $\lambda(w_1') \leq 2 \kappa$ where $\kappa \to 0$ for $n$ sufficiently large so that \eqref{fano_eqv1} holds, consider decoding at the
wiretapper and the codebook with rate $R_1'$ to be decoded at the wiretapper with error probability $\lambda(w_1')$.  Let $W_1 = w_1$ and $W_0 = w_0$
be fixed.  We note that the wiretapper decodes $\mathbf{U}_2$ first as it will then use this knowledge to decode $\mathbf{X}$ later.  The wiretapper
decodes $\mathbf{U}_2$ given $W_1 = w_1$ and $\mathbf{U}_1 = \mathbf{u}_1$, by finding the index $w_1'$, so that
\begin{equation}
\left(\mathbf{u}_1(w_0),\mathbf{u}_2(w_0,w_1,w_1'),\mathbf{y}_3\right) \in T^n_{\epsilon}(P_{U_1 U_2 Y_3}).
\end{equation}
If there is none or more than one possible codeword, an error is declared. Now, define the event
\begin{equation}
    {\tt E}^{(Y_3)}_1 (w_1') \triangleq \{ \mathbf{u}_1(w_0), \mathbf{U}_2(w_0,w_1,w_1'), \mathbf{y}_3 \in  T^n_{\epsilon}(P_{U_1 U_2 Y_3}) \}.
\end{equation}
Then, assuming that $\mathbf{u}_2(w_0,w_1,1)$ is sent,
\begin{equation}
\lambda(w_1')\leq \Pr \left\{ \left( {\tt E}^{(Y_3)}_1 (1)\right)^c \right\} + \sum_{w_1'} \Pr \left\{ {\tt E}^{(Y_3)}_1 (1) \right\}\leq \kappa +
2^{nR_1'}2^{-n(I(U_2;Y_3|U_1)-2\delta)},
\end{equation}
where $\delta \to 0$ as $\epsilon \to 0$ for $n$ sufficiently large. Thus, since we have chosen $R_1' = I(U_2;Y_3|U_1) - \delta_1$ for the
double-binning partition, we get $\lambda(w_1') \leq 2 \kappa$ for $\delta_1 > 2\delta$ and \eqref{fano_eqv1} holds. Substituting
\eqref{eqv1_t1}--\eqref{fano_eqv1} into \eqref{eqv_1}, we have $H(W_1|\mathbf{Y}_3) \geq nR_{1e} - n\epsilon_1$, where $\epsilon_1 = \delta_1' +
\delta' + \epsilon'_{2,n}$, and hence the equivocation rate satisfies the first condition in \eqref{sec_cond1}.

For message $W_1$ sent using $\mathbf{X}$ to $Y_1$, we have
\begin{align}
\nonumber   H(W_1|\mathbf{Y}_3) & \geq  H(W_1,\mathbf{Y}_3|\mathbf{U}_1) - H(\mathbf{Y}_3|\mathbf{U}_1) \\
\nonumber & = H(W_1,\mathbf{X},\mathbf{Y}_3|\mathbf{U}_1) - H(\mathbf{X}|W_1,\mathbf{U}_1,\mathbf{Y}_3) - H(\mathbf{Y}_3|\mathbf{U}_1) \\
\nonumber & \geq H(\mathbf{X}|\mathbf{U}_1) + H(\mathbf{Y}_3|\mathbf{U}_1,\mathbf{X}) - H(\mathbf{Y}_3|\mathbf{U}_1) - H(\mathbf{X}|W_1,\mathbf{U}_1,\mathbf{Y}_3) \\
\nonumber & \geq H(\mathbf{X}|\mathbf{U}_1,\mathbf{U}_2,\mathbf{U}_3) + H(\mathbf{Y}_3|\mathbf{U}_1,\mathbf{U}_2,\mathbf{X}) - H(\mathbf{Y}_3|\mathbf{U}_1) - H(\mathbf{U}_2,\mathbf{X}|W_1,\mathbf{U}_1,\mathbf{Y}_3) \\
& = H(\mathbf{X}|\mathbf{U}_1,\mathbf{U}_2,\mathbf{U}_3) - I(\mathbf{U}_2,\mathbf{X};\mathbf{Y}_3|\mathbf{U}_1) - H(\mathbf{U}_2|W_1,\mathbf{U}_1,\mathbf{Y}_3) -  H(\mathbf{X}|W_1,\mathbf{U}_1,\mathbf{U}_2,\mathbf{Y}_3) \label{eqv_1a}.
\end{align}
For the first term in \eqref{eqv_1a}, given $\mathbf{u}_1,\mathbf{u}_2,\mathbf{u}_3$, $\mathbf{X}$ has $2^{nI(X;Y_1|U_2,U_3,U_1)}$ codewords with
equal probability. Then,
\begin{align}\label{ent_x}
\nonumber H(\mathbf{X}|\mathbf{U}_1,\mathbf{U}_2,\mathbf{U}_3) & = nI(X;Y_1|U_2,U_3,U_1) - n\delta_1' \\
& = nI(X;Y_1|U_2) - n\delta_1' \;\; \textrm{or} \;\; nI(X;Y_1|U_3) - n\delta_1'.
\end{align}
The last equalities are due to $I(X;Y_1|U_2,U_3,U_1) = I(X;Y_1|U_1) - I(U_2,U_3;Y_1|U_1)$ and
\begin{align}
I(U_2,U_3;Y_1|U_1) = I(U_2;Y_1|U_1) + I(U_3;Y_1|U_2,U_1) = I(U_2;Y_1|U_1), \\
I(U_2,U_3;Y_1|U_1) = I(U_3;Y_1|U_1) + I(U_2;Y_1|U_3,U_1) = I(U_3;Y_1|U_1),
\end{align}
where the above equalities are due to the Markov chain conditions \eqref{mkv}. Thus, for this case, we choose
\begin{equation}\label{eqv_1a_t1}
H(\mathbf{X}|\mathbf{U}_1,\mathbf{U}_2,\mathbf{U}_3) = nI(X;Y_1|U_3) - n\delta_1'.
\end{equation}
The second term in \eqref{eqv_1a} can be bounded as
\begin{equation}\label{eqv_1a_t2}
I(\mathbf{U}_2,\mathbf{X};\mathbf{Y}_3|\mathbf{U}_1)= I(\mathbf{U}_2;\mathbf{Y}_3|\mathbf{U}_1) +
I(\mathbf{X};\mathbf{Y}_3|\mathbf{U}_2,\mathbf{U}_1) \leq nI(U_2;Y_3|U_1) + nI(X;Y_3|U_2) + 2n\delta'.
\end{equation}
The third term in \eqref{eqv_1a} may be bounded using Fano's inequality as in \eqref{fano_eqv1}. The fourth term can also be bounded using Fano's
inequality, by which we have
\begin{equation}\label{fano_eqv1a}
    \frac{1}{n}H(\mathbf{X}|W_1,\mathbf{U}_1,\mathbf{U}_2,\mathbf{Y}_3) \leq \frac{1}{n}(1+\lambda(p_1')\log P_1') \triangleq \epsilon'_{1,n},
\end{equation}
where $\epsilon'_{1,n} \to 0$ for $n$ sufficiently large.  To show that $\lambda(p_1') \leq 2\kappa$ so that \eqref{fano_eqv1a} holds, assume that wiretapper $Y_3$ knows $\mathbf{U}_2= \mathbf{u}_2$, $\mathbf{U}_1 = \mathbf{u}_1$ and decodes $\mathbf{x}(w_0,w_1,w_1',p_3,p_1,p_1')$ by finding the index $p_1'$, so that
\begin{equation}
\left(\mathbf{u}_1(w_0),\mathbf{u}_2(w_0,w_1,w_1'),\mathbf{u}_3(p_3),\mathbf{x}(w_0,w_1,w_1',p_3,p_1,p_1'),\mathbf{y}_3\right) \in T^n_{\epsilon}(P_{U_1 U_2 U_3 X Y_3}).
\end{equation}
If there is none or more than one possible codeword, an error is declared.  Define the event
\begin{equation}
    {\tt E}^{(Y_3)}_2 (p_1') \triangleq \{ \mathbf{u}_1(w_0), \mathbf{u}_2(w_0,w_1,w_1'), \mathbf{U}_3(p_3), \mathbf{X}(w_0,w_1,w_1',p_3,p_1,p_1'), \mathbf{y}_3 \in  T^n_{\epsilon}(P_{U_1 U_2 U_3 X Y_3}) \},
\end{equation}
where $w_0, w_1,w_1'$ are known. Assuming that $\mathbf{x}(w_0,w_1,w_1',p_3,p_1,1)$ is sent, we then have
\begin{equation}
\lambda(p_1')\leq \Pr \left\{ \left( {\tt E}^{(Y_3)}_2 (1)\right)^c \right\} + \sum_{p_1'} \Pr \left\{ {\tt E}^{(Y_3)}_2 (1) \right\}\leq \kappa +
2^{nP_1'}2^{-n(I(X;Y_3|U_1, U_2)-2\delta)},
\end{equation}
where $\delta \to 0$ as $\epsilon \to 0$ for $n$ sufficiently large.  Since we have chosen $P_1' = I(X;Y_3|U_2) - \delta_1$, we obtain $\lambda(p_1')
\leq 2 \kappa$ for $\delta_1 > 2\delta$. Thus, \eqref{fano_eqv1a} holds and substituting \eqref{eqv_1a_t1}, \eqref{eqv_1a_t2}, \eqref{fano_eqv1},
\eqref{fano_eqv1a} into \eqref{eqv_1a}, we have $H(W_1|\mathbf{Y}_3) \geq nR_{1e} - n\tilde{\epsilon}_1$, where $n\tilde{\epsilon}_1 = \delta_1' +
2\delta' + \epsilon'_{1,n} + \epsilon'_{2,n}$ is small for $n$ sufficiently large, so the second condition in \eqref{sec_cond1} is satisfied.

For the message $W_2$, the equivocation can be bounded by
\begin{align}
\nonumber  H(W_2|\mathbf{Y}_3) & \geq H(W_2|\mathbf{Y}_3, \mathbf{U}_1,\mathbf{U}_2) \\
\nonumber  & = H(W_2,\mathbf{Y}_3|\mathbf{U}_1,\mathbf{U}_2) - H(\mathbf{Y}_3|\mathbf{U}_1,\mathbf{U}_2)  \\
\nonumber  & = H(W_2,\mathbf{X},\mathbf{Y}_3|\mathbf{U}_1,\mathbf{U}_2) - H(\mathbf{X}|W_2,\mathbf{U}_1,\mathbf{U}_2,\mathbf{Y}_3) - H(\mathbf{Y}_3|\mathbf{U}_1,\mathbf{U}_2) \\
\nonumber  & \geq H(\mathbf{X}|\mathbf{U}_1,\mathbf{U}_2) + [H(\mathbf{Y}_3|\mathbf{U}_1,\mathbf{U}_2,\mathbf{X}) -  H(\mathbf{Y}_3|\mathbf{U}_1,\mathbf{U}_2)] - H(\mathbf{X}|W_2,\mathbf{U}_1,\mathbf{U}_2,\mathbf{Y}_3) \\
  & \geq H(\mathbf{X}|\mathbf{U}_1,\mathbf{U}_2,\mathbf{U}_3)  - I(\mathbf{X};\mathbf{Y}_3|\mathbf{U}_1,\mathbf{U}_2) - H(\mathbf{X}|W_2,\mathbf{U}_1,\mathbf{U}_2,\mathbf{Y}_3).  \label{eqv_2}
\end{align}
For the first term in \eqref{eqv_2}, given $\mathbf{u}_1$, $\mathbf{u}_2$, $\mathbf{u}_3$, $\mathbf{X}$ has $2^{nI(X;Y_1|U_2,U_3,U_1)}$ codewords
with equal probability. Thus,
\begin{equation}\label{eqv2_t1}
    H(\mathbf{X}|\mathbf{U}_1, \mathbf{U}_2,\mathbf{U}_3) = nI(X;Y_1|U_2) - n\delta_1',
\end{equation}
as discussed in the obtaining of \eqref{ent_x}. The second term is bounded by
\begin{equation}\label{eqv2_t2}
I(\mathbf{X};\mathbf{Y}_3|\mathbf{U}_2,\mathbf{U}_3) \leq nI(X ; Y_3 | U_2,U_3 ) + n\delta',
\end{equation}
where $\delta'>0$ and is small for $n$ sufficiently large.  For the third term, by Fano's inequality, we have
\begin{equation}\label{fano_eqv2}
    \frac{1}{n}H(\mathbf{X}|W_2,\mathbf{U}_2,\mathbf{U}_3,\mathbf{Y}_3) \leq \frac{1}{n}(1+\lambda(p_1')\log P_1') \triangleq \epsilon'_{3,n},
\end{equation}
where $\epsilon'_{3,n} \to 0$ for $n$ sufficiently large. To show that $\lambda(p_1') \leq 2\kappa$ so that \eqref{fano_eqv2} holds, since the
wiretapper knows $W_2$, we can assume that wiretapper $Y_3$ knows $\mathbf{U}_3 = \mathbf{u}_3$, $\mathbf{U}_2 = \mathbf{u}_2$, $\mathbf{U}_1 =
\mathbf{u}_1$ and decodes $\mathbf{x}(w_0,w_1,w_1',p_3,p_1,p_1')$ by finding the index $p_1'$, such that
\begin{equation}
\left(\mathbf{u}_1(w_0),\mathbf{u}_2(w_0,w_1,w_1'),\mathbf{u}_3(p_3),\mathbf{x}(w_0,w_1,w_1',p_3,p_1,p_1'),\mathbf{y}_3\right) \in T^n_{\epsilon}(P_{U_1 U_2 U_3 X Y_3}).
\end{equation}
If there is none or more than one possible codeword, an error is declared.  Define the event
\begin{equation}
    {\tt E}^{(Y_3)}_2 (p_1') \triangleq \{ \mathbf{u}_1(w_0), \mathbf{u}_2(w_0,w_1,w_1'), \mathbf{u}_3(p_3), \mathbf{X}(w_0,w_1,w_1',p_3,p_1,p_1'), \mathbf{y}_3 \in  T^n_{\epsilon}(P_{U_1 U_2 U_3 X Y_3}) \},
\end{equation}
where $w_0, w_1,w_1',p_3$ are known.  Assuming that $\mathbf{x}(w_0,w_1,w_1',p_3,p_1,1)$ is sent, we then have
\begin{equation}
\lambda(p_1')\leq \Pr \left\{ \left( {\tt E}^{(Y_3)}_2 (1)\right)^c \right\} + \sum_{p_1'} \Pr \left\{ {\tt E}^{(Y_3)}_2 (1) \right\}\leq \kappa +
2^{nP_1'}2^{-n(I(X;Y_3|U_1, U_2,U_3)-2\delta)},
\end{equation}
where $\delta \to 0$ as $\epsilon \to 0$ for $n$ sufficiently large.  Since we have chosen
\begin{equation}
P_1' = I(X;Y_3|U_2) - \delta_1 = I(X;Y_3|U_1,U_2,U_3) - \delta_1,
\end{equation}
we obtain $\lambda(p_1') \leq 2 \kappa$ for $\delta_1 > 2\delta$ and \eqref{fano_eqv2} holds. Substituting \eqref{eqv2_t1}, \eqref{eqv2_t2} and
\eqref{fano_eqv2} into \eqref{eqv_2}, we have $H(W_2|\mathbf{Y}_3) \geq nR_{2e} - n\epsilon_2$, where $\epsilon_2 = \delta_1'+\delta' +
\epsilon'_{3,n}$, and the equivocation rate satisfies \eqref{sec_cond2}.

For the combined message $(W_1,W_2)$, we have
\begin{align}
\nonumber H(W_1,W_2|\mathbf{Y}_3) & \geq H(W_1,W_2|\mathbf{Y}_3,\mathbf{U}_1) \\
\nonumber  & = H(W_1,W_2,\mathbf{Y}_3|\mathbf{U}_1) - H(\mathbf{Y}_3|\mathbf{U}_1) \\
\nonumber  & = H(W_1,W_2,\mathbf{X},\mathbf{Y}_3|\mathbf{U}_1) - H(\mathbf{X}|W_1,W_2,\mathbf{U}_1,\mathbf{Y}_3) - H(\mathbf{Y}_3|\mathbf{U}_1)\\
\nonumber  & \geq H(\mathbf{X}|\mathbf{U}_1) + H(\mathbf{Y}_3|\mathbf{U}_1,\mathbf{X}) - H(\mathbf{Y}_3|\mathbf{U}_1) - H(\mathbf{U}_2,\mathbf{X}|W_1,W_2,\mathbf{U}_1,\mathbf{Y}_3) \\
\nonumber & \geq H(\mathbf{X}|\mathbf{U}_1) + [H(\mathbf{Y}_3|\mathbf{U}_1,\mathbf{U}_2,\mathbf{X}) - H(\mathbf{Y}_3|\mathbf{U}_1)] -
H(\mathbf{U}_2|W_1,W_2,\mathbf{U}_1,\mathbf{Y}_3)\\
&- H(\mathbf{X}|W_1,W_2,\mathbf{U}_1,\mathbf{U}_2,\mathbf{Y}_3). \label{eqv_3}
\end{align}
For the first term, we have
\begin{equation}
H(\mathbf{X}|\mathbf{U}_1) = nI(X;Y_1|U_1) - n\delta_1'.
\end{equation}
The second term can be bounded by
\begin{equation}
I(\mathbf{U}_2,\mathbf{X};\mathbf{Y}_3|\mathbf{U}_1) \leq nI(U_2;Y_3|U_1) + nI(X;Y_3|U_2) + 2n\delta'.
\end{equation}
The fourth and fifth terms are, respectively,
\begin{align}
\frac{1}{n}H(\mathbf{U}_2|W_1,W_2,\mathbf{U}_1,\mathbf{Y}_3) \leq \frac{1}{n}H(\mathbf{U}_2|W_1,\mathbf{U}_1,\mathbf{Y}_3) \leq \epsilon'_{2,n}, \\
\frac{1}{n}H(\mathbf{X}|W_1,W_2,\mathbf{U}_1,\mathbf{U}_2,\mathbf{Y}_3) \leq \frac{1}{n}H(\mathbf{X}|W_1,\mathbf{U}_1,\mathbf{U}_2,\mathbf{Y}_3) \leq \epsilon'_{1,n}.
\end{align}
Substituting the above into \eqref{eqv_3}, we get
\begin{equation}
H(W_1,W_2|\mathbf{Y}_3) \geq nR_{1e} + nR_{2e} - n\epsilon_{1,2},
\end{equation}
where $\epsilon_{1,2} = \delta_1' + 2\delta' + \epsilon'_{1,n} + \epsilon'_{2,n}$, thus satisfying \eqref{sec_cond12}. As a result, we see that the
security conditions in \eqref{sec_cond} are satisfied and we have shown that the rate-equivocation tuple $(R_0,R_1,R_{1e},R_2,R_{2e})$ is achievable.


\section{Outer Bounds for the 3-Receiver BC with Degraded Message Sets}\label{outer}
In the derivation of the outer bounds, we note that, for the original Markov chain conditions
\begin{subequations}\label{mkv_o}
\begin{align}
    & U_1\to U_2\to (U_3,X) \to (Y_1,Y_2,Y_3), \label{mkv1_o} \\
    & U_1\to U_3\to (U_2,X) \to (Y_1,Y_2,Y_3),  \label{mkv2_o} \\
    & U_1 \to (U_2,U_3) \to X \to (Y_1,Y_2,Y_3), \label{mkv3_o}
\end{align}
\end{subequations}
which arise from the code generation for the 3-receiver BC, there exists the set of conditions
\begin{subequations}\label{mkv_appx_o}
\begin{align}
    & U_1\to \tilde U_2 \to U_2 \to (U_3,X) \to (Y_1,Y_2,Y_3), \label{mkv1_appx_o} \\
    & U_1\to U_3\to (\tilde U_2,U_2,X) \to (Y_1,Y_2,Y_3),  \label{mkv2_appx_o} \\
    & U_1 \to (\tilde U_2,U_2,U_3) \to X \to (Y_1,Y_2,Y_3), \label{mkv3_appx_o}
\end{align}
\end{subequations}
which come about by inserting auxiliary random variable $\tilde U_2$ between $U_1$ and $U_2$ in the code generation, so that $U_1\to \tilde U_2 \to U_2$ is satisfied.  The code generation and decoding conditions are equivalent for the following:
\begin{enumerate}
	\item For the 3-receiver BC with 3 degraded message sets, let $\tilde U_2$ represent information about $W_0$, and set $\tilde U_2 = U_1$ for equivalent code generation and decoding conditions under \eqref{mkv_o} and \eqref{mkv_appx_o};
	\item For the 3-receiver BC with 3 degraded message sets, let $\tilde U_2$ represent information about $W_1$, and set $\tilde U_2 = U_2$ for equivalent code generation and decoding conditions under both \eqref{mkv_o} and \eqref{mkv_appx_o};
	\item For the 3-receiver BC with 2 degraded message sets (Type 1), let $\tilde U_2$ represent information about $W_1$, and set $\tilde U_2 = U_1$ for equivalent code generation and decoding conditions under \eqref{mkv_o} and \eqref{mkv_appx_o}.
\end{enumerate}
We will show that case (1) is true in the Appendix of this paper, while cases (2) and (3) are shown to be true in [\ref{nair_eg}, Appendix III].

So, to obtain the outer bound to the rate equivocation region for the 3-receiver BC with 3 degraded message sets, we first find the outer bound $\mathcal{R}'_O$ for the 3-receiver BC using conditions \eqref{mkv_appx_o}.  Then we set $\tilde U_2 = U_1$ (as in case (1)) to obtain the outer bound to the rate equivocation region for the 3-receiver BC with 3 degraded message sets $\mathcal{R}_O$ with original conditions \eqref{mkv_o}.

For the 3-receiver BC with 2 degraded message sets (Type 1), we use the same procedure.

\newtheorem{relation}{Relation}

\subsection{Proof for the 3-receiver BC with 3 degraded message sets}\label{outer_gen}
In this section we show the proof for the outer bound in Theorem \ref{thm2}.  We use a $(2^{nR_0},2^{nR_1},2^{nR_2},n)$-code with error probability $P_e^{(n)}$ with the code construction so that we have the Markov chain condition $(W_0 , W_1, W_2)$ $\to \mathbf{X}\to (\mathbf{Y}_1,\mathbf{Y}_2,\mathbf{Y}_3)$.  Then, the probability distribution on ${\cal W}_0 \times {\cal W}_1 \times {\cal W}_2 \times \mathcal{X}^n \times \mathcal{Y}_1^n \times \mathcal{Y}_2^n \times \mathcal{Y}_3^n$ is given by
\begin{equation}\label{code_construct_3}
p(w_0 )p(w_1 )p(w_2)p(\mathbf{x}|w_0 ,w_1,w_2 ) \prod_{i=1}^n p(y_{1i}, y_{2i},y_{3i} | x_i ).
\end{equation}

By Fano's inequality, we have
\begin{equation}
\left\{\begin{aligned}
H(W_0 | \mathbf{Y}_k) & \leq nR_0 P_e^{(n)} +1 \triangleq n\gamma_k,~k=1,2,3,\\
H(W_0 , W_1 | \mathbf{Y}_1) & \leq n(R_0 +R_1)P_e^{(n)} +1 \triangleq n\gamma_4 , \\
H(W_0 , W_1 | \mathbf{Y}_2) & \leq n(R_0 +R_1)P_e^{(n)} +1 \triangleq n\gamma_5 , \\
H(W_0 , W_2 | \mathbf{Y}_1) & \leq n(R_0 +R_2)P_e^{(n)} +1 \triangleq n\gamma_6 , \\
H(W_0 , W_1, W_2 | \mathbf{Y}_1) & \leq n(R_0 +R_1 + R_2)P_e^{(n)} +1 \triangleq n\gamma_7, \\
H(W_0 , W_2 | \mathbf{Y}_2) & \leq n(R_0 +R_2)P_e^{(n)} +1 \triangleq n\gamma_8,
\end{aligned}\right.
\end{equation}
where $\gamma_k \to 0$ if $P^{(n)}_e \to 0~\forall k$.  Now we want to define the auxiliary random variables
$U_{1,i}$, $\tilde U_{2,i}$, $U_{2,i}$,  $U_{3,i}$, satisfying the conditions
\begin{subequations}\label{mkv_oi_s}
\begin{align}
U_{1,i}\to \tilde U_{2,i} \to U_{2,i}\to (U_{3,i},X_i)\to (Y_{1,i}, Y_{2,i}, Y_{3,i}), \\
U_{1,i}\to U_{3,i}\to (\tilde U_{2,i},U_{2,i},X_i)\to (Y_{1,i}, Y_{2,i}, Y_{3,i}),  \\
U_{1,i}\to (\tilde U_{2,i},U_{2,i},U_{3,i}) \to X_i \to (Y_{1,i}, Y_{2,i}, Y_{3,i}),
\end{align}
\end{subequations}
for all $i$.  When we have derived the outer bounds for the rates for conditions \eqref{mkv_oi_s}, we can then set $\tilde U_{2,i} = U_{1,i}$ to obtain the rates for the original conditions
\begin{subequations}\label{mkv_oi_o}
\begin{align}
U_{1,i}\to U_{2,i}\to (U_{3,i},X_i)\to (Y_{1,i}, Y_{2,i}, Y_{3,i}), \label{mkv_oi_o_1}\\
U_{1,i}\to U_{3,i}\to (U_{2,i},X_i)\to (Y_{1,i}, Y_{2,i}, Y_{3,i}), \label{mkv_oi_o_2} \\
U_{1,i}\to (U_{2,i},U_{3,i}) \to X_i \to (Y_{1,i}, Y_{2,i}, Y_{3,i}). \label{mkv_oi_o_3}
\end{align}
\end{subequations}
Here, however, we will define the auxiliary random variables $U_{1,i}\triangleq(W_0,
\mathbf{Y}_1^{i-1})$, $\tilde U_{2,i}\triangleq(U_{1,i}, \tilde {\mathbf{Y}}_2^{i+1})$, $U_{2,i} = W_1$,  $U_{3,i}\triangleq(U_{1,i}, \tilde{\mathbf{Y}}_3^{i+1})$ which satisfy the conditions
\begin{subequations}\label{mkv_oi}
\begin{align}
U_{1,i}\to (\tilde U_{2,i},U_{2,i}) \to (U_{3,i},X_i)\to (Y_{1,i}, Y_{2,i}, Y_{3,i}), \label{mkv_oi_1}\\
U_{1,i}\to U_{3,i}\to (\tilde U_{2,i},U_{2,i},X_i)\to (Y_{1,i}, Y_{2,i}, Y_{3,i}), \label{mkv_oi_2} \\
U_{1,i}\to (\tilde U_{2,i},U_{2,i},U_{3,i}) \to X_i \to (Y_{1,i}, Y_{2,i}, Y_{3,i}), \label{mkv_oi_3}
\end{align}
\end{subequations}
for all $i$, which are weaker than and included in conditions \eqref{mkv_oi_s}.  By setting $\tilde U_{2,i} = U_{1,i}$ in \eqref{mkv_oi}, we still obtain the original conditions \eqref{mkv_oi_o}.  Thus we use \eqref{mkv_oi} in our subsequent derivation for the outer bound.

We first prove three relations which are a consequence of \eqref{mkv_oi}.

\begin{relation}\label{relation1}
$I(\tilde {\mathbf{Y}}_3^{i+1};Y_{k,i}|W_0,\mathbf{Y}_1^{i-1})=I(\tilde {\mathbf{Y}}_2^{i+1},\tilde {\mathbf{Y}}_3^{i+1};Y_{k,i}|W_0,\mathbf{Y}_1^{i-1})=I(\tilde {\mathbf{Y}}_2^{i+1};Y_{k,i}|W_0,\mathbf{Y}_1^{i-1}), k=1,2,3$.
\end{relation}
\begin{proof}
For any $Y_{k,i}$, $k=1,2,3$, we have
\begin{align}
	\nonumber I(\tilde {\mathbf{Y}}_3^{i+1};Y_{k,i}|W_0,\mathbf{Y}_1^{i-1}) & = I(\tilde {\mathbf{Y}}_2^{i+1},\tilde {\mathbf{Y}}_3^{i+1};Y_{k,i}|W_0,\mathbf{Y}_1^{i-1}) - I(\tilde {\mathbf{Y}}_2^{i+1}; Y_{k,i}|W_0,\mathbf{Y}_1^{i-1},\tilde {\mathbf{Y}}_3^{i+1}) \\
	\nonumber & \stackrel{(a)}{=}  I(\tilde {\mathbf{Y}}_2^{i+1},\tilde {\mathbf{Y}}_3^{i+1};Y_{k,i}|W_0,\mathbf{Y}_1^{i-1}) \\
	\nonumber & = I(\tilde {\mathbf{Y}}_2^{i+1};Y_{k,i}|W_0,\mathbf{Y}_1^{i-1}) + I(\tilde {\mathbf{Y}}_3^{i+1};Y_{k,i}|W_0,\mathbf{Y}_1^{i-1},\tilde {\mathbf{Y}}_2^{i+1}) \\
	 & \stackrel{(b)}{=} I(\tilde {\mathbf{Y}}_2^{i+1};Y_{k,i}|W_0,\mathbf{Y}_1^{i-1}),
\end{align}
where (a) is due to $I(\tilde {\mathbf{Y}}_2^{i+1}; Y_{k,i}|W_0,\mathbf{Y}_1^{i-1},\tilde {\mathbf{Y}}_3^{i+1})=I(\tilde{U}_{2,i};Y_{k,i}|U_{3,i}) = 0$ by \eqref{mkv_oi_1} and (b) is due to the fact that $I(\tilde {\mathbf{Y}}_3^{i+1};Y_{k,i}|W_0,\mathbf{Y}_1^{i-1},\tilde {\mathbf{Y}}_2^{i+1}) = I(U_{3,i};Y_{k,i}|\tilde{U}_{2,i})=0$ by \eqref{mkv_oi_2}.
\end{proof}

\begin{relation}\label{relation2}
$I(\tilde {\mathbf{Y}}_3^{i+1};Y_{k,i}|W_0,\mathbf{Y}_1^{i-1})=I(W_1,\tilde {\mathbf{Y}}_3^{i+1};Y_{k,i}|W_0,\mathbf{Y}_1^{i-1})=I(W_1;Y_{k,i}|W_0,\mathbf{Y}_1^{i-1}), k=1,2,3$.
\end{relation}
\begin{proof}
For any $Y_{k,i}$, $k=1,2,3$, we have
\begin{align}
	\nonumber I(\tilde {\mathbf{Y}}_3^{i+1};Y_{k,i}|W_0,\mathbf{Y}_1^{i-1}) & = I(W_1,\tilde {\mathbf{Y}}_3^{i+1};Y_{k,i}|W_0,\mathbf{Y}_1^{i-1}) - I(W_1; Y_{k,i}|W_0,\mathbf{Y}_1^{i-1},\tilde {\mathbf{Y}}_3^{i+1}) \\
	\nonumber & \stackrel{(a)}{=}  I(W_1,\tilde {\mathbf{Y}}_3^{i+1};Y_{k,i}|W_0,\mathbf{Y}_1^{i-1}) \\
	\nonumber & = I(W_1;Y_{k,i}|W_0,\mathbf{Y}_1^{i-1}) + I(\tilde {\mathbf{Y}}_3^{i+1};Y_{k,i}|W_0,W_1,\mathbf{Y}_1^{i-1}) \\
	 & \stackrel{(b)}{=} I(W_1;Y_{k,i}|W_0,\mathbf{Y}_1^{i-1}),
\end{align}
where (a) is due to $I(W_1; Y_{k,i}|W_0,\mathbf{Y}_1^{i-1},\tilde {\mathbf{Y}}_3^{i+1})=I(U_{2,i};Y_{k,i}|U_{3,i}) = 0$ by \eqref{mkv_oi_1} and (b) is due to the fact that $I(\tilde {\mathbf{Y}}_3^{i+1};Y_{k,i}|W_0,W_1,\mathbf{Y}_1^{i-1}) = I(U_{3,i};Y_{k,i}|U_{2,i},U_{1,i})=0$ by \eqref{mkv_oi_2}.
\end{proof}

\begin{relation}\label{relation3}
$I(\tilde {\mathbf{Y}}_2^{i+1};Y_{k,i}|W_0,W_1,\mathbf{Y}_1^{i-1})=I(\tilde {\mathbf{Y}}_3^{i+1};Y_{k,i}|W_0,W_1,\mathbf{Y}_1^{i-1}), k=1,2,3$.
\end{relation}
\begin{proof}
For any $Y_{k,i}$, $k=1,2,3$, we have
\begin{align}
	\nonumber I(\tilde {\mathbf{Y}}_2^{i+1};Y_{k,i}|W_0,W_1,\mathbf{Y}_1^{i-1}) & = I(\tilde {\mathbf{Y}}_2^{i+1},\tilde {\mathbf{Y}}_3^{i+1};Y_{k,i}|W_0,W_1,\mathbf{Y}_1^{i-1})-I(\tilde {\mathbf{Y}}_3^{i+1};Y_{k,i}|W_0,W_1,\mathbf{Y}_1^{i-1},\tilde {\mathbf{Y}}_2^{i+1}) \\
	\nonumber & \stackrel{(a)}{=} I(\tilde {\mathbf{Y}}_2^{i+1},\tilde {\mathbf{Y}}_3^{i+1};Y_{k,i}|W_0,W_1,\mathbf{Y}_1^{i-1}) \\
	\nonumber & = I(\tilde {\mathbf{Y}}_3^{i+1};Y_{k,i}|W_0,W_1,\mathbf{Y}_1^{i-1}) + I(\tilde {\mathbf{Y}}_2^{i+1};Y_{k,i}|W_0,W_1,\mathbf{Y}_1^{i-1},\tilde {\mathbf{Y}}_3^{i+1}) \\
	& \stackrel{(b)}{=} I(\tilde {\mathbf{Y}}_3^{i+1};Y_{k,i}|W_0,W_1,\mathbf{Y}_1^{i-1}),
\end{align}
where (a) is due to $I(\tilde {\mathbf{Y}}_3^{i+1};Y_{k,i}|W_0,W_1,\mathbf{Y}_1^{i-1},\tilde {\mathbf{Y}}_2^{i+1})=I(U_{3,i};Y_{k,i}|U_{2,i},\tilde U_{2,i})=0$ by \eqref{mkv_oi_2}; and (b) is by $I(\tilde {\mathbf{Y}}_2^{i+1};Y_{k,i}|W_0,W_1,\mathbf{Y}_1^{i-1},\tilde {\mathbf{Y}}_3^{i+1})=I(\tilde U_{2,i};Y_{k,i}|U_{2,i},U_{3,i})=H(Y_{k,i}|U_{2,i},U_{3,i})-H(Y_{k,i}|\tilde U_{2,i},U_{2,i},U_{3,i})=0$ by \eqref{mkv_oi_1}.
\end{proof}

We begin by proving the outer bounds to the equivocation rates.  For $R_{1e}$, we have two possible choices corresponding to whether $\mathbf{X}$ is sent to $Y_1$ or $\mathbf{U}_2$ is sent to $Y_2$.  For the first case, we have
\begin{align}
\nonumber & nR_{1e} \leq H(W_1 | \mathbf{Y}_3) + n\tilde{\epsilon}_1 \;\; \mbox{(by secrecy condition)}\\
\nonumber & = H(W_1 | \mathbf{Y}_3, W_0) + I(W_1 ; W_0 | \mathbf{Y}_3 ) + n\tilde{\epsilon}_1 \\
\nonumber & = H(W_1 | W_0 ) - I(W_1 ; \mathbf{Y}_3 | W_0) +  I(W_1 ; W_0 | \mathbf{Y}_3 ) + n\tilde{\epsilon}_1 \\
\nonumber & = I(W_1 ; \mathbf{Y}_1 | W_0) + H(W_1 | W_0 , \mathbf{Y}_1) - I(W_1 ; \mathbf{Y}_3 | W_0) +  I(W_1 ; W_0 | \mathbf{Y}_3 ) + n\tilde{\epsilon}_1 \\
\nonumber & = I(W_1 ; \mathbf{Y}_1 | W_0) - I(W_1 ; \mathbf{Y}_3 | W_0) + H(W_0 | \mathbf{Y}_3) - H(W_0 | W_1 , \mathbf{Y}_3) + H(W_1 | W_0 , \mathbf{Y}_1) + n\tilde{\epsilon}_1 \\
\nonumber & \leq  I(W_1 ; \mathbf{Y}_1 | W_0) - I(W_1 ; \mathbf{Y}_3 | W_0) + H(W_0 | \mathbf{Y}_3) + H(W_1 | W_0 , \mathbf{Y}_1) + n\tilde{\epsilon}_1 \\
 & \stackrel{(a)}{\le} I(W_1 ; \mathbf{Y}_1 | W_0) - I(W_1 ; \mathbf{Y}_3 | W_0) + n(\tilde{\epsilon}_1 + \gamma_3 + \gamma_4) \label{r1_cnv},
\end{align}
where (a) is by Fano's inequality. Expanding the first two terms of (a) by the chain rule, we obtain
\begin{subequations}
\begin{align}
I(W_1 ; \mathbf{Y}_1 | W_0)& = \sum_{i=1}^n I(W_1 ; Y_{1,i}| W_0 , \mathbf{Y}_1^{i-1}), \label{mi_cnv_exp1} \\
I(W_1 ; \mathbf{Y}_3 | W_0)& = \sum_{i=1}^n I(W_1 ; Y_{3,i}| W_0 , \tilde {\mathbf{Y}}_3^{i+1}). \label{mi_cnv_exp2}
\end{align}
\end{subequations}
Now we have
\begin{equation}\label{r1_initial}
nR_{1e} \leq \sum_{i=1}^n \left[
I(W_1;Y_{1,i}|W_0,\mathbf{Y}_1^{i-1})
-I(W_1;Y_{3,i} |W_0,\tilde{\mathbf{Y}}_3^{i+1})\right] +n(\tilde{\epsilon}_1+\gamma_3+\gamma_4).
\end{equation}
The terms under the summation can be bounded by
\begin{align}
\nonumber & I(W_1;Y_{1,i}|W_0,\mathbf{Y}_1^{i-1}) -I(W_1;Y_{3,i} |W_0,\tilde{\mathbf{Y}}_3^{i+1}) \\
 \nonumber & = I(W_1,\tilde{\mathbf{Y}}_3^{i+1};Y_{1,i}|W_0,\mathbf{Y}_1^{i-1}) -  I(W_1,\mathbf{Y}_1^{i-1};Y_{3,i}|W_0,\tilde{\mathbf{Y}}_3^{i+1}) - I(\tilde{\mathbf{Y}}_3^{i+1};Y_{1,i}|W_0,W_1,\mathbf{Y}_1^{i-1}) \\
 \nonumber & \;\;  + I(\mathbf{Y}_1^{i-1};Y_{3,i}|W_0,W_1,\tilde{\mathbf{Y}}_3^{i+1}) \\
 \nonumber & \stackrel{(a)}{=} I(W_1,\tilde{\mathbf{Y}}_3^{i+1};Y_{1,i}|W_0,\mathbf{Y}_1^{i-1}) - I(W_1,\mathbf{Y}_1^{i-1};Y_{3,i}|W_0,\tilde{\mathbf{Y}}_3^{i+1}) - I(\tilde{\mathbf{Y}}_3^{i+1};Y_{1,i}|W_0,W_1,\mathbf{Y}_1^{i-1}) \\
 \nonumber & \;\;  + I(\tilde{\mathbf{Y}}_3^{i+1};Y_{1,i}|W_0,W_1,\mathbf{Y}_1^{i-1}) \\
 \nonumber & = I(W_1,\tilde{\mathbf{Y}}_3^{i+1};Y_{1,i}|W_0,\mathbf{Y}_1^{i-1}) - I(W_1,\mathbf{Y}_1^{i-1};Y_{3,i}|W_0,\tilde{\mathbf{Y}}_3^{i+1}) \\
 \nonumber & = I(X_i,W_1,\tilde{\mathbf{Y}}_3^{i+1};Y_{1,i}|W_0,\mathbf{Y}_1^{i-1}) - I(X_i,W_1,\mathbf{Y}_1^{i-1};Y_{3,i}|W_0,\tilde{\mathbf{Y}}_3^{i+1}) \\
 \nonumber & \;\; - [I(X_i;Y_{1,i}|W_0,W_1,\mathbf{Y}_1^{i-1},\tilde{\mathbf{Y}}_3^{i+1}) - I(X_i;Y_{3,i}|W_0,W_1,\mathbf{Y}_1^{i-1},\tilde{\mathbf{Y}}_3^{i+1}) ] \\
 \nonumber & \stackrel{(b)}{\leq} I(X_i,W_1,\tilde{\mathbf{Y}}_3^{i+1};Y_{1,i}|W_0,\mathbf{Y}_1^{i-1}) - I(X_i,W_1,\mathbf{Y}_1^{i-1};Y_{3,i}|W_0,\tilde{\mathbf{Y}}_3^{i+1}) \\
 \nonumber & \stackrel{(c)}{=} I(X_i;Y_{1,i}|W_0,\mathbf{Y}_1^{i-1},\tilde{\mathbf{Y}}_3^{i+1}) - I(X_i;Y_{3,i}|W_0,\mathbf{Y}_1^{i-1},\tilde{\mathbf{Y}}_3^{i+1}) + I(\tilde{\mathbf{Y}}_3^{i+1};Y_{1,i}|W_0,\mathbf{Y}_1^{i-1}) \\
 \nonumber & \;\; - I(\mathbf{Y}_1^{i-1};Y_{3,i}|W_0,\tilde{\mathbf{Y}}_3^{i+1}) \\
 \nonumber & \stackrel{(d)}{=} I(X_i;Y_{1,i}|W_0,\mathbf{Y}_1^{i-1},\tilde{\mathbf{Y}}_3^{i+1}) - I(X_i;Y_{3,i}|W_0,\mathbf{Y}_1^{i-1},\tilde{\mathbf{Y}}_3^{i+1}) + I(\tilde{\mathbf{Y}}_3^{i+1};Y_{1,i}|W_0,\mathbf{Y}_1^{i-1}) \\
  \nonumber & \;\; - I(\tilde{\mathbf{Y}}_3^{i+1};Y_{1,i}|W_0,\mathbf{Y}_1^{i-1}) \\
  \nonumber & = I(X_i;Y_{1,i}|W_0,\mathbf{Y}_1^{i-1},\tilde{\mathbf{Y}}_3^{i+1}) - I(X_i;Y_{3,i}|W_0,\mathbf{Y}_1^{i-1},\tilde{\mathbf{Y}}_3^{i+1}) \\
  \nonumber & = I(X_i;Y_{1,i}|W_0,\mathbf{Y}_1^{i-1},\tilde{\mathbf{Y}}_3^{i+1}) - [I(X_i,\tilde{\mathbf{Y}}_3^{i+1};Y_{3,i}|W_0,\mathbf{Y}_1^{i-1})-I(\tilde{\mathbf{Y}}_3^{i+1};Y_{3,i}|W_0,\mathbf{Y}_1^{i-1})] \\
  \nonumber & \stackrel{(e)}{=} I(X_i;Y_{1,i}|W_0,\mathbf{Y}_1^{i-1},\tilde{\mathbf{Y}}_3^{i+1}) - [I(X_i;Y_{3,i}|W_0,\mathbf{Y}_1^{i-1})-I(\tilde{\mathbf{Y}}_2^{i+1};Y_{3,i}|W_0,\mathbf{Y}_1^{i-1})] \\
  & = I(X_i;Y_{1,i}|U_{3,i})-[I(X_i;Y_{3,i} |U_{1,i})-I(\tilde U_2 ; Y_{3,i} |U_{1,i})] \stackrel{(f)}{=} I(X_i;Y_{1,i}|U_{3,i})-I(X_i;Y_{3,i} |U_{1,i}), \label{r1_1_1}
\end{align}
where (a) has last term by [\ref{ck}, Lemma 7] so that
\[
\sum_{i=1}^n I(\tilde{\mathbf{Y}}_3^{i+1};Y_{1,i}|W_0,W_1,\mathbf{Y}_1^{i-1}) = \sum_{i=1}^n I(\mathbf{Y}_1^{i-1};Y_{3,i}|W_0,W_1,\tilde{\mathbf{Y}}_3^{i+1});
\]
(b) is due to $[I(X_i;Y_{1,i}|W_0,W_1,\mathbf{Y}_1^{i-1},\tilde{\mathbf{Y}}_3^{i+1}) - I(X_i;Y_{3,i}|W_0,W_1,\mathbf{Y}_1^{i-1},\tilde{\mathbf{Y}}_3^{i+1}) ] \geq 0$ by the fact that $Y_1$ is a more capable channel than $Y_3$ along with the fact that it may be verified using a functional dependency graph \cite{kramer_bk} that $(W_0,W_1,\mathbf{Y}_1^{i-1},\tilde{\mathbf{Y}}_3^{i+1}) \to X_i \to (Y_{1,i},Y_{3,i})$ forms a Markov chain, so the more capable channel condition is satisfied \cite{korner_77}; (c) is because $I(W_1;Y_{1,i}|W_0,\mathbf{Y}_1^{i-1},\tilde{\mathbf{Y}}_3^{i+1},X_i)=0$ by
\begin{align}\label{w1_indp}
\nonumber & I(W_1;Y_{1,i}|W_0,\mathbf{Y}_1^{i-1},\tilde{\mathbf{Y}}_3^{i+1},X_i) = H(Y_{1,i}|W_0,\mathbf{Y}_1^{i-1},\tilde{\mathbf{Y}}_3^{i+1},X_i) - H(Y_{1,i}|W_0,W_1,\mathbf{Y}_1^{i-1},\tilde{\mathbf{Y}}_3^{i+1},X_i) \\
 & =H(Y_{1,i}|W_0,\mathbf{Y}_1^{i-1},\tilde{\mathbf{Y}}_3^{i+1},X_i) - H(Y_{1,i}|W_0,\mathbf{Y}_1^{i-1},\tilde{\mathbf{Y}}_3^{i+1},X_i)=0,
\end{align}
where the second equality is obtained using the relation $W_1 \to X_i \to Y_i$ on the second term on the right-hand-side, and similarly $I(W_1;Y_{3,i}|W_0,\mathbf{Y}_1^{i-1},\tilde{\mathbf{Y}}_3^{i+1},X_i)=0$; (d) has last term by [\ref{ck}, Lemma 7] so that
\[
\sum_{i=1}^n I(\tilde{\mathbf{Y}}_3^{i+1};Y_{1,i}|W_0,\mathbf{Y}_1^{i-1}) = \sum_{i=1}^n I(\mathbf{Y}_1^{i-1};Y_{3,i}|W_0,\tilde{\mathbf{Y}}_3^{i+1});
\]
(e) has the first term in the square brackets by the fact that
$I(\tilde{\mathbf{Y}}_3^{i+1};Y_{3,i}|W_0,\mathbf{Y}_1^{i-1},X_i)=0$ since given $X_i$, $\tilde{\mathbf{Y}}_3^{i+1}$ is independent of $Y_{3,i}$ as may be seen using a functional dependency graph, and the second term in the square brackets $I(\tilde{\mathbf{Y}}_3^{i+1};Y_{3,i}|W_0,\mathbf{Y}_1^{i-1})=I(\tilde{\mathbf{Y}}_2^{i+1};Y_{3,i}|W_0,\mathbf{Y}_1^{i-1})$ by Relation \ref{relation1}; and (f) is by substituting $\tilde U_{2,i}= U_{1,i}$.

Then, we have
\begin{equation}\label{r1_final1}
nR_{1e} \leq \sum_{i=1}^n \left[ I(X_i;Y_{1,i}|U_{3,i})-I(X_i;Y_{3,i} |U_{1,i})\right] +n(\tilde{\epsilon}_1+\gamma_3+\gamma_4).
\end{equation}

Next consider the rate for $W_1$ sent to receiver $Y_2$.  We have, following \eqref{r1_cnv},
\begin{align}
\nonumber nR_{1e} & \le I(W_1 ; \mathbf{Y}_2 | W_0) - I(W_1 ; \mathbf{Y}_3 | W_0) + H(W_0 | \mathbf{Y}_3) + H(W_1 | W_0, \mathbf{Y}_2) + n\epsilon_1 \\
 & \stackrel{(a)}{\le} I(W_1 ; \mathbf{Y}_2 | W_0) - I(W_1 ; \mathbf{Y}_3 | W_0) + n(\epsilon_1 + \gamma_3 + \gamma_5) \label{r1_cnv2},
\end{align}
where (a) is by Fano's inequality.  For the first two terms in \eqref{r1_cnv2}, we have
\begin{align}
 \nonumber & I(W_1 ; \mathbf{Y}_2 | W_0) - I(W_1 ; \mathbf{Y}_3 | W_0) = \sum_{i=1}^n \left[ I(W_1;Y_{2,i}| W_0, \tilde{\mathbf{Y}}_2^{i+1}) - I(W_1;Y_{3,i}| W_0, \tilde{\mathbf{Y}}_3^{i+1}) \right] \\
 \nonumber & = \sum_{i=1}^n \left[ I(W_1,\mathbf{Y}_1^{i-1};Y_{2,i}| W_0, \tilde{\mathbf{Y}}_2^{i+1}) - I(W_1,\mathbf{Y}_1^{i-1};Y_{3,i}| W_0, \tilde{\mathbf{Y}}_3^{i+1}) - I(\mathbf{Y}_1^{i-1};Y_{2,i}|W_0,W_1,\tilde{\mathbf{Y}}_2^{i+1}) \right. \\
 \nonumber & \;\;\left. + I(\mathbf{Y}_1^{i-1};Y_{3,i}|W_0,W_1,\tilde{\mathbf{Y}}_3^{i+1}) \right] \\
 \nonumber & \stackrel{(a)}{=} \sum_{i=1}^n \left[ I(W_1,\mathbf{Y}_1^{i-1};Y_{2,i}| W_0, \tilde{\mathbf{Y}}_2^{i+1}) - I(W_1,\mathbf{Y}_1^{i-1};Y_{3,i}| W_0, \tilde{\mathbf{Y}}_3^{i+1}) - I(\tilde{\mathbf{Y}}_2^{i+1};Y_{1,i}|W_0,W_1,\mathbf{Y}_1^{i-1}) \right. \\
  & \;\;\left. + I(\tilde{\mathbf{Y}}_3^{i+1};Y_{1,i}|W_0,W_1,\mathbf{Y}_1^{i-1}) \right] \label{r1e_u2_1} \\
 \nonumber & \stackrel{(b)}{=} \sum_{i=1}^n \left[ I(W_1,\mathbf{Y}_1^{i-1};Y_{2,i}| W_0, \tilde{\mathbf{Y}}_2^{i+1}) - I(W_1,\mathbf{Y}_1^{i-1};Y_{3,i}| W_0, \tilde{\mathbf{Y}}_3^{i+1}) \right] \\
 \nonumber &  = \sum_{i=1}^n \left[ I(\mathbf{Y}_1^{i-1};Y_{2,i}| W_0, \tilde{\mathbf{Y}}_2^{i+1}) + I(W_1;Y_{2,i}| W_0,\mathbf{Y}_1^{i-1}, \tilde{\mathbf{Y}}_2^{i+1}) - I(X_i,W_1,\mathbf{Y}_1^{i-1};Y_{3,i}| W_0, \tilde{\mathbf{Y}}_3^{i+1}) \right. \\
 \nonumber & \;\; \left. + I(X_i;Y_{3,i}|W_0,W_1,\mathbf{Y}_1^{i-1},\tilde{\mathbf{Y}}_3^{i+1}) \right] \\
 \nonumber & \stackrel{(c)}{\leq} \sum_{i=1}^n \left[ I(\tilde{\mathbf{Y}}_2^{i+1};Y_{1,i}| W_0,\mathbf{Y}_1^{i-1}) + I(W_1;Y_{2,i}| W_0,\mathbf{Y}_1^{i-1}, \tilde{\mathbf{Y}}_2^{i+1}) - I(\mathbf{Y}_1^{i-1};Y_{3,i}| W_0, \tilde{\mathbf{Y}}_3^{i+1}) \right. \\
 & \;\; \left. - I(X_i;Y_{3,i}|W_0,\mathbf{Y}_1^{i-1},\tilde{\mathbf{Y}}_3^{i+1}) + I(X_i;Y_{3,i}|W_0,W_1,\mathbf{Y}_1^{i-1},\tilde{\mathbf{Y}}_3^{i+1}) \right] \\
 \nonumber & \stackrel{(d)}{=} \sum_{i=1}^n \left[ I(\tilde{\mathbf{Y}}_2^{i+1};Y_{1,i}| W_0,\mathbf{Y}_1^{i-1}) + I(W_1;Y_{2,i}| W_0,\mathbf{Y}_1^{i-1}, \tilde{\mathbf{Y}}_2^{i+1}) - I(\tilde{\mathbf{Y}}_3^{i+1};Y_{1,i}| W_0, \mathbf{Y}_1^{i-1})  \right. \\
 \nonumber & \;\;  \left. - I(X_i;Y_{3,i}|W_0,\mathbf{Y}_1^{i-1}) + I(\tilde{\mathbf{Y}}_3^{i+1};Y_{3,i}|W_0,\mathbf{Y}_1^{i-1}) + I(X_i;Y_{3,i}|W_0,\mathbf{Y}_1^{i-1}) - I(W_1,\tilde{\mathbf{Y}}_3^{i+1};Y_{3,i}|W_0,\mathbf{Y}_1^{i-1}) \right] \\
  \nonumber & \stackrel{(e)}{\leq} \sum_{i=1}^n \left[ I(W_1;Y_{2,i}| W_0,\mathbf{Y}_1^{i-1}, \tilde{\mathbf{Y}}_2^{i+1}) + I(\tilde{\mathbf{Y}}_2^{i+1};Y_{3,i}|W_0,\mathbf{Y}_1^{i-1}) - I(W_1;Y_{3,i}|W_0,\mathbf{Y}_1^{i-1}) \right] \\
  & = \sum_{i=1}^n \left[ I(U_{2,i};Y_{2,i}| \tilde U_{2,i}, U_{1,i}) + I(\tilde U_{2,i};Y_{3,i}|U_{1,i}) - I(U_{2,i};Y_{3,i}|U_{1,i}) \right] = \sum_{i=1}^n \left[ I(U_{2,i};Y_{2,i}| U_{1,i}) - I(U_{2,i};Y_{3,i}|U_{1,i}) \right]\label{r1_y2_term1}
\end{align}
where (a) has the last two terms by [\ref{ck}, Lemma 7] which gives
\begin{align}
\nonumber & \sum_{i=1}^n I(\mathbf{Y}_1^{i-1};Y_{2,i}|W_0,W_1,\tilde{\mathbf{Y}}_2^{i+1}) = \sum_{i=1}^n I(\tilde{\mathbf{Y}}_2^{i+1};Y_{1,i}|W_0,W_1,\mathbf{Y}_1^{i-1}),\\
\nonumber & \sum_{i=1}^n I(\mathbf{Y}_1^{i-1};Y_{3,i}|W_0,W_1,\tilde{\mathbf{Y}}_3^{i+1})=\sum_{i=1}^n I(\tilde{\mathbf{Y}}_3^{i+1};Y_{1,i}|W_0,W_1,\mathbf{Y}_1^{i-1});
\end{align}
(b) is because the last two terms in \eqref{r1e_u2_1} above are $I(\tilde{\mathbf{Y}}_2^{i+1};Y_{1,i}|W_0,W_1,\mathbf{Y}_1^{i-1}) = I(\tilde{\mathbf{Y}}_3^{i+1};Y_{1,i}|W_0,W_1,\mathbf{Y}_1^{i-1})=I(U_{3,i};Y_{1,i}|U_{2,i},U_{1,i})=0$ by Relation \ref{relation3} and \eqref{mkv_oi_2} and similarly $I(\tilde{\mathbf{Y}}_3^{i+1};Y_{1,i}|W_0,W_1,\mathbf{Y}_1^{i-1}) =0$; (c) is by [\ref{ck}, Lemma 7] which gives
\[
\sum_{i=1}^n I(\mathbf{Y}_1^{i-1};Y_{2,i}| W_0, \tilde{\mathbf{Y}}_2^{i+1}) = \sum_{i=1}^n I(\tilde{\mathbf{Y}}_2^{i+1};Y_{1,i}| W_0,\mathbf{Y}_1^{i-1});
\]
(d) is by [\ref{ck}, Lemma 7] from which
\[
\sum_{i=1}^n I(\mathbf{Y}_1^{i-1};Y_{3,i}| W_0, \tilde{\mathbf{Y}}_3^{i+1})=\sum_{i=1}^n I(\tilde{\mathbf{Y}}_3^{i+1};Y_{1,i}| W_0, \mathbf{Y}_1^{i-1});
\]
and (e) is by $I(\tilde{\mathbf{Y}}_2^{i+1};Y_{1,i}|W_0,\mathbf{Y}_1^{i-1})=I(\tilde{\mathbf{Y}}_3^{i+1};Y_{1,i}|W_0,\mathbf{Y}_1^{i-1})$ and $I(\tilde{\mathbf{Y}}_3^{i+1};Y_{3,i}|W_0,\mathbf{Y}_1^{i-1})=I(\tilde{\mathbf{Y}}_2^{i+1};Y_{3,i}|W_0,\mathbf{Y}_1^{i-1})$ by Relation \ref{relation1}.  Consequently we have
\begin{equation}\label{r1_final2}
	nR_{1e} \leq \sum_{i=1}^n \left[ I(U_{2,i}; Y_{2,i}| U_{1,i}) - I(U_{2,i};Y_{3,i}|U_{1,i}) \right] + n(\epsilon_1 + \gamma_3 + \gamma_5).
\end{equation}

For the equivocation rate $R_{2e}$, we consider $W_2$ sent to receiver $Y_1$ using codeword $\mathbf{X}$. Following the same procedure to obtain \eqref{r1_cnv}, we get
\begin{align}
\nonumber  nR_{2e} & \leq H(W_2|\mathbf{Y}_3) + n\epsilon_2 \\
\nonumber  & \leq I(W_2 ; \mathbf{Y}_1|W_0) - I(W_2;\mathbf{Y}_3|W_0) + H(W_0|\mathbf{Y}_3) + H(W_2|W_0, \mathbf{Y}_1) \\
& \leq I(W_2 ; \mathbf{Y}_1|W_0) - I(W_2;\mathbf{Y}_3|W_0) + n(\epsilon_2 + \gamma_3 + \gamma_6), \label{r2_initiala}
\end{align}
by Fano's inequality.  Expanding the first two terms of the inequality above by the chain rule and following the same procedure as for $R_{1e}$ in \eqref{mi_cnv_exp1}, \eqref{mi_cnv_exp2} to \eqref{r1_initial}, we obtain
\begin{equation}\label{r2_initial}
nR_{2e} \leq \sum_{i=1}^n \left[
I(W_2;Y_{1,i}|W_0,\mathbf{Y}_1^{i-1})
-I(W_2;Y_{3,i} |W_0,\tilde{\mathbf{Y}}_3^{i+1})\right] +n(\epsilon_2 + \gamma_3+\gamma_6).
\end{equation}
The terms under the summation can be bounded as
\begin{subequations}
\begin{align}
\nonumber & I(W_2;Y_{1,i}|W_0,\mathbf{Y}_1^{i-1})
-I(W_2;Y_{3,i} |W_0,\tilde{\mathbf{Y}}_3^{i+1}) \\
\nonumber & = I(\tilde{\mathbf{Y}}_3^{i+1},W_2;Y_{1,i}|W_0,\mathbf{Y}_1^{i-1}) -I(W_2,\mathbf{Y}_1^{i-1};Y_{3,i} |W_0,\tilde{\mathbf{Y}}_3^{i+1}) - I(\tilde{\mathbf{Y}}_3^{i+1};Y_{1,i}|W_0,W_2,\mathbf{Y}_1^{i-1}) \\
\nonumber & \;\; + I(\mathbf{Y}_1^{i-1};Y_{3,i}|W_0,W_2,\tilde{\mathbf{Y}}_3^{i+1}) \\
\nonumber & \stackrel{(a)}{=} I(\tilde{\mathbf{Y}}_3^{i+1},W_2;Y_{1,i}|W_0,\mathbf{Y}_1^{i-1}) -I(W_2,\mathbf{Y}_1^{i-1};Y_{3,i} |W_0,\tilde{\mathbf{Y}}_3^{i+1}) - I(\tilde{\mathbf{Y}}_3^{i+1};Y_{1,i}|W_0,W_2,\mathbf{Y}_1^{i-1}) \\
 \nonumber & \;\; + I(\tilde{\mathbf{Y}}_3^{i+1};Y_{1,i}|W_0,W_2,\mathbf{Y}_1^{i-1}) \\
 \nonumber & = I(\tilde{\mathbf{Y}}_3^{i+1},W_2;Y_{1,i}|W_0,\mathbf{Y}_1^{i-1}) -I(W_2,\mathbf{Y}_1^{i-1};Y_{3,i} |W_0,\tilde{\mathbf{Y}}_3^{i+1}) \\
 \nonumber & = I(\tilde{\mathbf{Y}}_3^{i+1};Y_{1,i}|W_0,\mathbf{Y}_1^{i-1}) + I(W_2;Y_{1,i}|W_0,\mathbf{Y}_1^{i-1},\tilde{\mathbf{Y}}_3^{i+1}) - I(\mathbf{Y}_1^{i-1};Y_{3,i} |W_0,\tilde{\mathbf{Y}}_3^{i+1}) \\
 \nonumber & \;\; - I(W_2;Y_{3,i}|W_0,\mathbf{Y}_1^{i-1},\tilde{\mathbf{Y}}_3^{i+1}) \\
 \nonumber & \stackrel{(b)}{=} I(\tilde{\mathbf{Y}}_3^{i+1};Y_{1,i}|W_0,\mathbf{Y}_1^{i-1}) + I(W_2;Y_{1,i}|W_0,\mathbf{Y}_1^{i-1},\tilde{\mathbf{Y}}_3^{i+1}) - I(\tilde{\mathbf{Y}}_3^{i+1};Y_{1,i}|W_0,\mathbf{Y}_1^{i-1}) \\
 \nonumber & \;\; - I(W_2;Y_{3,i}|W_0,\mathbf{Y}_1^{i-1},\tilde{\mathbf{Y}}_3^{i+1}) \\
  & = I(W_2;Y_{1,i}|W_0,\mathbf{Y}_1^{i-1},\tilde{\mathbf{Y}}_3^{i+1}) - I(W_2;Y_{3,i}|W_0,\mathbf{Y}_1^{i-1},\tilde{\mathbf{Y}}_3^{i+1}) \label{r2_inter_a} \\
 \nonumber & = I(X_i,W_2;Y_{1,i}|W_0,\mathbf{Y}_1^{i-1},\tilde{\mathbf{Y}}_3^{i+1}) - I(X_i,W_2;Y_{3,i}|W_0,\mathbf{Y}_1^{i-1},\tilde{\mathbf{Y}}_3^{i+1}) \\
 \nonumber & \;\; - [I(X_i;Y_{1,i}|W_0,W_2,\mathbf{Y}_1^{i-1},\tilde{\mathbf{Y}}_3^{i+1}) - I(X_i;Y_{3,i}|W_0,W_2,\mathbf{Y}_1^{i-1},\tilde{\mathbf{Y}}_3^{i+1})] \\
 \nonumber & \stackrel{(c)}{\leq} I(X_i,W_2;Y_{1,i}|W_0,\mathbf{Y}_1^{i-1},\tilde{\mathbf{Y}}_3^{i+1}) - I(X_i,W_2;Y_{3,i}|W_0,\mathbf{Y}_1^{i-1},\tilde{\mathbf{Y}}_3^{i+1}) \\
 \nonumber & = I(X_i,W_2,\tilde{\mathbf{Y}}_3^{i+1};Y_{1,i}|W_0,\mathbf{Y}_1^{i-1}) - I(X_i,W_2,\tilde{\mathbf{Y}}_3^{i+1};Y_{3,i}|W_0,\mathbf{Y}_1^{i-1}) \\
 \nonumber & \;\; - [I(\tilde{\mathbf{Y}}_3^{i+1};Y_{1,i}|W_0,\mathbf{Y}_1^{i-1}) - I(\tilde{\mathbf{Y}}_3^{i+1};Y_{3,i}|W_0,\mathbf{Y}_1^{i-1})] \\
 \nonumber & \stackrel{(d)}{\leq} I(X_i;Y_{1,i}|W_0,\mathbf{Y}_1^{i-1}) - I(X_i;Y_{3,i}|W_0,\mathbf{Y}_1^{i-1}) - [I(W_1;Y_{1,i}|W_0,\mathbf{Y}_1^{i-1}) - I(W_1;Y_{3,i}|W_0,\mathbf{Y}_1^{i-1})] \\
 & = I(X_i;Y_{1,i}|U_{2,i})-I(X_i;Y_{3,i}|U_{2,i}) \label{r2_inter}
\end{align}
\end{subequations}
where (a) is by [\ref{ck}, Lemma 7] so that
\[
\sum_{i=1}^n I(\tilde{\mathbf{Y}}_3^{i+1};Y_{1,i}|W_0,W_2,\mathbf{Y}_1^{i-1}) = \sum_{i=1}^n I(\mathbf{Y}_1^{i-1};Y_{3,i}|W_0,W_2,\tilde{\mathbf{Y}}_3^{i+1}),
\]
(b) is also by [\ref{ck}, Lemma 7] giving
\[
\sum_{i=1}^n I(\tilde{\mathbf{Y}}_3^{i+1};Y_{1,i}|W_0,\mathbf{Y}_1^{i-1}) = \sum_{i=1}^n I(\mathbf{Y}_1^{i-1};Y_{3,i}|W_0,\tilde{\mathbf{Y}}_3^{i+1}),
\]
(c) is because $[I(X_i;Y_{1,i}|W_0,W_2,\mathbf{Y}_1^{i-1},\tilde{\mathbf{Y}}_3^{i+1}) - I(X_i;Y_{3,i}|W_0,W_2,\mathbf{Y}_1^{i-1},\tilde{\mathbf{Y}}_3^{i+1})] \ge 0$ since $Y_1$ is a more capable channel than $Y_3$ and $(W_0,W_2,\mathbf{Y}_1^{i-1},\tilde{\mathbf{Y}}_3^{i+1}) \to X_i \to (Y_{1,i},Y_{3,i})$ forms a Markov chain so satisfying the more capable channel condition, and (d) is due to firstly,
\begin{align}
	\nonumber & I(W_2,\tilde{\mathbf{Y}}_3^{i+1};Y_{1,i}|W_0,\mathbf{Y}_1^{i-1},X_i) = H(Y_{1,i}|W_0,\mathbf{Y}_1^{i-1},X_i) - H(Y_{1,i}|W_0,W_2,\tilde{\mathbf{Y}}_3^{i+1},\mathbf{Y}_1^{i-1},X_i)\\
	& = H(Y_{1,i}|W_0,\mathbf{Y}_1^{i-1},X_i) - H(Y_{1,i}|W_0,\mathbf{Y}_1^{i-1},X_i) = 0,
\end{align}
which is true since, given $X_i$, $\tilde{\mathbf{Y}}_3^{i+1}$ is independent of $Y_{3,i}$ as can be verified using a functional dependency graph, and by $W_2 \to X_i \to Y_{1,i}$; secondly, $I(W_2,\tilde{\mathbf{Y}}_3^{i+1};Y_{3,i}|W_0,\mathbf{Y}_1^{i-1},X_i)=0$ since given $X_i$, $\tilde{\mathbf{Y}}_3^{i+1}$ is independent of $Y_{3,i}$ and $W_2 \to X_i \to Y_{3,i}$; and thirdly, $I(\tilde{\mathbf{Y}}_3^{i+1};Y_{k,i}|W_0,\mathbf{Y}_1^{i-1}) = I(W_1;Y_{k,i}|W_0,\mathbf{Y}_1^{i-1})$ for $k=1,3$ from Relation \ref{relation2}.  Then, we shall have
\begin{equation}\label{r2_final}
nR_{2e} \leq \sum_{i=1}^n \left[ I(X_i;Y_{1,i}|U_{2,i}) - I(X_i ; Y_{3,i}| U_{2,i})\right] +n(\epsilon_2 + \gamma_3+\gamma_6).
\end{equation}

For the rates $(R_{1e} + R_{2e})$, consider the combined message $(W_1,W_2)$ sent to receiver $Y_1$ using codeword $\mathbf{X}$. It can be shown that
\begin{align}
\nonumber & n(R_{1e} + R_{2e}) \leq H(W_1,W_2|\mathbf{Y}_3) + \epsilon_{1,2}   \\
\nonumber  & \stackrel{(a)}{\leq}  I(W_1,W_2;\mathbf{Y}_1|W_0) - I(W_1,W_2;\mathbf{Y}_3|W_0) + H(W_0|\mathbf{Y}_3) + H(W_1,W_2|W_0,\mathbf{Y}_1) + \epsilon_{1,2} \\
\nonumber  & \stackrel{(b)}{\leq} I(W_1,W_2;\mathbf{Y}_1|W_0) - I(W_1,W_2;\mathbf{Y}_3|W_0) + n(\epsilon_{1,2}+ \gamma_3 + \gamma_7) \\
& \stackrel{(c)}{\leq} \sum_{i=1}^n \left[ I(W_1,W_2;Y_{1,i}|W_0,\mathbf{Y}_1^{i-1},\tilde{\mathbf{Y}}_3^{i+1}) - I(W_1,W_2;Y_{3,i}|W_0,\mathbf{Y}_1^{i-1},\tilde{\mathbf{Y}}_3^{i+1})\right] + n(\epsilon_{1,2}+ \gamma_3 + \gamma_7) \label{r012_initial}
\end{align}
where (a) results in following the steps in \eqref{r1_cnv} using $(W_1,W_2)$ instead of $W_1$, (b) is by Fano's inequality, and (c) results in following the steps to obtain \eqref{r2_inter_a} using $(W_1,W_2)$ instead of $W_2$.  The terms under the summation can be bounded as
\begin{align}
	\nonumber & I(W_1,W_2;Y_{1,i}|W_0,\mathbf{Y}_1^{i-1},\tilde{\mathbf{Y}}_3^{i+1}) - I(W_1,W_2;Y_{3,i}|W_0,\mathbf{Y}_1^{i-1},\tilde{\mathbf{Y}}_3^{i+1}) \\
	\nonumber & = I(X_i,W_1,W_2;Y_{1,i}|W_0,\mathbf{Y}_1^{i-1},\tilde{\mathbf{Y}}_3^{i+1}) - I(X_i,W_1,W_2;Y_{3,i}|W_0,\mathbf{Y}_1^{i-1},\tilde{\mathbf{Y}}_3^{i+1}) \\
	\nonumber & \;\; - [I(X_i;Y_{1,i}|W_0,W_1,W_2,\mathbf{Y}_1^{i-1},\tilde{\mathbf{Y}}_3^{i+1}) - I(X_i;Y_{3,i}|W_0,W_1,W_2,\mathbf{Y}_1^{i-1},\tilde{\mathbf{Y}}_3^{i+1})] \\
	\nonumber & \stackrel{(a)}{\leq} I(X_i,W_1,W_2;Y_{1,i}|W_0,\mathbf{Y}_1^{i-1},\tilde{\mathbf{Y}}_3^{i+1}) - I(X_i,W_1,W_2;Y_{3,i}|W_0,\mathbf{Y}_1^{i-1},\tilde{\mathbf{Y}}_3^{i+1}) \\
	\nonumber & = I(X_i,W_1,W_2,\tilde{\mathbf{Y}}_3^{i+1};Y_{1,i}|W_0,\mathbf{Y}_1^{i-1}) - I(X_i,W_1,W_2,\tilde{\mathbf{Y}}_3^{i+1};Y_{3,i}|W_0,\mathbf{Y}_1^{i-1}) \\
	\nonumber & \;\; - [I(\tilde{\mathbf{Y}}_3^{i+1};Y_{1,i}|W_0,\mathbf{Y}_1^{i-1})-I(\tilde{\mathbf{Y}}_3^{i+1};Y_{3,i}|W_0,\mathbf{Y}_1^{i-1})] \\
	\nonumber & \stackrel{(b)}{=} I(X_i;Y_{1,i}|W_0,\mathbf{Y}_1^{i-1}) - I(X_i;Y_{3,i}|W_0,\mathbf{Y}_1^{i-1}) - [I(\tilde{\mathbf{Y}}_2^{i+1};Y_{1,i}|W_0,\mathbf{Y}_1^{i-1})-I(\tilde{\mathbf{Y}}_2^{i+1};Y_{3,i}|W_0,\mathbf{Y}_1^{i-1})] \\
	\nonumber & = I(X_i;Y_{1,i}|U_{1,i}) - I(X_i;Y_{3,i}|U_{1,i}) - [I(\tilde U_{2,i};Y_{1,i}|U_{1,i}) - I(\tilde U_{2,i};Y_{3,i}|U_{1,i})] \\
	& \stackrel{(c)}{=} I(X_i;Y_{1,i}|U_{1,i}) - I(X_i;Y_{3,i}|U_{1,i}), \label{r12e_inter}
\end{align}
where (a) is due to $Y_1$ being a more capable channel than $Y_1$ which gives $[I(X_i;Y_{1,i}|W_0,W_1,W_2,\mathbf{Y}_1^{i-1},\tilde{\mathbf{Y}}_3^{i+1}) - I(X_i;Y_{3,i}|W_0,W_1,W_2,\mathbf{Y}_1^{i-1},\tilde{\mathbf{Y}}_3^{i+1})] \geq 0$ for $(W_0,W_1,W_2,\mathbf{Y}_1^{i-1},\tilde{\mathbf{Y}}_3^{i+1}) \to X_i \to (Y_{1,i},Y_{3,i})$ as may be verified using a functional dependency graph; (b) is due to, first, that
\begin{align}
	\nonumber & I(W_1,W_2,\tilde{\mathbf{Y}}_3^{i+1};Y_{1,i}|W_0,\mathbf{Y}_1^{i-1},X_i) = H(Y_{1,i}|W_0,\mathbf{Y}_1^{i-1},X_i) - H(Y_{1,i}|W_0,W_1,W_2,\mathbf{Y}_1^{i-1},\tilde{\mathbf{Y}}_3^{i+1},X_i) \\
	& = H(Y_{1,i}|W_0,\mathbf{Y}_1^{i-1},X_i) - H(Y_{1,i}|W_0,\mathbf{Y}_1^{i-1},X_i) = 0
\end{align}
since in the second term in the second equality is obtained using the relation $(W_1,W_2) \to X_i \to Y_{1,i}$ and the fact that given $X_i$, $\tilde{\mathbf{Y}}_3^{i+1}$ is independent of $Y_{1,i}$, secondly, we can obtain $I(W_1,W_2;Y_{1,i}|W_0,\mathbf{Y}_1^{i-1},\tilde{\mathbf{Y}}_3^{i+1},X_i)=0$ in a similar way, and thirdly we have, by Relation \ref{relation1},  $I(\tilde{\mathbf{Y}}_3^{i+1};Y_{k,i}|W_0,\mathbf{Y}_1^{i-1})=I(\tilde{\mathbf{Y}}_2^{i+1};Y_{k,i}|W_0,\mathbf{Y}_1^{i-1})$, $k=1,3$; (c) is by substituting $\tilde U_{2,i} = U_{1,i}$.  Then, we have
\begin{align}
n(R_{1e} + R_{2e}) \leq \sum_{i=1}^n \left[I(X_i;Y_{1,i}|U_{1,i}) - I(X_i;Y_{3,i}|U_{1,i}), \right] + n(\epsilon_{1,2}+ \gamma_3 + \gamma_7). \label{r12e_final}
\end{align}

We now prove the rates for $R_0$, $R_0 + R_1$, $R_0 + R_2$ and $R_0 + R_1 + R_2$. For rate $R_0$, we have
\begin{align}
\nonumber  nR_0 & = H(W_0) = I(W_0;\mathbf{Y}_1) + H(W_0|\mathbf{Y}_1) \\
\nonumber  & \leq I(W_0;\mathbf{Y}_1) + n\gamma_1 \;\;\;\;\; \mbox{by Fano's inequality} \\
\nonumber  & = \sum_{i=1}^n I(W_0;Y_{1,i}|\mathbf{Y}_1^{i-1}) + n\gamma_1 \\
  & \leq \sum_{i=1}^n I(W_0,\mathbf{Y}_1^{i-1};Y_{1,i}) + n\gamma_1 \label{r0_final1_p} \\
& = \sum_{i=1}^n I(U_{1,i} ; Y_{1,i}) + n\gamma_1. \label{r0_final1}
\end{align}
We also have
\begin{align}
\nonumber nR_0 & = H(W_0) = I(W_0;\mathbf{Y}_3) + H(W_0|\mathbf{Y}_3) \\
\nonumber  & \leq I(W_0;\mathbf{Y}_3) + n\gamma_3 \;\;\;\;\; \mathrm{by \; Fano's \; inequality} \\
\nonumber  & = \sum_{i=1}^n I(W_0;Y_{3,i}|\tilde{\mathbf{Y}}_3^{i+1}) + n\gamma_3 \\
\nonumber & = \sum_{i=1}^n \left[ I(W_0,\mathbf{Y}_1^{i-1};Y_{3,i}|\tilde{\mathbf{Y}}_3^{i+1}) - I(\mathbf{Y}_1^{i-1};Y_{3,i}|W_0,\tilde{\mathbf{Y}}_3^{i+1}) \right] + n\gamma_3 \\
 & \stackrel{(a)}{\leq} \sum_{i=1}^n \left[ I(W_0,\mathbf{Y}_1^{i-1},\tilde{\mathbf{Y}}_3^{i+1};Y_{3,i}) - I(\tilde{\mathbf{Y}}_3^{i+1};Y_{1,i}|W_0,\mathbf{Y}_1^{i-1}) \right] + n\gamma_3 \label{r0_final2_p} \\
& = \sum_{i=1}^n \left[ I(U_{3,i};Y_{3,i}) - I(U_{3,i};Y_{1,i}|U_{1,i}) \right] + n\gamma_3 \label{r0_final2}
\end{align}
where (a) is by [\ref{ck},Lemma 7] from which $\sum_{i=1}^n I(\mathbf{Y}_1^{i-1};Y_{3,i}|W_0,\tilde{\mathbf{Y}}_3^{i+1}) =\sum_{i=1}^n I(\tilde{\mathbf{Y}}_3^{i+1};Y_{1,i}|W_0,\mathbf{Y}_1^{i-1})$.

For the rates $(R_0 + R_1)$, we consider the following cases when the messages are sent:
\begin{enumerate}
	\item Case 1: $W_1$ sent to $Y_1$, $W_0$ sent to $Y_1$ or $Y_3$;
	\item Case 2: $W_1$ sent to $Y_2$, $W_0$ sent to $Y_1$ or $Y_3$;
	\item Case 3: $W_0,W_1$ both sent to $Y_2$.
\end{enumerate}

For Case 1, we have
\begin{align}
\nonumber  n(R_0 + R_1) & = H(W_0,W_1) = H(W_0) + H(W_1|W_0) \\
\nonumber  & = H(W_0) + I(W_1;\mathbf{Y}_1|W_0) + H(W_1|W_0,\mathbf{Y}_1) \\
\nonumber  & \stackrel{(a)}{\leq} H(W_0) + \sum_{i=1}^n I(W_1;Y_{1,i}|W_0,\mathbf{Y}_1^{i-1}) + \gamma_4
\end{align}
where (a) is by expanding using the chain rule and using Fano's inequality.  Then, on combining with $H(W_0)$ using \eqref{r0_final1_p}, we can get
\begin{align}\label{r01_final1}
	\nonumber n(R_0 + R_1) & \leq \sum_{i=1}^n \left[ I(W_0,\mathbf{Y}_1^{i-1};Y_{1,i}) + I(W_1;Y_{1,i}|W_0,\mathbf{Y}_1^{i-1}) \right] + n(\gamma_1+\gamma_4) \\
	 & = \sum_{i=1}^n I(W_1;Y_{1,i}) + n(\gamma_1 + \gamma_4) = \sum_{i=1}^n I(U_{2,i};Y_{1,i}) + n(\gamma_1 + \gamma_4)
\end{align}
and, combining with $H(W_0)$ using \eqref{r0_final2_p}, we obtain
\begin{align}\label{r01_final2}
	\nonumber n(R_0 + R_1) & \leq \sum_{i=1}^n \left[ I(W_0,\mathbf{Y}_1^{i-1},\tilde{\mathbf{Y}}_3^{i+1};Y_{3,i}) - I(\tilde{\mathbf{Y}}_3^{i+1};Y_{1,i}|W_0,\mathbf{Y}_1^{i-1}) + I(W_1;Y_{1,i}|W_0,\mathbf{Y}_1^{i-1})  \right] + n(\gamma_3 + \gamma_4) \\
	& \leq \sum_{i=1}^n \left[ I(U_{3,i};Y_{3,i}) + I(U_{2,i};Y_{1,i}|U_{1,i}) \right] + n(\gamma_3 + \gamma_4).
\end{align}

For Case 2, we similarly have
\begin{align}\label{r01_y2_inter}
	\nonumber & n(R_0 + R_1) = H(W_0) + I(W_1;\mathbf{Y}_2|W_0) + H(W_1|W_0,\mathbf{Y}_2) \\
	\nonumber & \leq H(W_0) + \sum_{i=1}^n I(W_1;Y_{2,i}|W_0,\tilde{\mathbf{Y}}_2^{i+1}) + n\gamma_5 \\
	\nonumber & = H(W_0) + \sum_{i=1}^n \left[ I(W_1,\mathbf{Y}_1^{i-1};Y_{2,i}|W_0,\tilde{\mathbf{Y}}_2^{i+1}) - I(\mathbf{Y}_1^{i-1};Y_{2,i}|W_0,W_1,\tilde{\mathbf{Y}}_2^{i+1}) \right] + n\gamma_5 \\
	 \nonumber & \stackrel{(a)}{=} H(W_0) + \sum_{i=1}^n \left[ I(\mathbf{Y}_1^{i-1};Y_{2,i}|W_0,\tilde{\mathbf{Y}}_2^{i+1}) + I(W_1;Y_{2,i}|W_0,\mathbf{Y}_1^{i-1},\tilde{\mathbf{Y}}_2^{i+1}) - I(\tilde{\mathbf{Y}}_2^{i+1};Y_{1,i}|W_0,W_1,\mathbf{Y}_1^{i-1}) \right] + n\gamma_5 \\
	 \nonumber & \stackrel{(b)}{=} H(W_0) + \sum_{i=1}^n \left[ I(\tilde{\mathbf{Y}}_2^{i+1};Y_{1,i}|W_0,\mathbf{Y}_1^{i-1}) + I(W_1;Y_{2,i}|W_0,\mathbf{Y}_1^{i-1},\tilde{\mathbf{Y}}_2^{i+1}) \right] + n\gamma_5 \\
	  & = H(W_0) + \sum_{i=1}^n \left[ I(\tilde U_{2,i};Y_{1,i}|U_{1,i}) + I(U_{2,i};Y_{2,i}|\tilde U_{2,i},U_{1,i}) \right] + n\gamma_5 = H(W_0) + \sum_{i=1}^n I(U_{2,i};Y_{2,i}|U_{1,i}) + n\gamma_5
	 \end{align}
where (a) has the last term in the sum by [\ref{ck}, Lemma 7] giving
\[
\sum_{i=1}^n I(\mathbf{Y}_1^{i-1};Y_{2,i}|W_0,W_1,\tilde{\mathbf{Y}}_2^{i+1})=\sum_{i=1}^n I(\tilde{\mathbf{Y}}_2^{i+1};Y_{1,i}|W_0,W_1,\mathbf{Y}_1^{i-1});
\]
(b) is by $I(\tilde{\mathbf{Y}}_2^{i+1};Y_{1,i}|W_0,W_1,\mathbf{Y}_1^{i-1}) = I(\tilde{\mathbf{Y}}_3^{i+1};Y_{1,i}|W_0,W_1,\mathbf{Y}_1^{i-1})=I(U_{3,i};Y_{1,i}|U_{2,i},U_{1,i})=0$ from Relation \ref{relation3} and \eqref{mkv_oi_2} and first term under the sum by [\ref{ck}, Lemma 7] from which
\[
\sum_{i=1}^n I(\mathbf{Y}_1^{i-1};Y_{2,i}|W_0,\tilde{\mathbf{Y}}_2^{i+1})=\sum_{i=1}^n I(\tilde{\mathbf{Y}}_2^{i+1};Y_{1,i}|W_0,\mathbf{Y}_1^{i-1}).
\]
Combining with $H(W_0)$ using \eqref{r0_final1_p} and \eqref{r0_final2_p}, we obtain
\begin{align}
n(R_0 + R_1) & \leq \sum_{i=1}^n \left[ I(U_{1,i};Y_{1,i})+I(U_{2,i};Y_{2,i}|U_{1,i})\right] + n(\gamma_1 + \gamma_5), \label{r01_final3}\\
n(R_0 + R_1) & \leq \sum_{i=1}^n \left[ I(U_{3,i};Y_{3,i})+I(U_{2,i};Y_{2,i}|U_{1,i}) \right] + n(\gamma_3 + \gamma_5). \label{r01_final4}
\end{align}

For Case 3 we have
\begin{align}
	\nonumber & n(R_0 + R_1) = H(W_0,W_1) =  I(W_0,W_1;\mathbf{Y}_2) + H(W_0,W_1|\mathbf{Y}_2) \\
	\nonumber & \stackrel{(a)}{\leq} \sum_{i=1}^n I(W_0,W_1;Y_{2,i}|\tilde{\mathbf{Y}}_2^{i+1}) + n\gamma_5 \\
	\nonumber & = \sum_{i=1}^n \left[ I(W_0,W_1,\mathbf{Y}_1^{i-1};Y_{2,i}|\tilde{\mathbf{Y}}_2^{i+1}) - I(\mathbf{Y}_1^{i-1};Y_{2,i}|W_0,W_1,\tilde{\mathbf{Y}}_2^{i+1})\right] + n\gamma_5 \\
	\nonumber & \stackrel{(b)}{=} \sum_{i=1}^n \left[ I(W_0,W_1,\mathbf{Y}_1^{i-1};Y_{2,i}|\tilde{\mathbf{Y}}_2^{i+1}) - I(\tilde{\mathbf{Y}}_2^{i+1};Y_{1,i}|W_0,W_1,\mathbf{Y}_1^{i-1})\right] + n\gamma_5 \\
	& \leq \sum_{i=1}^n I(\tilde U_{2,i},U_{2,i};Y_{2,i}) + n\gamma_5 = \sum_{i=1}^n I(U_{2,i};Y_{2,i}) + n\gamma_5. \label{r01_final5}
\end{align}
where (a) is by Fano's inequality, and (b) has second term in the sum by [\ref{ck}, Lemma 7].

For rates $(R_0 + R_2)$ consider message $W_2$ sent to receiver $Y_1$ and $W_0$ sent to either $Y_1$ or $Y_3$.  To begin, we have
\begin{align}
	\nonumber & n(R_0 + R_2) = H(W_0) + H(W_2|W_0) = H(W_0) + I(W_2;\mathbf{Y}_1|W_0) + H(W_2;\mathbf{Y}_1|W_0) \\
	\nonumber & \stackrel{(a)}{\leq} H(W_0) + \sum_{i=1}^n I(W_2;Y_{1,i}|W_0,\mathbf{Y}_1^{i-1}) + n\gamma_6 \leq H(W_0) + \sum_{i=1}^n I(W_2;W_1,Y_{1,i}|W_0,\mathbf{Y}_1^{i-1}) + n\gamma_6 \\
	 \nonumber & \stackrel{(b)}{=} H(W_0) + \sum_{i=1}^n I(W_2;Y_{1,i}|W_0,W_1,\mathbf{Y}_1^{i-1}) + n\gamma_6 \stackrel{(c)}{\leq} H(W_0) + \sum_{i=1}^n I(X_i;Y_{1,i}|W_0,W_1,\mathbf{Y}_1^{i-1}) + n\gamma_6 \\
	  & = H(W_0) + \sum_{i=1}^n \left[ I(X_i,W_1;Y_{1,i}|W_0,\mathbf{Y}_1^{i-1}) - I(W_1;Y_{1,i}|W_0,\mathbf{Y}_1^{i-1}) \right] + n\gamma_6 \label{r02_inter1} \\
	 \nonumber & \stackrel{(d)}{=} H(W_0) + \sum_{i=1}^n \left[ I(X_i,W_1;Y_{1,i}|W_0,\mathbf{Y}_1^{i-1}) - I(W_1,\tilde{\mathbf{Y}}_3^{i+1};Y_{1,i}|W_0,\mathbf{Y}_1^{i-1}) \right] + n\gamma_6 \\
	 \nonumber & \stackrel{(e)}{=} H(W_0) + \sum_{i=1}^n \left[ I(X_i;Y_{1,i}|W_0,\mathbf{Y}_1^{i-1}) - I(W_1,\tilde{\mathbf{Y}}_3^{i+1};Y_{1,i}|W_0,\mathbf{Y}_1^{i-1}) \right] + n\gamma_6 \\
	 & = H(W_0) + \sum_{i=1}^n I(X_i;Y_{1,i}|U_{2,i},U_{3,i}) + n\gamma_6 \label{r02_inter2}
\end{align}
where (a) is by Fano's inequality; (b) is by the independence of $W_1$ and $W_2$; (c) is by $W_1 \to X_i \to Y_{1,i}$; (d) is by Relation \ref{relation2}; and (e) is because
\begin{align}
\nonumber I(W_1;Y_{1,i}|W_0,\mathbf{Y}_1^{i-1},X_i) & = H(Y_{1,i}|W_0,\mathbf{Y}_1^{i-1},X_i) - H(Y_{1,i}|W_0,W_1,\mathbf{Y}_1^{i-1},X_i) \\
& =H(Y_{1,i}|W_0,\mathbf{Y}_1^{i-1},X_i) - H(Y_{1,i}|W_0,\mathbf{Y}_1^{i-1},X_i)=0
\end{align}
with the second term in the second equality being due to $W_1 \to X_i \to Y_{1,i}$.  Combine the results with $H(W_0)$ in two ways.  Firstly, we do this by combining with \eqref{r0_final1} using \eqref{r02_inter2} to get
\begin{equation}\label{r02_final1}
	n(R_0 + R_2) \leq \sum_{i=1}^n \left[ I(U_{1,i};Y_{1,i}) + I(X_i;Y_{1,i}|U_{2,i},U_{3,i}) \right] + n(\gamma_1 + \gamma_6).
\end{equation}
Next combine with \eqref{r0_final2_p} using \eqref{r02_inter1} to get
\begin{align}
	\nonumber n(R_0 + R_2) & \leq \sum_{i=1}^n \left[ I(W_0,\mathbf{Y}_1^{i-1},\tilde{\mathbf{Y}}_3^{i+1};Y_{3,i}) - I(\tilde{\mathbf{Y}}_3^{i+1};Y_{1,i}|W_0,\mathbf{Y}_1^{i-1}) + I(X_i,W_1;Y_{1,i}|W_0,\mathbf{Y}_1^{i-1}) \right. \\
	\nonumber & \;\;\;\; \left. - I(W_1;Y_{1,i}|W_0,\mathbf{Y}_1^{i-1}) \right] + n(\gamma_3 + \gamma_6) \\
	\nonumber & = \sum_{i=1}^n \left[ I(U_{3,i};Y_{3,i}) - I(U_{3,i};Y_{1,i}|U_{1,i}) + I(X_i;Y_{1,i}|U_{1,i}) - I(U_{2,i};Y_{1,i}|U_{1,i}) \right] + n(\gamma_3 + \gamma_6) \\
	\nonumber & \stackrel{(a)}{\leq} \sum_{i=1}^n \left[ I(U_{3,i};Y_{3,i}) + I(X_i;Y_{1,i}|U_{1,i}) - I(U_{2,i},U_{3,i};Y_{1,i}|U_{1,i})\right] + n(\gamma_3 + \gamma_6) \\
	& = \sum_{i=1}^n \left[ I(U_{3,i};Y_{3,i}) + I(X_i;Y_{1,i}|U_{2,i},U_{3,i}) \right] + n(\gamma_3 + \gamma_6), \label{r02_final2}
\end{align}
where (a) is by
\begin{align}
	\nonumber & I(U_{2,i};Y_{1,i}|U_{1,i})+I(U_{3,i};Y_{1,i}|U_{1,i}) = I(U_{2,i},U_{3,i};Y_{1,i}|U_{1,i}) - I(U_{3,i};Y_{1,i}|U_{1,i},U_{2,i}) + I(U_{3,i};Y_{1,i}|U_{1,i}) \\
	 & \geq I(U_{2,i},U_{3,i};Y_{1,i}|U_{1,i}) - I(U_{3,i};Y_{1,i}|U_{1,i},U_{2,i}) + I(U_{3,i};Y_{1,i}|U_{1,i},U_{2,i}) = I(U_{2,i},U_{3,i};Y_{1,i}|U_{1,i})
\end{align}
with the inequality obtained using \eqref{mkv_oi_1}.

Lastly, for the rates $(R_0 + R_1+R_2)$, consider the following combinations of messages sent to the receivers:
\begin{enumerate}
	\item Case 1: $W_1,W_2$ sent to $Y_1$, $W_0$ sent to $Y_1$ or $Y_3$,
	\item Case 2: $W_1$ sent to $Y_2$, $W_2$ sent to $Y_1$, $W_0$ sent to $Y_1$ or $Y_3$,
	\item Case 3: $W_0,W_1$ sent to $Y_2$, $W_2$ sent to $Y_1$.
\end{enumerate}
For Case 1, we have
\begin{align}
	\nonumber & n(R_0 + R_1 + R_2) = H(W_0) + H(W_1,W_2|W_0) = H(W_0) + I(W_1,W_2;\mathbf{Y}_1|W_0) + H(W_1,W_2|W_0,\mathbf{Y}_1) \\
	\nonumber & \leq H(W_0) + \sum_{i=1}^n \left[ I(W_1;Y_{1,i}|W_0,\mathbf{Y}_1^{i-1}) + I(W_2;Y_{1,i}|W_0,W_1,\mathbf{Y}_1^{i-1}) \right]+ n\gamma_7 \\
	\nonumber & \stackrel{(a)}{\leq} H(W_0) + \sum_{i=1}^n \left[ I(W_1;Y_{1,i}|W_0,\mathbf{Y}_1^{i-1}) + I(X_i;Y_{1,i}|W_0,W_1,\mathbf{Y}_1^{i-1}) \right]+ n\gamma_7 \\
	& = H(W_0) + \sum_{i=1}^n I(X_i;Y_{1,i}|W_0,\mathbf{Y}_1^{i-1}) + n\gamma_7, \label{r012_inter1}
\end{align}
where (a) is by $W_1 \to X_i \to Y_{1,i}$.  Then combining with \eqref{r0_final1_p}, we have
\begin{equation}\label{r012_final1}
	n(R_0 + R_1 + R_2) \leq \sum_{i=1}^n \left[ I(W_0,\mathbf{Y}_1^{i-1};Y_{1,i}) + I(X_i;Y_{1,i}|W_0,\mathbf{Y}_1^{i-1}) \right] + n(\gamma_1+\gamma_7) = \sum_{i=1}^n I(X_i;Y_{1,i}) + n(\gamma_1+\gamma_7).
\end{equation}
Combining \eqref{r012_inter1} with \eqref{r0_final2_p}, we have
\begin{align}
	\nonumber n(R_0 + R_1 + R_2) & \leq \sum_{i=1}^n \left[ I(W_0,\mathbf{Y}_1^{i-1},\tilde{\mathbf{Y}}_3^{i+1};Y_{3,i}) - I(\tilde{\mathbf{Y}}_3^{i+1};Y_{1,i}|W_0,\mathbf{Y}_1^{i-1}) + I(X_i;Y_{1,i}|W_0,\mathbf{Y}_1^{i-1}) \right] + n(\gamma_3+\gamma_7) \\
	& = \sum_{i=1}^n \left[ I(U_{3,i};Y_{3,i}) + I(X_i; Y_{1,i}|U_{3,i}) \right] + n(\gamma_3+\gamma_7). \label{r012_final2}
\end{align}

For Case 2 we have
\begin{align}
	\nonumber & n(R_0 + R_1 + R_2) = H(W_0) + H(W_1|W_0) + H(W_2|W_0,W_1) \\
	\nonumber & = H(W_0) + I(W_1;\mathbf{Y}_2|W_0) + H(W_2|W_0,\mathbf{Y}_2) + I(W_2;\mathbf{Y}_1|W_0,W_1) + H(W_2|W_0,W_1,\mathbf{Y}_1) \\
	\nonumber & \leq  H(W_0) + \sum_{i=1}^n \left[ I(W_1;Y_{2,i}|W_0,\tilde{\mathbf{Y}}_2^{i+1}) + I(W_2;Y_{1,i}|W_0,W_1,\mathbf{Y}_1^{i-1}) \right] + n(\gamma_5 + \gamma_8) \\
	\nonumber & \stackrel{(a)}{\leq} H(W_0) + \sum_{i=1}^n \left[ I(W_1,\mathbf{Y}_1^{i-1};Y_{2,i}|W_0,\tilde{\mathbf{Y}}_2^{i+1}) - I(\mathbf{Y}_1^{i-1};Y_{2,i}|W_0,W_1,\tilde{\mathbf{Y}}_2^{i+1}) \right. \\
	\nonumber & \;\;\;\; \left. + I(X_i;Y_{1,i}|W_0,W_1,\mathbf{Y}_1^{i-1})  \right] + n(\gamma_5 + \gamma_8) \\
	\nonumber & = H(W_0) + \sum_{i=1}^n \left[ I(\mathbf{Y}_1^{i-1};Y_{2,i}|W_0,\tilde{\mathbf{Y}}_2^{i+1}) + I(W_1;Y_{2,i}|W_0,\mathbf{Y}_1^{i-1},\tilde{\mathbf{Y}}_2^{i+1}) - I(\mathbf{Y}_1^{i-1};Y_{2,i}|W_0,W_1,\tilde{\mathbf{Y}}_2^{i+1}) \right. \\
	\nonumber & \;\;\;\; \left. + I(X_i,W_1;Y_{1,i}|W_0,\mathbf{Y}_1^{i-1}) - I(W_1;Y_{1,i}|W_0,\mathbf{Y}_1^{i-1}) \right] + n(\gamma_5 + \gamma_8) \\
	\nonumber & \stackrel{(b)}{=} H(W_0) + \sum_{i=1}^n \left[ I(\tilde{\mathbf{Y}}_2^{i+1};Y_{1,i}|W_0,\mathbf{Y}_1^{i-1}) + I(W_1;Y_{2,i}|W_0,\mathbf{Y}_1^{i-1},\tilde{\mathbf{Y}}_2^{i+1}) - I(\tilde{\mathbf{Y}}_2^{i+1};Y_{1,i}|W_0,W_1,\mathbf{Y}_1^{i-1}) \right. \\
	\nonumber & \;\;\;\; \left. + I(X_i,W_1;Y_{1,i}|W_0,\mathbf{Y}_1^{i-1}) - I(W_1,\tilde{\mathbf{Y}}_3^{i+1};Y_{1,i}|W_0,\mathbf{Y}_1^{i-1}) + I(\tilde{\mathbf{Y}}_3^{i+1};Y_{1,i}|W_0,W_1,\mathbf{Y}_1^{i-1}) \right] + n(\gamma_5 + \gamma_8) \\
	\nonumber & \stackrel{(c)}{=} H(W_0) + \sum_{i=1}^n \left[ I(\tilde{\mathbf{Y}}_2^{i+1};Y_{1,i}|W_0,\mathbf{Y}_1^{i-1}) + I(W_1;Y_{2,i}|W_0,\mathbf{Y}_1^{i-1},\tilde{\mathbf{Y}}_2^{i+1}) \right. \\
	  & \;\;\;\; \left. + I(X_i;Y_{1,i}|W_0,\mathbf{Y}_1^{i-1}) - I(W_1,\tilde{\mathbf{Y}}_3^{i+1};Y_{1,i}|W_0,\mathbf{Y}_1^{i-1}) \right] + n(\gamma_5 + \gamma_8) \label{r012_inter2a} \\
	 \nonumber & = H(W_0) + \sum_{i=1}^n \left[ I(\tilde U_{2,i};Y_{1,i}|U_{1,i}) + I(U_{2,i};Y_{2,i}|\tilde U_{2,i},U_{1,i}) + I(X_i;Y_{1,i}|U_{2,i},U_{3,i}) \right] + n(\gamma_5 + \gamma_8)  \\
	 & = H(W_0) + \sum_{i=1}^n \left[  I(U_{2,i};Y_{2,i}|U_{1,i}) + I(X_i;Y_{1,i}|U_{2,i},U_{3,i}) \right] + n(\gamma_5 + \gamma_8) \label{r012_inter2b}
\end{align}
where (a) is by $W_2 \to X_i \to Y_{1,i}$; (b) is by [\ref{ck}, Lemma 7] which gives
\begin{align}
\nonumber \sum_{i=1}^n I(\mathbf{Y}_1^{i-1};Y_{2,i}|W_0,\tilde{\mathbf{Y}}_2^{i+1}) &=\sum_{i=1}^n I(\tilde{\mathbf{Y}}_2^{i+1};Y_{1,i}|W_0,\mathbf{Y}_1^{i-1}), \\
\nonumber \sum_{i=1}^n I(\mathbf{Y}_1^{i-1};Y_{2,i}|W_0,W_1,\tilde{\mathbf{Y}}_2^{i+1}) &=\sum_{i=1}^n I(\tilde{\mathbf{Y}}_2^{i+1};Y_{1,i}|W_0,W_1,\mathbf{Y}_1^{i-1});
\end{align}
(c) is by $I(\tilde{\mathbf{Y}}_2^{i+1};Y_{1,i}|W_0,W_1,\mathbf{Y}_1^{i-1}) = I(\tilde{\mathbf{Y}}_3^{i+1};Y_{1,i}|W_0,W_1,\mathbf{Y}_1^{i-1})=I(U_{3,i};Y_{1,i}|U_{2,i},U_{1,i})=0$ from Relation \ref{relation3} and \eqref{mkv_oi_2}, and also we have $I(W_1;Y_{1,i}|W_0,\mathbf{Y}_1^{i-1},X_i)$ $=$ $0$ from $W_1 \to X_i \to Y_{1,i}$ and $I(\tilde{\mathbf{Y}}_3^{i+1};Y_{1,i}|W_0,W_1,\mathbf{Y}_1^{i-1})$$=0$ from \eqref{mkv_oi_2}.

Now combine \eqref{r0_final1_p} with \eqref{r012_inter2b} to get
\begin{equation}\label{r012_final3}
	n(R_0 + R_1 + R_2) \leq \sum_{i=1}^n \left[ I(U_{1,i};Y_{1,i}) + I(U_{2,i};Y_{2,i}|U_{1,i}) + I(X_i;Y_{1,i}|U_{2,i},U_{3,i}) \right] + n(\gamma_1 + \gamma_5 + \gamma_8).
\end{equation}
Next combine \eqref{r0_final2_p} with \eqref{r012_inter2a}, so that
\begin{align}
	\nonumber & n(R_0 + R_1 + R_2) \\
	\nonumber & \leq \sum_{i=1}^n \left[ I(W_0,\mathbf{Y}_1^{i-1},\tilde{\mathbf{Y}}_3^{i+1};Y_{3,i}) - I(\tilde{\mathbf{Y}}_3^{i+1};Y_{1,i}|W_0,\mathbf{Y}_1^{i-1}) + I(\tilde{\mathbf{Y}}_2^{i+1};Y_{1,i}|W_0,\mathbf{Y}_1^{i-1}) \right. \\
	\nonumber & \;\;\;\; \left. +I(W_1;Y_{2,i}|W_0,\mathbf{Y}_1^{i-1},\tilde{\mathbf{Y}}_2^{i+1})+ I(X_i;Y_{1,i}|W_0,\mathbf{Y}_1^{i-1}) - I(W_1,\tilde{\mathbf{Y}}_3^{i+1};Y_{1,i}|W_0,\mathbf{Y}_1^{i-1}) \right] + n(\gamma_3+\gamma_5 + \gamma_8) \\
	\nonumber & \stackrel{(a)}{=} \sum_{i=1}^n \left[ I(W_0,\mathbf{Y}_1^{i-1},\tilde{\mathbf{Y}}_3^{i+1};Y_{3,i})+I(W_1;Y_{2,i}|W_0,\mathbf{Y}_1^{i-1},\tilde{\mathbf{Y}}_2^{i+1}) \right. \\
	\nonumber & \;\;\;\; \left. + I(X_i;Y_{1,i}|W_0,\mathbf{Y}_1^{i-1}) - I(W_1,\tilde{\mathbf{Y}}_3^{i+1};Y_{1,i}|W_0,\mathbf{Y}_1^{i-1}) \right] + n(\gamma_3+\gamma_5 + \gamma_8) \\
	& = \sum_{i=1}^n \left[ I(U_{3,i};Y_{3,i}) + I(U_{2,i};Y_{2,i}|U_{1,i}) + I(X_i;Y_{1,i}|U_{2,i},U_{3,i}) \right] + n(\gamma_3+\gamma_5 + \gamma_8), \label{r012_final4}
\end{align}
where (a) is due to $I(\tilde{\mathbf{Y}}_3^{i+1};Y_{1,i}|W_0,\mathbf{Y}_1^{i-1})=I(\tilde{\mathbf{Y}}_2^{i+1};Y_{1,i}|W_0,\mathbf{Y}_1^{i-1})$ by Relation \ref{relation1}.

For Case 3, we have
\begin{align}
	\nonumber & n(R_0 + R_1 + R_2) = H(W_0,W_1) + H(W_2|W_0,W_1) \\
	\nonumber & = I(W_0,W_1;\mathbf{Y}_2) + H(W_0,W_1|\mathbf{Y}_2) + I(W_2;\mathbf{Y}_1|W_0,W_1) + H(W_2|W_0,W_1,\mathbf{Y}_1) \\
	\nonumber & \leq \sum_{i=1}^n \left[ I(W_0,W_1;Y_{2,i}|\tilde{\mathbf{Y}}_2^{i+1}) + I(W_2;Y_{1,i}|W_0,W_1,\mathbf{Y}_1^{i-1}) \right] + n(\gamma_5 + \gamma_7) \\
	 & \stackrel{(a)}{\leq} \sum_{i=1}^n \left[ I(W_0,W_1;Y_{2,i}|\tilde{\mathbf{Y}}_2^{i+1}) + I(X_i;Y_{1,i}|W_0,W_1,\mathbf{Y}_1^{i-1}) \right] + n(\gamma_5 + \gamma_7) \label{r012_final5_p} \\
	 & \stackrel{(b)}{\leq} \sum_{i=1}^n \left[ I(U_{2,i};Y_{2,i}) + I(X_i;Y_{1,i}|U_{2,i},U_{3,i})\right] + n(\gamma_5 + \gamma_7), \label{r012_final5}
\end{align}
where (a) is by $W_2 \to X_i \to Y_{1,i}$; and (b) is by following the steps in \eqref{r01_final5} for the first term in the sum of \eqref{r012_final5_p} and the steps from \eqref{r02_inter1}-\eqref{r02_inter2} for the second term in the sum of  \eqref{r012_final5_p}.

Finally, introduce random variable $G$, which is independent of all other random variables and taking on values $i$, for $i = 1, 2,\dots,n$, with probability $1/n$.  Define $U_k\triangleq(G,U_{k,G})$, $X \triangleq X_G$, $Y_k \triangleq Y_{k,G}$, $k=1,2,3$.  Then, we can obtain the rate region in Theorem \ref{thm2} using \eqref{r1_final1}, \eqref{r1_final2}, \eqref{r2_final}, \eqref{r12e_final}, \eqref{r0_final1}, \eqref{r0_final2}, \eqref{r01_final1}, \eqref{r01_final2}, \eqref{r01_final3}, \eqref{r01_final4}, \eqref{r01_final5}, \eqref{r02_final1}, \eqref{r02_final2}, \eqref{r012_final1}, \eqref{r012_final2}, \eqref{r012_final3}, \eqref{r012_final4} and \eqref{r012_final5}.

\subsection{Proof of the outer bound for the 3-receiver BC with 2 degraded message sets (Type 1)}\label{outer_spec1}
In this section we show the proof for the outer bound of Corollary \ref{col2}. The same code construction as in Section \ref{outer_gen}, and preserve the definitions for the auxiliary random variables.

We begin with the equivocation rate $R_{1e}$.  Following the same procedure to obtain \eqref{r1_cnv} - \eqref{r1_initial}, we have
\begin{align}
	\nonumber nR_{1e} & \leq \sum_{i=1}^n \left[I(W_1;Y_{1,i}|W_0,\mathbf{Y}_1^{i-1})
-I(W_1;Y_{3,i} |W_0,\tilde{\mathbf{Y}}_3^{i+1})\right] +n(\epsilon'_1+\gamma_3+\gamma_4) \\
	\nonumber & = \sum_{i=1}^n \left[ I(W_1,\tilde{\mathbf{Y}}_3^{i+1};Y_{1,i}|W_0,\mathbf{Y}_1^{i-1}) - I(W_1,\mathbf{Y}_1^{i-1};Y_{3,i} |W_0,\tilde{\mathbf{Y}}_3^{i+1}) \right.\\
	\nonumber & \;\;\;\; \left.- I(\tilde{\mathbf{Y}}_3^{i+1};Y_{1,i}|W_0,W_1,\mathbf{Y}_1^{i-1}) + I(\mathbf{Y}_1^{i-1};Y_{3,i}|W_0,W_1,\tilde{\mathbf{Y}}_3^{i+1}) \right]+n(\epsilon'_1+\gamma_3+\gamma_4) \\
	\nonumber & \stackrel{(a)}{=} \sum_{i=1}^n \left[ I(W_1,\tilde{\mathbf{Y}}_3^{i+1};Y_{1,i}|W_0,\mathbf{Y}_1^{i-1}) - I(W_1,\mathbf{Y}_1^{i-1};Y_{3,i} |W_0,\tilde{\mathbf{Y}}_3^{i+1}) \right.\\
	\nonumber & \;\;\;\; \left.- I(\tilde{\mathbf{Y}}_3^{i+1};Y_{1,i}|W_0,W_1,\mathbf{Y}_1^{i-1}) + I(\tilde{\mathbf{Y}}_3^{i+1};Y_{1,i}|W_0,W_1,\mathbf{Y}_1^{i-1}) \right]+n(\epsilon'_1+\gamma_3+\gamma_4) \\
	\nonumber & = \sum_{i=1}^n \left[ I(W_1,\tilde{\mathbf{Y}}_3^{i+1};Y_{1,i}|W_0,\mathbf{Y}_1^{i-1}) - I(W_1,\mathbf{Y}_1^{i-1};Y_{3,i} |W_0,\tilde{\mathbf{Y}}_3^{i+1}) \right] +n(\epsilon'_1+\gamma_3+\gamma_4) \\
	\nonumber & = \sum_{i=1}^n \left[ I(X_i,W_1,\tilde{\mathbf{Y}}_3^{i+1};Y_{1,i}|W_0,\mathbf{Y}_1^{i-1}) - I(X_i,W_1,\mathbf{Y}_1^{i-1};Y_{3,i} |W_0,\tilde{\mathbf{Y}}_3^{i+1}) \right. \\
	\nonumber & \;\;\;\; \left. - \left(I(X_i;Y_{1,i}|W_0,W_1,\mathbf{Y}_1^{i-1},\tilde{\mathbf{Y}}_3^{i+1}) - I(X_i;Y_{3,i}|W_0,W_1,\mathbf{Y}_1^{i-1},\tilde{\mathbf{Y}}_3^{i+1}) \right) \right] +n(\epsilon'_1+\gamma_3+\gamma_4) \\
	\nonumber & \stackrel{(b)}{\leq} \sum_{i=1}^n \left[ I(X_i,W_1,\tilde{\mathbf{Y}}_3^{i+1};Y_{1,i}|W_0,\mathbf{Y}_1^{i-1}) - I(X_i,W_1,\mathbf{Y}_1^{i-1};Y_{3,i} |W_0,\tilde{\mathbf{Y}}_3^{i+1}) \right] +n(\epsilon'_1+\gamma_3+\gamma_4) \\
	\nonumber & \stackrel{(c)}{=} \sum_{i=1}^n \left[ I(X_i;Y_{1,i}|W_0,\mathbf{Y}_1^{i-1}) - I(\mathbf{Y}_1^{i-1};Y_{3,i} |W_0,\tilde{\mathbf{Y}}_3^{i+1}) - I(X_i;Y_{3,i}|W_0,\mathbf{Y}_1^{i-1},\tilde{\mathbf{Y}}_3^{i+1}) \right] +n(\epsilon'_1+\gamma_3+\gamma_4) \\
	\nonumber & = \sum_{i=1}^n \left[ I(X_i;Y_{1,i}|W_0,\mathbf{Y}_1^{i-1}) - I(\mathbf{Y}_1^{i-1};Y_{3,i} |W_0,\tilde{\mathbf{Y}}_3^{i+1}) - I(X_i,\tilde{\mathbf{Y}}_3^{i+1};Y_{3,i}|W_0,\mathbf{Y}_1^{i-1}) \right. \\
	\nonumber & \;\;\;\; \left. + I(\tilde{\mathbf{Y}}_3^{i+1};Y_{3,i}|W_0,\mathbf{Y}_1^{i-1}) \right] +n(\epsilon'_1+\gamma_3+\gamma_4) \\
	\nonumber & \stackrel{(d)}{=} \sum_{i=1}^n \left[ I(X_i;Y_{1,i}|W_0,\mathbf{Y}_1^{i-1}) - I(\tilde{\mathbf{Y}}_3^{i+1};Y_{1,i} |W_0,\mathbf{Y}_1^{i-1}) - I(X_i;Y_{3,i}|W_0,\mathbf{Y}_1^{i-1}) \right. \\
	\nonumber & \;\;\;\; \left. + I(\tilde{\mathbf{Y}}_3^{i+1};Y_{3,i}|W_0,\mathbf{Y}_1^{i-1}) \right] +n(\epsilon'_1+\gamma_3+\gamma_4) \\
	\nonumber & \stackrel{(e)}{=} \sum_{i=1}^n \left[ I(X_i;Y_{1,i}|W_0,\mathbf{Y}_1^{i-1}) - I(\tilde{\mathbf{Y}}_2^{i+1};Y_{1,i} |W_0,\mathbf{Y}_1^{i-1}) - I(X_i;Y_{3,i}|W_0,\mathbf{Y}_1^{i-1}) \right. \\
	\nonumber & \;\;\;\; \left. + I(\tilde{\mathbf{Y}}_2^{i+1};Y_{3,i}|W_0,\mathbf{Y}_1^{i-1}) \right] +n(\epsilon'_1+\gamma_3+\gamma_4) \\
	\nonumber & = \sum_{i=1}^n \left[ I(X_i;Y_{1,i}|U_{1,i}) - I(\tilde U_{2,i};Y_{1,i} |U_{1,i}) - I(X_i;Y_{3,i}|U_{1,i}) + I(\tilde U_{2,i};Y_{3,i}|U_{1,i}) \right] +n(\epsilon'_1+\gamma_3+\gamma_4) \\
	 & \stackrel{(f)}{=} \sum_{i=1}^n \left[ I(X_i;Y_{1,i}|U_{1,i}) - I(X_i;Y_{3,i}|U_{1,i}) \right] +n(\epsilon'_1+\gamma_3+\gamma_4) \label{r1e_final_cor}
\end{align}
where (a) is due to [\ref{ck}, Lemma 7] which gives
\[
\sum_{i=1}^n I(\mathbf{Y}_1^{i-1};Y_{3,i}|W_0,W_1,\tilde{\mathbf{Y}}_3^{i+1})=\sum_{i=1}^n I(\tilde{\mathbf{Y}}_3^{i+1};Y_{1,i}|W_0,W_1,\mathbf{Y}_1^{i-1});
\]
(b) is due to the fact that $Y_1$ is a more capable channel than $Y_3$ so that $I(X_i;Y_{1,i}|W_0,W_1,\mathbf{Y}_1^{i-1},\tilde{\mathbf{Y}}_3^{i+1}) - I(X_i;Y_{3,i}|W_0,W_1,\mathbf{Y}_1^{i-1},\tilde{\mathbf{Y}}_3^{i+1}) \geq 0$ and is true as $(W_0,W_1,\mathbf{Y}_1^{i-1},\tilde{\mathbf{Y}}_3^{i+1}) \to X_i \to (Y_{1,i},Y_{3,i})$ forms a Markov chain so that the more capable channel condition is satisfied; (c) is because $I(W_1,\tilde{\mathbf{Y}}_3^{i+1};Y_{1,i}|W_0,\mathbf{Y}_1^{i-1},X_i)=0$ since given $X_i$, $W_1$ and $\tilde{\mathbf{Y}}_3^{i+1}$ are both independent of $Y_{1,i}$ from a functional dependency graph, and we also have $I(W_1;Y_{3,i}|W_0,\mathbf{Y}_1^{i-1},\tilde{\mathbf{Y}}_3^{i+1},X_i)=0$ since we have $W_1 \to X_i \to Y_{1,i}$; (d) has second term in the sum by [\ref{ck}, Lemma 7] by which we have
\[
\sum_{i=1}^n I(\mathbf{Y}_1^{i-1};Y_{3,i} |W_0,\tilde{\mathbf{Y}}_3^{i+1})=\sum_{i=1}^n I(\tilde{\mathbf{Y}}_3^{i+1};Y_{1,i} |W_0,\mathbf{Y}_1^{i-1}),
\]
and third term by $I(\tilde{\mathbf{Y}}_3^{i+1};Y_{3,i}|W_0,\mathbf{Y}_1^{i-1},X_i)=0$ since given $X_i$, $\tilde{\mathbf{Y}}_3^{i+1}$ is independent of $Y_{3,i}$; (e) is by Relation \ref{relation1}; and (f) is by substituting $\tilde U_{2,i}=U_{1,i}$.

For rates $R_0$ we already have, from \eqref{r0_final1}, \eqref{r0_final2} the rates for $W_0$ sent to $Y_1$ and $Y_3$.  For $W_0$ sent to $Y_2$, we have
\begin{align}
	\nonumber nR_0 & \leq \sum_{i=1}^n I(W_0;Y_{2,i}|\tilde{\mathbf{Y}}_2^{i+1}) + n\gamma_2 \\
	\nonumber & = \sum_{i=1}^n \left[ I(W_0,\mathbf{Y}_1^{i-1};Y_{2,i}|\tilde{\mathbf{Y}}_2^{i+1}) - I(\mathbf{Y}_1^{i-1};Y_{2,i}|W_0,\tilde{\mathbf{Y}}_2^{i+1}) \right] + n\gamma_2 \\
	\nonumber & \stackrel{(a)}{\leq} \sum_{i=1}^n \left[ I(W_0,W_1,\mathbf{Y}_1^{i-1},\tilde{\mathbf{Y}}_2^{i+1};Y_{2,i}) - I(W_1;Y_{1,i}|W_0,\mathbf{Y}_1^{i-1}) \right] + n\gamma_2 \\
	\nonumber & = \sum_{i=1}^n \left[ I(W_0,W_1,\mathbf{Y}_1^{i-1};Y_{2,i}) + I(\tilde{\mathbf{Y}}_2^{i+1};Y_{2,i}|W_0,W_1,\mathbf{Y}_1^{i-1}) - I(W_1;Y_{1,i}|W_0,\mathbf{Y}_1^{i-1}) \right] + n\gamma_2 \\
	\nonumber & \stackrel{(b)}{=} \sum_{i=1}^n \left[ I(W_0,W_1,\mathbf{Y}_1^{i-1};Y_{2,i}) - I(W_1;Y_{1,i}|W_0,\mathbf{Y}_1^{i-1}) \right] + n\gamma_2 \\
	& = \sum_{i=1}^n \left[ I(U_{2,i},U_{1,i};Y_{2,i}) - I(U_{2,i};Y_{1,i}|U_{1,i}) \right] + n\gamma_2,
\end{align}
where (a) is due to
\begin{align}
\nonumber & \sum_{i=1}^n I(\mathbf{Y}_1^{i-1};Y_{2,i}|W_0,\tilde{\mathbf{Y}}_2^{i+1}) = \sum_{i=1}^n I(\tilde{\mathbf{Y}}_2^{i+1};Y_{1,i}|W_0,\mathbf{Y}_1^{i-1}) \\
 & =\sum_{i=1}^n I(\tilde{\mathbf{Y}}_3^{i+1};Y_{1,i}|W_0,\mathbf{Y}_1^{i-1}) = \sum_{i=1}^n I(W_1;Y_{1,i}|W_0,\mathbf{Y}_1^{i-1})
\end{align}
with the first equality due to [\ref{ck}, Lemma 7], the second and third equalities by Relations \ref{relation1} and \ref{relation2}, respectively; (b) is due to $I(\tilde{\mathbf{Y}}_2^{i+1};Y_{2,i}|W_0,W_1,\mathbf{Y}_1^{i-1})=I(\tilde{\mathbf{Y}}_3^{i+1};Y_{2,i}|W_0,W_1,\mathbf{Y}_1^{i-1})=0$ by Relation \ref{relation3} and \eqref{mkv_oi_2}.

So we have
\begin{align}
 nR_0 & \leq \sum_{i=1}^n I(U_{1,i};Y_{1,i}) + n\gamma_1, \label{r0_final1_cor1} \\
 nR_0 & \leq \sum_{i=1}^n \left[ I(U_{2,i};Y_{2,i}) - I(U_{2,i};Y_{1,i}|U_{1,i})\right] + n\gamma_2, \label{r0_final2_cor1}\\
 nR_0 & \leq \sum_{i=1}^n \left[ I(U_{3,i};Y_{3,i}) - I(U_{3,i};Y_{1,i}|U_{1,i})\right] + n\gamma_3. \label{r0_final3_cor1}
\end{align}

For rates $R_0 + R_1$, consider $W_0$ sent to $Y_1,Y_2,Y_3$ and $W_1$ to $Y_1$ only. We have
\begin{align}
	\nonumber n(R_0 + R_1) & \leq H(W_0) + \sum_{i=1}^n I(W_1;Y_{1,i}|W_0,\mathbf{Y}_1^{i-1}) + n\gamma_4 \\
	& \stackrel{(a)}{\leq} H(W_0) + \sum_{i=1}^n I(X_i;Y_{1,i}|W_0,\mathbf{Y}_1^{i-1}) + n\gamma_4, \label{r01_inter_cor}
\end{align}
where (a) is by $W_1 \to X_i \to Y_{1,i}$.  Then combining \eqref{r01_inter_cor} with \eqref{r0_final1_cor1}, \eqref{r0_final2_cor1}, \eqref{r0_final3_cor1}, respectively, we can get
\begin{align}
	n(R_0 + R_1) & \leq \sum_{i=1}^n I(X_i;Y_{1,i}) + n(\gamma_1+\gamma_4) \label{r01_final1_cor} \\
	n(R_0 + R_1) & \leq \sum_{i=1}^n \left[ I(U_{2,i};Y_{2,i})+I(X_i;Y_{1,i}|U_{2,i})\right] + n(\gamma_2+\gamma_4) \label{r01_final2_cor} \\
	n(R_0 + R_1) & \leq \sum_{i=1}^n \left[ I(U_{3,i};Y_{3,i})+I(X_i;Y_{1,i}|U_{3,i}) \right] + n(\gamma_3+\gamma_4). \label{r01_final3_cor}
\end{align}

Now introduce the random variables $G$, $X$, $Y_k$, $k=1,2,3$, and $U_k$, $k=1,2$ as at the end of Section \ref{outer_gen}, and using \eqref{r1e_final_cor}, \eqref{r0_final1_cor1}, \eqref{r0_final2_cor1}, \eqref{r0_final3_cor1}, \eqref{r01_final1_cor}, \eqref{r01_final2_cor} and \eqref{r01_final3_cor}, we can obtain the rate region in Corollary \ref{col2}. So we have shown that the 3 degraded message set outer bound can reduce to the 2 degraded message set (Type 1) outer bound, as we have used the same condition ($Y_1$ more capable than $Y_3$), auxiliary random variable definition and code construction so that \eqref{code_construct_3} is satisfied.

\subsection{Proof for the 3-receiver BC with 2 degraded message sets (Type 2) with both $Y_1$ and $Y_2$ less noisy than $Y_3$}\label{outer_spec2}
In this section we show the converse proof for the bound in Corollary \ref{col3}.  We now use a $(2^{nR_0},2^{nR_1},n)$-code with error probability $P_e^{(n)}$ and code construction so that we have the Markov chain condition $(W_0 , W_1)$ $\to \mathbf{X}\to (\mathbf{Y}_1,\mathbf{Y}_2,\mathbf{Y}_3)$.  Then, the probability distribution on ${\cal W}_0 \times {\cal W}_1 \times \mathcal{X}^n \times \mathcal{Y}_1^n \times \mathcal{Y}_2^n \times \mathcal{Y}_3^n$ is given by
\begin{equation}
p(w_0 )p(w_1 )p(\mathbf{x}|w_0 ,w_1) \prod_{i=1}^n p(y_{1i}, y_{2i},y_{3i} | x_i ).
\end{equation}

We first note that from the definition of more capable and less noisy channels \cite{korner_77}, when $Y_1$ is less noisy than $Y_2$ or $Y_3$, then it also follows that $Y_1$ is more capable than $Y_2$ or $Y_3$.

We now also define the new auxiliary random variable $U_i \triangleq(W_0, \mathbf{Y}_3^{i-1})$ satisfying the condition
\begin{equation}
U_i \to X_i \to (Y_{1,i}, Y_{2,i}, Y_{3,i}), \;\;\;\;\; \forall i.
\end{equation}

To proceed with the proof, we begin with the equivocation rates.  We will consider 2 cases: the first, where $W_1$ is sent to $Y_1$, the second where $W_1$ is sent to $Y_2$.  For $W_1$ sent to $Y_1$, we have, following \eqref{r1_cnv}
\begin{align}
\nonumber & nR_{1e} \leq H(W_1 | \mathbf{Y}_3) + n\epsilon'_1 \;\; \mbox{(by secrecy condition)}\\
& \le I(W_1 ; \mathbf{Y}_1 | W_0) - I(W_1 ; \mathbf{Y}_3 | W_0) + n(\epsilon'_1 + \gamma_3 + \gamma_4). \label{r1e_inter_col3}
\end{align}
Then the first two terms of \eqref{r1e_inter_col3} can be bounded as
\begin{align}
	\nonumber & I(W_1 ; \mathbf{Y}_1 | W_0) - I(W_1 ; \mathbf{Y}_3 | W_0) = \sum_{i=1}^n \left[ I(W_1;Y_{1,i}|W_0,\tilde{\mathbf{Y}}_1^{i+1}) - I(W_1;Y_{3,i}|W_0,\mathbf{Y}_3^{i-1}) \right] \\
	\nonumber & = \sum_{i=1}^n \left[ I(W_1,\mathbf{Y}_3^{i-1};Y_{1,i}|W_0,\tilde{\mathbf{Y}}_1^{i+1}) - I(W_1,\tilde{\mathbf{Y}}_1^{i+1};Y_{3,i}|W_0,\mathbf{Y}_3^{i-1}) - I(\mathbf{Y}_3^{i-1};Y_{1,i}|W_0,W_1,\tilde{\mathbf{Y}}_1^{i+1}) \right.\\
	\nonumber & \;\;\;\; \left.+ I(\tilde{\mathbf{Y}}_1^{i+1};Y_{3,i}|W_0,W_1,\mathbf{Y}_3^{i-1}) \right] \\
	\nonumber & \stackrel{(a)}{=} \sum_{i=1}^n \left[ I(W_1,\mathbf{Y}_3^{i-1};Y_{1,i}|W_0,\tilde{\mathbf{Y}}_1^{i+1}) - I(W_1,\tilde{\mathbf{Y}}_1^{i+1};Y_{3,i}|W_0,\mathbf{Y}_3^{i-1}) - I(\tilde{\mathbf{Y}}_1^{i+1};Y_{3,i}|W_0,W_1,\mathbf{Y}_3^{i-1}) \right.\\
	\nonumber & \;\;\;\; \left.+ I(\tilde{\mathbf{Y}}_1^{i+1};Y_{3,i}|W_0,W_1,\mathbf{Y}_3^{i-1}) \right] \\
	\nonumber & = \sum_{i=1}^n \left[ I(W_1,\mathbf{Y}_3^{i-1};Y_{1,i}|W_0,\tilde{\mathbf{Y}}_1^{i+1}) - I(W_1,\tilde{\mathbf{Y}}_1^{i+1};Y_{3,i}|W_0,\mathbf{Y}_3^{i-1}) \right] \\
	\nonumber & = \sum_{i=1}^n \left[ I(X_i,W_1,\mathbf{Y}_3^{i-1};Y_{1,i}|W_0,\tilde{\mathbf{Y}}_1^{i+1}) - I(X_i,W_1,\tilde{\mathbf{Y}}_1^{i+1};Y_{3,i}|W_0,\mathbf{Y}_3^{i-1}) \right. \\
	\nonumber & \;\;\;\; \left. - \left( I(X_i;Y_{1,i}|W_0,W_1,\tilde{\mathbf{Y}}_1^{i+1},\mathbf{Y}_3^{i-1}) - I(X_i;Y_{3,i}|W_0,W_1,\tilde{\mathbf{Y}}_1^{i+1},\mathbf{Y}_3^{i-1}) \right) \right] \\
	\nonumber & \stackrel{(b)}{\leq} \sum_{i=1}^n \left[ I(X_i,W_1,\mathbf{Y}_3^{i-1};Y_{1,i}|W_0,\tilde{\mathbf{Y}}_1^{i+1}) - I(X_i,W_1,\tilde{\mathbf{Y}}_1^{i+1};Y_{3,i}|W_0,\mathbf{Y}_3^{i-1}) \right] \\
	\nonumber & \stackrel{(c)}{=} \sum_{i=1}^n \left[ I(\mathbf{Y}_3^{i-1};Y_{1,i}|W_0,\tilde{\mathbf{Y}}_1^{i+1}) + I(X_i;Y_{1,i}|W_0,\tilde{\mathbf{Y}}_1^{i+1},\mathbf{Y}_3^{i-1}) - I(X_i;Y_{3,i}|W_0,\mathbf{Y}_3^{i-1})  \right] \\
	\nonumber & \stackrel{(d)}{=} \sum_{i=1}^n \left[ I(\tilde{\mathbf{Y}}_1^{i+1};Y_{3,i}|W_0,\mathbf{Y}_3^{i-1}) + I(X_i,\tilde{\mathbf{Y}}_1^{i+1};Y_{1,i}|W_0,\mathbf{Y}_3^{i-1}) - I(\tilde{\mathbf{Y}}_1^{i+1};Y_{1,i}|W_0,\mathbf{Y}_3^{i-1}) \right. \\
	\nonumber & \;\;\;\; \left. - I(X_i;Y_{3,i}|W_0,\mathbf{Y}_3^{i-1}) \right] \\
	\nonumber & \stackrel{(e)}{=} \sum_{i=1}^n \left[ I(X_i;Y_{1,i}|W_0,\mathbf{Y}_3^{i-1}) - I(X_i;Y_{3,i}|W_0,\mathbf{Y}_3^{i-1}) -\left( I(\tilde{\mathbf{Y}}_1^{i+1};Y_{1,i}|W_0,\mathbf{Y}_3^{i-1}) - I(\tilde{\mathbf{Y}}_1^{i+1};Y_{3,i}|W_0,\mathbf{Y}_3^{i-1}) \right) \right] \\
	 & \stackrel{(f)}{\leq} \sum_{i=1}^n \left[ I(X_i;Y_{1,i}|W_0,\mathbf{Y}_3^{i-1}) - I(X_i;Y_{3,i}|W_0,\mathbf{Y}_3^{i-1}) \right]	\label{r1e_inter2_col3}
\end{align}
where (a) is by [\ref{ck}, Lemma 7] from which
\[
\sum_{i=1}^n I(\mathbf{Y}_3^{i-1};Y_{1,i}|W_0,W_1,\tilde{\mathbf{Y}}_1^{i+1})=\sum_{i=1}^n I(\tilde{\mathbf{Y}}_1^{i+1};Y_{3,i}|W_0,W_1,\mathbf{Y}_3^{i-1});
\]
(b) is by $I(X_i;Y_{1,i}|W_0,W_1,\tilde{\mathbf{Y}}_1^{i+1},\mathbf{Y}_3^{i-1}) - I(X_i;Y_{3,i}|W_0,W_1,\tilde{\mathbf{Y}}_1^{i+1},\mathbf{Y}_3^{i-1}) \geq 0$ as $Y_1$ is more capable than $Y_3$ which is a consequence of the assumption that $Y_1$ is less noisy than $Y_3$, with $(W_0,W_1,\tilde{\mathbf{Y}}_1^{i+1},\mathbf{Y}_3^{i-1}) \to X_i \to (Y_{1,i},Y_{3,i})$ fulfilling the more capable channel condition; (c) is because we have $I(W_1;Y_{1,i}|W_0,\tilde{\mathbf{Y}}_1^{i+1},\mathbf{Y}_3^{i-1},X_i)=0$ since $W_1$ is independent of $Y_{1,i}$ given $X_i$ and $I(W_1,\tilde{\mathbf{Y}}_1^{i+1};Y_{3,i}|W_0,\mathbf{Y}_3^{i-1},X_i)=0$ since $W_1$ and $\tilde{\mathbf{Y}}_1^{i+1}$ are both independent of $Y_{3,i}$ given $X_i$, both of which can be verified using a functional dependency graph; (d) has first term by [\ref{ck}, Lemma 7] from which
\[
\sum_{i=1}^n I(\mathbf{Y}_3^{i-1};Y_{1,i}|W_0,\tilde{\mathbf{Y}}_1^{i+1})=\sum_{i=1}^n I(\tilde{\mathbf{Y}}_1^{i+1};Y_{3,i}|W_0,\mathbf{Y}_3^{i-1});
\]
(e) is by $I(\tilde{\mathbf{Y}}_1^{i+1};Y_{1,i}|W_0,\mathbf{Y}_3^{i-1},X_i)=0$ since $\tilde{\mathbf{Y}}_1^{i+1}$ is independent of $Y_{1,i}$ given $X_i$ from a functional dependency graph; and (f) is due to $ I(\tilde{\mathbf{Y}}_1^{i+1};Y_{1,i}|W_0,\mathbf{Y}_3^{i-1}) - I(\tilde{\mathbf{Y}}_1^{i+1};Y_{3,i}|W_0,\mathbf{Y}_3^{i-1})  \geq 0$ from the fact that $Y_1$ is less noisy than $Y_3$.  Thus we have
\begin{align}
nR_{1e} \le \sum_{i=1}^n \left[ I(X_i;Y_{1,i}|U_{i}) - I(X_i;Y_{3,i}|U_{i}) \right] + n(\epsilon'_1 + \gamma_3 + \gamma_4). \label{r1e_final1_col3}
\end{align}

For rate $R_{1e}$ arising from $W_1$ sent to $Y_2$, we follow the same procedure as in \eqref{r1e_inter_col3} to \eqref{r1e_final1_col3}, except that all terms involving $Y_1$ are replaced with the corresponding terms involving $Y_2$, and carry out the expansion $I(W_1;\mathbf{Y}_2|W_0) = \sum_{i=1}^n I(W_1;Y_{2,i}|W_0,\tilde{\mathbf{Y}}_1^{i+1})$ instead, and the condition that $Y_2$ is less noisy than $Y_3$ is used.  Then we can get
\begin{align}
nR_{1e} \le \sum_{i=1}^n \left[ I(X_i;Y_{2,i}|U_{i}) - I(X_i;Y_{3,i}|U_{i}) \right] + n(\epsilon'_1 + \gamma_2 + \gamma_4). \label{r1e_final2_col3}
\end{align}

The rate $R_0$ may be easily found as
\begin{align}
\nonumber nR_0 & = H(W_0) \leq \sum_{i=1}^n I(W_0;Y_{3,i} |\mathbf{Y}_3^{i-1}) + n\gamma_3\\
& \leq \sum_{i=1}^n I(W_0,\mathbf{Y}_3^{i-1};Y_{3,i} ) + n\gamma_3 = \sum_{i=1}^n I(U_i;Y_{3,i} ) + n\gamma_3. \label{r0_final_col3}
\end{align}

For rates $R_0+R_1$, first consider $W_1$ sent to receiver $Y_1$.  We have
\begin{align}
	\nonumber & n(R_0+R_1) = H(W_0) + H(W_1|W_0) \leq H(W_0) + I(W_1;\mathbf{Y_1}|W_0) + n\gamma_4 \\
	\nonumber & =  H(W_0) + \sum_{i=1}^n I(W_1;Y_{1,i}|W_0,\tilde{\mathbf{Y}}_1^{i+1}) + n\gamma_4 \stackrel{(a)}{\leq} H(W_0) + \sum_{i=1}^n I(X_i;Y_{1,i}|W_0,\tilde{\mathbf{Y}}_1^{i+1}) + n\gamma_4 \\
	\nonumber & \leq  H(W_0) + \sum_{i=1}^n I(X_i,\mathbf{Y}_3^{i-1};Y_{1,i}|W_0,\tilde{\mathbf{Y}}_1^{i+1}) + n\gamma_4 \\
	\nonumber & = H(W_0) + \sum_{i=1}^n \left[ I(\mathbf{Y}_3^{i-1};Y_{1,i}|W_0,\tilde{\mathbf{Y}}_1^{i+1}) + I(X_i;Y_{1,i}|W_0,\tilde{\mathbf{Y}}_1^{i+1},\mathbf{Y}_3^{i-1})  \right] + n\gamma_4 \\
	\nonumber & \stackrel{(b)}{=} H(W_0) + \sum_{i=1}^n \left[ I(\tilde{\mathbf{Y}}_1^{i+1};Y_{3,i}|W_0,\mathbf{Y}_3^{i-1}) + I(X_i,\tilde{\mathbf{Y}}_1^{i+1};Y_{1,i}|W_0,\mathbf{Y}_3^{i-1}) - I(\tilde{\mathbf{Y}}_1^{i+1};Y_{1,i}|W_0,\mathbf{Y}_3^{i-1}) \right] + n\gamma_4 \\
	& \stackrel{(c)}{\leq} H(W_0) + \sum_{i=1}^n I(X_i;Y_{1,i}|W_0,\mathbf{Y}_3^{i-1}) + n\gamma_4 \label{r01_inter1_col3}
\end{align}
where (a) is by $W_1 \to X_i \to Y_{1,i}$; (b) is by [\ref{ck}, Lemma 7] from which
\[
\sum_{i=1}^n I(\mathbf{Y}_3^{i-1};Y_{1,i}|W_0,\tilde{\mathbf{Y}}_1^{i+1})=\sum_{i=1}^n I(\tilde{\mathbf{Y}}_1^{i+1};Y_{3,i}|W_0,\mathbf{Y}_3^{i-1});
\]
and (c) is because $I(\tilde{\mathbf{Y}}_1^{i+1};Y_{1,i}|W_0,\mathbf{Y}_3^{i-1})-I(\tilde{\mathbf{Y}}_1^{i+1};Y_{3,i}|W_0,\mathbf{Y}_3^{i-1}) \geq 0$ as $Y_1$ is less noisy than $Y_3$ and $I(\tilde{\mathbf{Y}}_1^{i+1};Y_{1,i}|W_0,\mathbf{Y}_3^{i-1},X_i)=0$ as $\tilde{\mathbf{Y}}_1^{i+1}$ is independent of $Y_{1,i}$ given $X_i$ from a functional dependency graph.

Next, for $W_1$ sent to $Y_2$, again follow the same procedure as to obtain \eqref{r01_inter1_col3}, except that all terms involving $Y_1$ are replaced with the corresponding terms involving $Y_2$, and carry out the expansion $I(W_1;\mathbf{Y}_2|W_0) = \sum_{i=1}^n I(W_1;Y_{2,i}|W_0,\tilde{\mathbf{Y}}_1^{i+1})$, and the condition that $Y_2$ is less noisy than $Y_3$ is used.  As such, we have
\begin{align}
	n(R_0 + R_1) \leq H(W_0) + \sum_{i=1}^n I(X_i;Y_{2,i}|W_0,\mathbf{Y}_3^{i-1}) + n\gamma_5. \label{r01_inter2_col3}
\end{align}

For rates $(R_0 + R_1)$, considering $(W_0,W_1)$ sent to receiver 1, we combine \eqref{r0_final_col3} with \eqref{r01_inter1_col3} to obtain
\begin{align}
\nonumber  n(R_0 + R_1) & \leq \sum_{i=1}^n \left[ I(W_0,\mathbf{Y}_3^{i-1};Y_{3,i} ) + I(X_i;Y_{1,i}|W_0,\mathbf{Y}_3^{i-1})\right] + n(\gamma_3+\gamma_4) \\
\nonumber  & = \sum_{i=1}^n \left[ I(W_0,\mathbf{Y}_3^{i-1};Y_{1,i} )-\left(I(W_0,\mathbf{Y}_3^{i-1};Y_{1,i} )-I(W_0,\mathbf{Y}_3^{i-1};Y_{3,i} )\right)+ I(X_i;Y_{1,i}|W_0,\mathbf{Y}_3^{i-1})\right] \\
\nonumber  & \;\;\;\; + n(\gamma_3 + \gamma_4 )\\
& \stackrel{(a)}{\leq} \sum_{i=1}^n I(X_i;Y_{1,i}) + n(\gamma_3 + \gamma_4 ) \label{r01_final1_col3}
\end{align}
where (a) is by the condition $I(W_0,\mathbf{Y}_3^{i-1};Y_{1,i} )-I(W_0,\mathbf{Y}_3^{i-1};Y_{3,i} ) \geq 0$ from $Y_1$ being less noisy than $Y_3$.  Now considering $(W_0,W_1)$ sent to receiver 2, combine \eqref{r0_final_col3} with \eqref{r01_inter2_col3} in the same way to obtain
\begin{align}
n(R_0 + R_1) & \leq \sum_{i=1}^n I(X_i;Y_{2,i}) + n(\gamma_3 + \gamma_5 ) \label{r01_final2_col3}
\end{align}
where now we have $I(W_0,\mathbf{Y}_3^{i-1};Y_{2,i} )-I(W_0,\mathbf{Y}_3^{i-1};Y_{3,i} ) \geq 0$ from $Y_2$ being less noisy than $Y_3$.

Finally, introduce the random variables $G$, $X$, $Y_k$, $k=1,2,3$, as at the end of Section \ref{outer_gen}, and the random variable $U \triangleq (G, U_G)$.  Using \eqref{r0_final_col3}, \eqref{r1e_final1_col3}, \eqref{r1e_final2_col3}, \eqref{r01_final1_col3} and \eqref{r01_final2_col3}, we obtain the rate region in Corollary \ref{col3}.  Thus we have shown that the outer bound for this 3-receiver, 2 degraded message set (Type 2) channel is a specialization of the more general 3-receiver, 3 degraded message set channel.  We note that the outer bound to the rate equivocation region in Corollary \ref{col3} also coincides with a special case of the achievable bound of Chia and El Gamal \cite{chia_09} stated in Theorem 1 of \cite{chia_09}, for the same message destinations and secrecy conditions.

\section{Conclusion}\label{concl}
Bounds to the rate-equivocation region for the general 3-receiver BC with degraded message sets, in which receiver 3 is a wiretapper receiving the common message, are presented.  This model is a more general model than the 2-receiver BCs with confidential messages with an external wiretapper, and 3-receiver degraded BCs with confidential messages.  We obtain, with secrecy, new inner and outer bounds to the rate-equivocation region for the 3-receiver BC with 3 degraded message sets.  We also obtain, without secrecy, new outer bounds to the rate region for the general 3-receiver BC with 3 degraded message sets.   Lastly, we obtain new inner and outer bounds for rate-equivocation region for the 3-receiver BC with 2 degraded message sets (Type 1).

In the proof of achievability for the inner bound, we used Wyner's code partitioning combined with double-binning for secrecy. We have shown that the proposed coding scheme can provide security for the 3-receiver BC with 3 degraded message sets or 2 degraded message sets (Type 1), although the 2 degraded message set case (Type 1) will suffer a loss in the secrecy rate.  The proof for the outer bound is shown for the 3-receiver BC with 3 degraded message sets and 2 degraded message sets (Type 1) under the condition that receiver 1 is more capable than receiver 3 the wiretapper; and for the 3-receiver BC with 2 degraded message sets (Type 2) for receivers 1 and 2 less noisy than the wiretapper.  The outer bound for the 3 degraded message set case is shown to specialize to the 2 degraded message set (Type 1). Under the condition that both receivers 1 and 2 are less noisy than the wiretapper, the inner and outer bounds for the 3 degraded message case coincide and specialize to the rate-equivocation region of the 3-receiver BC with 2 degraded message sets (Type 2), and to a special case of a 3-receiver BC with 2 degraded message sets (Type 2) which uses a different coding scheme.

\appendix

Here, we show that we can insert an auxiliary random variable $\tilde U_2$, representing information about $W_0$, between $U_1$ and $U_2$, for the 3-receiver BC with 3 degraded message sets. We show that the conditions for correct code generation and low probability of error for decoding are equivalent to those without insertion of $\tilde U_2$ by setting $\tilde U_2 = U_1$. Thus, by their equivalence, we shall subsequently use the code generation process with the insertion of $\tilde U_2$ to facilitate the derivation of the outer bound.

\begin{figure}[!t]
\centering
\includegraphics[scale=0.65]{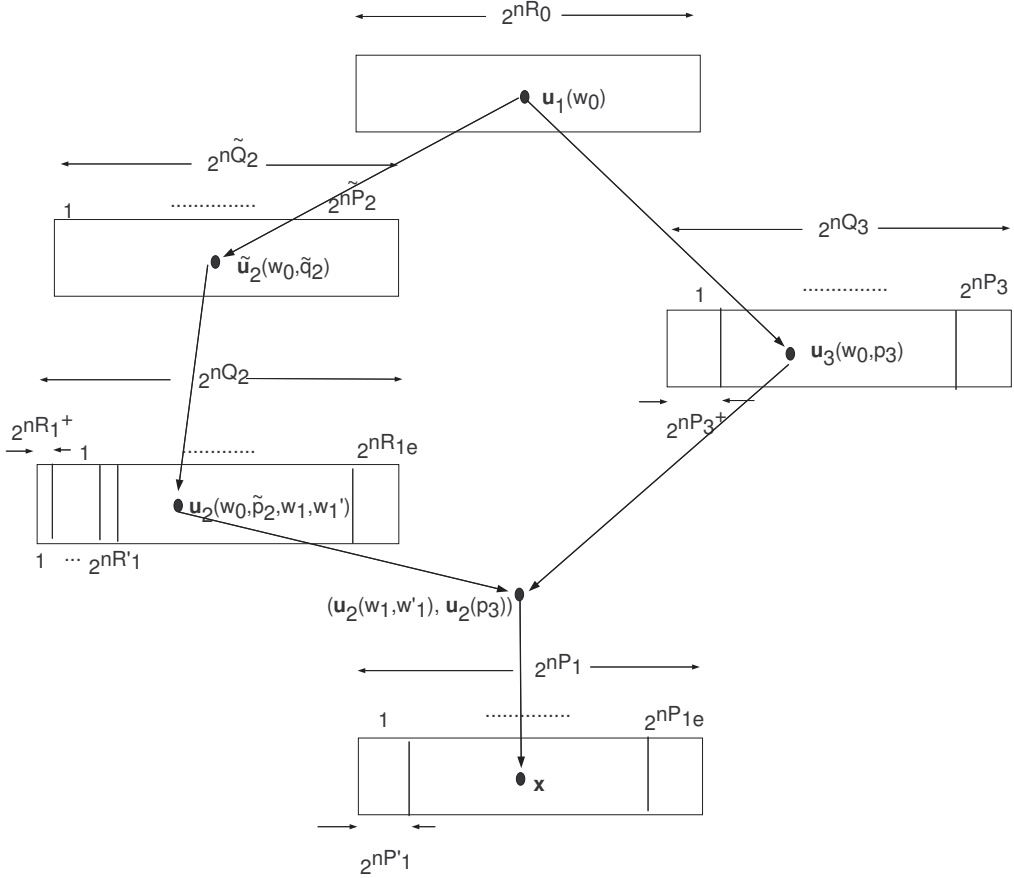}
\caption{Coding for 3-receiver BC with degraded message sets and confidential messages: insertion of auxiliary random variable $\tilde U_2$.}\label{F:Fig2}
\end{figure}

We first note that such an insertion of $\tilde U_2$ gives rise to the Markov chains
\begin{subequations}\label{mkv_appx}
\begin{align}
    & U_1\to \tilde U_2 \to U_2 \to (U_3,X) \to (Y_1,Y_2,Y_3), \\
    & U_1\to U_3\to (\tilde U_2,U_2,X) \to (Y_1,Y_2,Y_3),  \\
    & U_1 \to (\tilde U_2,U_2,U_3) \to X \to (Y_1,Y_2,Y_3).
\end{align}
\end{subequations}

Codebook generation is done as follows: first, generate $2^{nR_0}$ sequences $\mathbf{U}_1(w_0)$.  Then, for each $\mathbf{U}_1 (w_0)$, generate $2^{n \tilde Q_2}$ sequences $\tilde{\mathbf{U}}_2 (w_0, \tilde q_2)$ and partition them into $2^{n \tilde P_2}$ equal-sized bins, and also $2^{nQ_3}$ sequences $\mathbf{U}_3 (w_0, q_3)$. For each $\tilde{\mathbf{U}}_2 (w_0, \tilde p_2)$, generate $2^{nQ_2}$ sequences $\mathbf{U}_2 (w_0,\tilde p_2, q_2)$ and partition them into $2^{n \tilde R_1}$ bins. Also partition the $\mathbf{U}_3(w_0,q_3)$ into $2^{n \tilde P_3}$ equally-sized bins.

Each product bin $(w_1,w_1',p_3)$ contains the joint typical pair $(\mathbf{U}_2(w_0,\tilde p_2,w_1,w_1',w_1^\dag),\mathbf{U}_3(w_0,p_3, p_3^\dag))$ with high probability under the conditions \cite{el_gamal_81}
\begin{equation}\label{bin_rates_appx}
\left\{\begin{aligned}
\tilde P_2 &\le \tilde Q_2, \\
R_{1e} + R_1' &\leq Q_2,\\
P_3 &\leq Q_3,\\
\tilde P_2 + P_3 &\le \tilde Q_2 + Q_3 - I(\tilde U_2;U_3|U_1),\\
\tilde P_2 + R_{1e} + R_1'+P_3&\le \tilde Q_2 + Q_2+Q_3-I(U_2;U_3|U_1).
\end{aligned}\right.
\end{equation}
For the joint typical pair $(\mathbf{U}_2 (w_0,\tilde p_2,w_1 , w_1'), \mathbf{U}_3 (w_0,p_3))$ corresponding to the product bin $(w_1,w_1',p_3)$, generate $2^{n P_1}$ sequences of codewords $\mathbf{X}(w_0,\tilde p_2,w_1,w_1',p_3,p_1,p_1')$.  The decoding follows from what described in Section \ref{ach}. Assume that $(w_0,\tilde p_2,w_1,p_3,p_1) = (1,1,1,1,1)$ is sent. For receiver $Y_1$, joint typical decoding of $\{ \mathbf{u}_1, \tilde{\mathbf{u}}_2, \mathbf{u}_2, \mathbf{u}_3, \mathbf{y}_1\}$ is carried out. We list the error events and the conditions that ensure low error probability when decoding, while noting that the decoding of $\tilde p_2$ and $w_1$ is independent:
\begin{enumerate}
    \item $\Pr \{ \mathtt{E}_1 : (w_0 \neq 1) \} \leq \epsilon$ when
    \begin{align}\label{appx1}
        R_0 + \tilde P_2 + R_{1e} + R_1' + P_{1e} + P_1' + P_3 < I(X;Y_1).
    \end{align}
    \item $\Pr \{ \mathtt{E}_2 : (w_0 = 1, \tilde p_2 \neq 1) \} \leq \epsilon$ when
    \begin{align}
        \tilde P_2 + R_{1e} + R_1' + P_{1e} + P_1' + P_3 < I(X;Y_1|U_1).
    \end{align}
    \item $\Pr \{ \mathtt{E}_3 : (w_0 = 1, \tilde p_2 \neq 1, w_1 \neq 1) \} \leq \epsilon$ when
    \begin{align}
        \tilde P_2 + R_{1e} + R_1' + P_{1e} + P_1' + P_3 < I(X;Y_1|U_1).
    \end{align}
    \item $\Pr \{ \mathtt{E}_4 : (w_0 = 1, \tilde p_2 \neq 1, w_1 \neq 1,p_3 = 1) \} \leq \epsilon$ when
    \begin{align}
        \tilde P_2 + R_{1e} + R_1' + P_{1e} + P_1' < I(X;Y_1|U_3,U_1)=I(X;Y_1|U_3).
    \end{align}
    \item $\Pr \{ \mathtt{E}_5 : (w_0 = 1, \tilde p_2 \neq 1, w_1 = 1,p_3 \neq 1) \} \leq \epsilon$ when
    \begin{align}
        \tilde P_2 + P_{1e} + P_1' + P_3 < I(X;Y_1|U_2,U_1)=I(X;Y_1|U_2).
    \end{align}
    \item $\Pr \{ \mathtt{E}_6 : (w_0 = 1, \tilde p_2 \neq 1, w_1 = 1,p_3= 1,p_1 \neq 1) \} \leq \epsilon$ when
    \begin{align}
        \tilde P_2 + P_{1e} + P_1' < I(X;Y_1|U_2,U_3,U_1)=I(X;Y_1|U_2,U_3).
    \end{align}
    \item $\Pr \{ \mathtt{E}_7 : (w_0 = 1, \tilde p_2 = 1, w_1 \neq 1) \} \leq \epsilon$ when
    \begin{align}
        \tilde R_{1e} + R_1' + P_{1e} + P_1' + P_3 < I(X;Y_1|\tilde U_2,U_1)=I(X;Y_1|\tilde U_2).
    \end{align}
    \item $\Pr \{ \mathtt{E}_8 : (w_0 = 1, \tilde p_2 =1, w_1 \neq 1,p_3 = 1) \} \leq \epsilon$ when
    \begin{align}
        \tilde R_{1e} + R_1' + P_{1e} + P_1' < I(X;Y_1|U_3,\tilde U_2,U_1)=I(X;Y_1|U_3,\tilde U_2).
    \end{align}
    \item $\Pr \{ \mathtt{E}_9 : (w_0 = 1, \tilde p_2 = 1, w_1 = 1,p_3 \neq 1) \} \leq \epsilon$ when
    \begin{align}
        \tilde P_{1e} + P_1' + P_3 < I(X;Y_1|\tilde U_2,U_2)=I(X;Y_1|U_2).
    \end{align}
    \item $\Pr \{ \mathtt{E}_10 : (w_0 = 1, \tilde p_2 = 1, w_1 = 1,p_3= 1,p_1 \neq 1) \} \leq \epsilon$ when
    \begin{align}
        \tilde P_{1e} + P_1' < I(X;Y_1|\tilde U_2,U_2,U_3,U_1)=I(X;Y_1|U_2,U_3).
    \end{align}
\end{enumerate}
Receiver $Y_2$ finds $(w_0,\tilde q_2)$ by indirectly decoding $U_2$, and $w_1$ by decoding $U_2$ conditioned on $(\tilde U_2, U_1)$. As a result, we have the conditions
\begin{align}
R_0 + \tilde Q_2 + Q_2 &< I(U_2;Y_2), \\
Q_2& < I(U_2;Y_2|\tilde U_2,U_1) = I(U_2;Y_2|\tilde U_2).
\end{align}
Receiver $Y_3$ finds $w_0$ by indirectly decoding $U_3$, which has low probability of error under the condition
\begin{equation}\label{appx2}
R_0 + Q_3 < I(U_3;Y_3).
\end{equation}
Compare the above conditions with the conditions for the 3-receiver BC with 3 degraded message sets without insertion of $\tilde U_2$ found in \eqref{bin_rates}, \eqref{pe1_2}, \eqref{pe1_3}, \eqref{pe1_4}, \eqref{pe1_5}, \eqref{pe1_6}, \eqref{pe2_2}, \eqref{pe2_3} and \eqref{pe3}.  By setting $\tilde U_2 = U_1$, the conditions \eqref{bin_rates_appx}, \eqref{appx1}--\eqref{appx2} are maximized. Furthermore, by setting $\tilde P_2 = \tilde Q_2=0$, the conditions \eqref{bin_rates_appx}, \eqref{appx1}--\eqref{appx2} are equivalent to those in \eqref{bin_rates}--\eqref{pe3}. Thus, we may insert $\tilde U_2$ representing information about $W_0$ between $U_1$ and $U_2$ giving the Markov chain conditions \eqref{mkv_appx}, and the conditions on decoding and code generation thus obtained are equivalent to the original conditions with $\tilde U_2 = U_1$. As such, we can derive the outer bound in 2 steps. In the first step, we insert $\tilde U_2$ and use Markov chain conditions \eqref{mkv_appx} to obtain an outer bound $\mathcal{R}'_O$ which is equivalent to the one with original conditions \eqref{mkv} by setting $\tilde U_2 = U_1$. Then, set $\tilde U_2 = U_1$ in $\mathcal{R}'_O$ to obtain $\mathcal{R}_O$.

\end{document}